\documentclass{article}
\usepackage{template}
\definecolor{orange}{rgb}{.5941,.1255,.1137}
\everymath{\color{black}}
\usepackage{float}
\usetikzlibrary{shapes}
\usepackage{booktabs}
\usepackage{threeparttable}
\usepackage{blindtext}
\usepackage{caption}
\usepackage{ragged2e}
\usepackage[toc,page]{appendix}

\title{Risk in Network Economies\footnote{I am particularly thankful to Allan Timmermann for useful comments and discussions throughout the course of this work.  I am also thankful for helpful comments from Alexis Toda, Johannes Wieland, Valerie Ramey, Ross Valkanov, Joey Engelberg, Munseob Lee, Fabian Trottner, Tjeerd de Vries, Edoardo Briganti, and Carlos Goes.}}
\author{ \color{black} Victor Sellemi\footnote{PhD Student UC San Diego, email: \href{vsellemi@ucsd.edu}{vsellemi@ucsd.edu}.}}

\date{This Version: July 11, 2022}

\begin{document}

\maketitle
\begin{abstract}
   Economic models with input-output networks assume that firm or sector (unit) growth is driven by a weighted sum of trade partners' growth and an independently-drawn idiosyncratic shock. I show that the idiosyncratic risk assumption in a broad class of network models implicitly generates restrictions on the network weights which are unrealistic. When allowing for correlated shocks, units are exposed to an additional risk term which captures the ability to substitute away from supply and demand shocks propagating through the network. I provide empirical evidence that changes in substitutability between trade partners are inversely related to changes in the panel of realized industry variance. Moreover, I find that supply-side (demand-side) substitutability is closely related to technological (product) dispersion of a unit's suppliers (customers). To synthesize these results, I propose a production-based asset pricing model in which supply chain substitutability is a function of dispersion in product/technology space and correlation in supply and demand shocks is driven by shared customers and suppliers between firms. The model predicts that assets which are positively exposed to average propagation of upstream and downstream shocks are useful hedges and thus earn lower average risk premia. Consistently, I find that estimated upstream (downstream) propagation factors earn return spreads of -11.4\% (-4.2\%) and are negatively associated with aggregate consumption, output, and dividend growth.    \\
    
    \noindent \textbf{Keywords}: Production Networks, Volatility, Systematic Risk, Asset Pricing

\end{abstract}
\newpage
\onehalfspacing
\newpage

\section{Introduction}
\label{sec:introduction}

The final goods share of consumption in the United States stands at a little over one-third. Most of what remains is flows of intermediate inputs through production networks. Recent research finds that production networks play an important role in shock propagation, business cycles, and systematic risk in asset markets. However, the relationship between network linkages and comovement in economic risk is not yet entirely clear, especially at a granular level. Features of the input-output network are crucial to understanding the relationship between firm or industry-level risk and economy-wide aggregate risk. 

The benchmark network model assumes that idiosyncratic shocks are drawn independently across units (firms or industries) in the network, and then propagate to connected units as a function of network weights. Network weights capture the relative importance of each connection (edge) between units. As a result, shocks to any individual unit can have systematic effects. \citet{Acemoglu2012} argues that fat tails in the distribution of network weights can inhibit diversification and amplify the systematic effects of idiosyncratic shocks. Similarly, \citet{Gabaixorigins} argues that skewness in the firm size also inhibits diversification away from shocks.

In this work, I argue that the assumption that shocks are idiosyncratic is not consistent with realistic input-output network models of the economy. More specifically, I consider a general reduced-form equation for static propagation of shocks, which links each unit's growth to the growth of its network connections plus a unit-specific shock. This reduced-form is consistent with a broad class of structural economic models that assume Cobb-Douglas aggregation of intermediate inputs in a production function. In this equation, imposing a diagonal structure on the variance-covariance matrix of shocks generates implicit restrictions on the network weights permitted in the model.\footnote{I provide both mathematical and numerical evidence in support of this.} More specifically, the set of permissible input-output networks has weights which are overly sparse and/or economically uninteresting. For example, the idiosyncratic risk assumption might constrain researchers to studying economies in which pairs of sectors or firms that use each other's inputs can only use each other's inputs. 

As a result, I argue that researchers should account for correlation in shocks when making use of such a network model.  Practically speaking, there are several reasons why shocks to units in the input-output network might be correlated, especially as the unit definitions become more granular. For instance, two firms which produce the same goods should experience correlation in demand shocks at the product level. If the two firms also produce using the same inputs and technologies, then supply-side shocks associated with that technology are likely to be correlated as well. \citet{Hoberg2016textindustries} show from text data that firms that produce similar products often belong to different industries, which suggests that industries should also experience some degree of comovement in demand shocks. Along these lines, \citet{Hottman2016} show using scanner data that 69\% of firms, which account for 99\% of their industries' output, supply multiple and intersecting product varieties. 

Like product similarity, technological and geographic proximity might also generate comovement in supply and demand shocks. \citet{bloomtechspillovers2013ecma} shows that regional shocks to research and development (R\&D) incentives have correlated effects on the growth of firms who operate in closely related technology spaces. Similarly, firms operating in nearby locations are likely to be exposed to the same underlying geographic shocks. For instance, \citet{autordornhanson2013} and \citet{Mian2014} provide evidence that local employment shocks have correlated effects within a region, and \citet{tuzel2017local} studies correlated exposure to regional risk associated with changes in local prices for factors of production. Even local climate risk could expose multiple firms to the same regional risks (see e.g., \citet{BarrotSauvagnat2016} and \citet{Kruttli2019pricingposeidon}). 

When shocks are idiosyncratic, each unit's growth rate variance is the sum of unit-specific shock variance and a network-weighted sum of shock variances of its trade partners. Of course, the former term is unrelated to the presence of network connections. In the homoskedastic case, the second term simplifies to a constant times a concentration measure across each unit's trade partners. \citet{Acemoglu2012} show that aggregate volatility shocks to this component decay at a rate slower than $\sqrt{n}$ when weights follow a power law distribution.  \citet{herskovic_firm_2020} focus on customer concentration, and argue that increases in concentration are driven by increases in size dispersion. To my knowledge, this is the first work to investigate the relationship between exposure to correlated shocks in the production network and realized variance. 

In particular, when allowing for non-negligible correlation in shocks, the expression for growth rate variance gains an additional covariance component, denoted concentration ``between" trade partners. This new term captures the ability of each unit in the network to substitute away from correlated shocks to its trade partners. In particular, units are more substitutable (less concentrated) when they diversify trade between partners that are exposed to negatively correlated shocks. When units trade with partners that experience correlated shocks, they are more concentrated when the relative importance of those trade partners is similar. 

Building on this intuition, I estimate concentration between trade partners using panel data at the disaggregated industry level. Consistent with theory, I find that this new component explains a significant amount of variation in the panel of realized industry variance. More concentrated (less substitutable) industries are more volatile both in terms of market returns and output growth. This relationship is robust and holds even when controlling for relevant industry characteristics such as size, centrality, concentration across trade partners, vertical position in the supply chain, and durability of output. 

This finding alone does not provide any insight on the underlying source of correlated risks between trade partners. Diving deeper, I consider the results of \citet{AcemogluAkjitKerr2016}, who argue that total factor productivity (TFP) shocks primarily propagate downstream while government spending shocks primarily propagate upstream from customers to suppliers. Consistent with this finding, I show that the elasticity of realized variance to concentration between trade partners is more precisely estimated on the supply-side when constructed using pairwise industry correlations in TFP growth. On the other hand, the elasticity of realized variance to  concentration between customers is more precisely estimated when using correlations in federal procurement demand shocks. 

Additionally, I suppose that correlation between upstream and downstream propagating shocks is driven by proximity of industries on a latent surface. For tractability, I assume that correlation between demand shocks is a function of product similarity, while correlation between supply shocks is a function of technological similarity. I proxy product similarity using the text-based scores from \citet{Hoberg2016textindustries} and technological proximity following \citet{bloomtechspillovers2013ecma}. Similarly, I find that the elasticity of realized variance to between concentration is more precisely estimated on the demand-side using product similarity and on the supply-side using technological proximity. 

These findings suggest a structural foundation for incorporating correlation in supply and demand shocks in network models of the economy. To fully investigate the risk implications of this correlation, I propose a production-based asset pricing model with input-output networks in which firm-level technology shocks propagate downstream from suppliers to customers and demand shocks propagate upstream from customers to suppliers. Firms are both customers and suppliers. Unlike most existing models, I account for both directions of propagation.\footnote{For example, \citet{johnshea2002} and \citet{kramarz_volatility_2020}, and\citet{herskovic_firm_2020} focus on upstream propagation of demand shocks, while \citet{Acemoglu2012} focus on downstream propagation.} Additionally, I introduce a novel mechanism for correlation in shocks in which the propensity of transmission is a function of customer and supplier substitutability at the firm level. 

In particular, I define substitutability as network weighted sum of latent distances between a firm's trade partners. Product distance characterizes customer substitutability, while technological distance characterizes supplier substitutability. Shared customers and suppliers between firms induces comovement in substitutability and thus correlation in propagated shocks. The proportion of firms that are affected by network propagated shocks is related to the changes in average propensity and average supply chain substitutability. Importantly, the model predicts that average propagation in the upstream and downstream directions represent distinct and negatively priced sources of systematic risk in the economy.

I test this prediction by calibrating the model and constructing empirical analogues of upstream and downstream network propagation risk factors. Consistent with theory, I confirm empirically that upstream and downstream propagation risk factors are negatively related to aggregate consumption growth, output growth, and dividend growth. Moreover, a trading strategy which buys the highest and sells the lowest quintile upstream (downstream) propagation beta-sorted portfolio generates excess returns of -11.42\% (-4.18\%). These factors survive the standard set of robustness checks.










\section{Idiosyncratic Risk in Input-Output Networks}
\label{sec:motivation}
In an economy where sectors or firms are connected through a network of input-output linkages, shocks to any individual unit might generate larger systematic effects. Intuitively, firms or industries with close trade relationships should also experience some degree of comovement in risk. Recent research proposes several approaches for modeling the spread of small shocks from firms or disaggregated sectors.\footnote{See e.g., \citet{Gabaixorigins}, \citet{Acemoglu2012}, \citet{taschereau-dumouchel_cascades_nodate}, and \citet{baqaeeFarhiMacroimpact} for discussion on how microeconomic shocks can generate macroeconomic effects.} Since input-output networks are observed in the data, these approaches lend themselves to empirical studies on the importance of various channels of shock propagation.\footnote{Generally these studies focus on propagation at business-cycle frequencies. }

The benchmark model studied in much of the literature involves static propagation of shocks through a deterministic network.\footnote{Some examples include \citet{Acemoglu2012}, \citet{AcemogluAkjitKerr2016}, \citet{OzdagliWeber2017}, \citet{herskovic_networks_2018}, \citet{herskovic_firm_2020}.} The main idea is that a sector or firm's growth rates depend on a network-weighted sum of growth rates of trade partners and an idiosyncratic shock which is drawn independently from the other units. Network weights capture the importance of direct trade relationships between sectors and are generally non-negative. Moreover, if the entries of the matrix are sales or purchase shares of inputs, these weights are also bounded above by 1 and in most cases assumed to sum to 1 or less than 1 for every unit. Typically no additional restrictions are imposed on the input-output network structure (e.g., symmetry or sparsity). 

However, in this section I show that in this benchmark model, the assumption of idiosyncratic shocks across units is not consistent with such a general class of networks. In particular, stronger restrictions on the input-output network weights are required when the variance-covariance matrix of idiosyncratic shocks is diagonal. These additional restrictions are inconsistent with almost all empirically observed input-output networks, and cannot be relaxed by adding omitted macroeconomic factors or by accounting for multiple networks. Additionally, even if these restrictions are satisfied, there is no definitive empirical evidence that supply and demand shocks have zero pairwise correlation across all pairs of units.

After formally establishing this result, I explore the implications of accounting for correlated shocks in this static framework. In this modified setting, sectors and firms are still exposed to risk from direct and indirect trade partners, but now can also substitute away from risk by having trade partners that are differentially exposed to supply and demand shocks. More specifically, the variance of a sector's growth rate inherits the standard network component which is related to the concentration of risk across trade partners, but also two additional components which capture a trade-off of concentration and substitutability \textit{between} trade partners. Intuitively, higher concentration implies less diversification in supply-chains and should imply more volatility. On the other hand, high substitutability implies that units can better average away the effects of shocks across customers or suppliers. In the following sections, I provide both theoretical and empirical motivation the researchers should account for correlated shocks when studying risk in network economies.

\subsection{Networks and Risk Comovement}
In this section, I argue that realistic input-output models of the economy should account for correlation in supply and demand shocks across units. In the benchmark static model of sectoral shock propagation, I find that the set of stable input-output networks that are consistent with the idiosyncratic risk assumption is unrealistic. Mathematically, in this broad class of reduced-form linear models, additional restrictions are required on the input-output network weights to be consistent with  an arbitrary covariance matrix of sector or firm-level growth rates and an arbitrary diagonal covariance matrix of shocks. Although this result is not immediately intuitive, the assumption of idiosyncratic shocks implicitly generates a strict relationship on the interaction between network weights and elements of the variance-covariance matrix of growth rates.

To illustrate this point, I start from the general reduced-form model of shock propagation in which a firm's output growth is driven by a network component and firm-specific shocks.\footnote{Similar models are used in \citet{Acemoglu2012}, \citet{AcemogluAkjitKerr2016}, \citet{Herskovic2018JoF} and \citet{herskovic_firm_2020}.} In Appendix \ref{asec:General Equilibrium Model of Input-Output Linkages}, I show that this model is consistent with the equilibrium outcome of a constant returns to scale economy in which Cobb-Douglas producers experience productivity shocks that propagate downstream from suppliers to customers and demand shocks that propagate upstream from customers to suppliers.  In particular, for an $n$-firm economy, consider the static relationship:
\begin{align}
\label{eq:reduced form static propagation}
    \mathbf{y} = \mathbf{W} \mathbf{y} + \mathbf{u},
\end{align}
where $\mathbf{y}$ is the $n \times 1$ vector of firm-level output growth, $\mathbf{W}$ is the $n\times n$ network matrix capturing interactions between industries, and $\mathbf{u}$ is the $n\times 1$ vector of firm-specific supply or demand shocks. This framework is compatible with either direction of propagation, upstream from customers to suppliers or downstream from suppliers to customers.  The following two assumptions require that the propagation matrix $\mathbf{W}$ implies is stable, and that firm-specific shocks are idiosyncratic, respectively.
\begin{assumption}[Stable Weighting Matrix] \label{ass:stable weighting matrix} The weighting matrix $\mathbf{W} \in M_{n}$ is non-negative, and has bounded spectral radius $\rho(\mathbf{W}) \leq 1$. 
\end{assumption}
\begin{assumption}[Idiosyncratic Shocks]
\label{ass:idiosyncratic static shocks}
Firm-specific shocks $u_i \sim \mathcal{P}_i(0,\sigma_{i}^2)$ are drawn independently across firms where $\mathcal{P}_i \in L^2$ has finite second moments. In other words, there exists a positive diagonal matrix $\mathbf{D}\in M_n$ such that $\mathbb{E}[\mathbf{uu}^\top] = \mathbf{D}$. 

\end{assumption}
\noindent In the following proposition, I characterize a set of additional necessary restrictions on the matrix $\mathbf{W} \in M_n$ to satisfy Assumption \ref{ass:idiosyncratic static shocks} and \eqref{eq:reduced form static propagation}. 
\begin{proposition}[Necessary Restrictions on $\mathbf{W}$] \label{prop:necessary restrictions on W} Any weighting matrix $\mathbf{W}$ which satisfies \ref{ass:stable weighting matrix} and \ref{ass:idiosyncratic static shocks} must have that for any pair of off-diagonal nodes $w_{ij}$ with $i\neq j$, either $w_{ij}w_{ji} \geq 1$ or $w_{ij} > 0$ and $w_{ji} = 0$. 
\end{proposition}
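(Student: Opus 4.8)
The plan is to convert the distributional hypothesis of Assumption~\ref{ass:idiosyncratic static shocks} into a single algebraic constraint on $\mathbf{W}$ and then read the pairwise dichotomy off that constraint. First I would invoke Assumption~\ref{ass:stable weighting matrix} to guarantee that $\mathbf{I}-\mathbf{W}$ is invertible (away from the boundary $\rho(\mathbf{W})=1$), so that \eqref{eq:reduced form static propagation} can be inverted and rewritten in residual form $\mathbf{u}=(\mathbf{I}-\mathbf{W})\mathbf{y}$. Writing $\mathbf{\Sigma}=\mathbb{E}[\mathbf{y}\mathbf{y}^\top]$ for the (positive semidefinite) covariance of growth rates, Assumption~\ref{ass:idiosyncratic static shocks} becomes the matrix identity
\begin{equation*}
\mathbf{D}=(\mathbf{I}-\mathbf{W})\,\mathbf{\Sigma}\,(\mathbf{I}-\mathbf{W})^\top,
\end{equation*}
where $\mathbf{D}$ is diagonal with strictly positive diagonal. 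The entire content of the proposition lives in the vanishing of the off-diagonal entries of the right-hand side.

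Next I would expand the $(i,j)$ entry for a fixed pair $i\neq j$. The requirement $D_{ij}=0$ reads $\Sigma_{ij}-\sum_{k}w_{ik}\Sigma_{kj}-\sum_{l}w_{jl}\Sigma_{il}+\sum_{k,l}w_{ik}w_{jl}\Sigma_{kl}=0$, and the aim is to isolate the direct two-cycle interaction between $i$ and $j$ (the terms in $w_{ij}$ and $w_{ji}$) from the indirect contributions routed through third nodes. I would organize the computation around the $2\times 2$ principal block of $\mathbf{I}-\mathbf{W}$ indexed by $\{i,j\}$, namely $\left(\begin{smallmatrix}1-w_{ii}&-w_{ij}\\-w_{ji}&1-w_{jj}\end{smallmatrix}\right)$, whose determinant changes sign precisely across $w_{ij}w_{ji}=1$ once the diagonal is normalized. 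This is the structural reason the threshold $1$ appears, and it dovetails with Assumption~\ref{ass:stable weighting matrix}, since the same product $w_{ij}w_{ji}$ controls the spectral radius of the induced two-node subsystem.

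I would then bring in the two admissibility facts not yet used: non-negativity of the weights (Assumption~\ref{ass:stable weighting matrix}) and the requirement that $\mathbf{\Sigma}$ be a genuine covariance matrix, so each pairwise correlation satisfies $|\Sigma_{ij}|\le\sqrt{\Sigma_{ii}\Sigma_{jj}}$ while the implied variances satisfy $D_{ii},D_{jj}>0$. In the genuinely bidirectional case $w_{ij}>0$ and $w_{ji}>0$, the plan is to solve the off-diagonal equation for the induced pairwise correlation and show that enforcing $|\mathrm{corr}|\le 1$ jointly with $D_{ii},D_{jj}>0$ is feasible only when the $\{i,j\}$ block fails to be positive definite, i.e.\ only when $w_{ij}w_{ji}\ge 1$; otherwise the admissible correlation region is empty. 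The sole remaining alternative is that one weight vanishes, which is exactly the one-directional edge $w_{ij}>0,\;w_{ji}=0$ and supplies the second branch of the dichotomy.

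The step I expect to be the main obstacle is this decoupling: the off-diagonal equation is not purely pairwise, because the cross terms $\sum_{k,l}w_{ik}w_{jl}\Sigma_{kl}$ couple $(i,j)$ to the rest of the network through shared neighbors. Making the argument rigorous requires showing that these cross terms cannot rescue an inadmissible bidirectional pair — either by projecting onto the $\{i,j\}$ coordinates to obtain an effective two-firm economy, or by arguing that positive semidefiniteness of $\mathbf{\Sigma}$ forces the induced $2\times 2$ marginal to carry the binding constraint. Once the pairwise relation is cleanly isolated, pinning down the exact cutoff $w_{ij}w_{ji}=1$ is a short determinant and positive-definiteness computation on the $\{i,j\}$ block.
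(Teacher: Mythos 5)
Your plan follows essentially the same route as the paper's proof: both rewrite Assumption \ref{ass:idiosyncratic static shocks} as $\mathbf{D}=(\mathbf{I}-\mathbf{W})\mathbf{\Sigma}_y(\mathbf{I}-\mathbf{W}^\top)$, reduce the vanishing of the $(i,j)$ off-diagonal entry to a condition on the $\{i,j\}$ principal block, and use the Cauchy--Schwarz bound on the implied correlation to force $w_{ij}w_{ji}\ge 1$ whenever both weights are positive, leaving the one-directional edge $w_{ij}>0,\ w_{ji}=0$ as the only alternative. The obstacle you flag --- the cross terms $\sum_{k,l}w_{ik}w_{jl}\Sigma_{kl}$ routed through shared third nodes --- is genuine, and the paper's own proof passes over it by substituting only the two-node terms into the condition $q_{ij}=0$ after its Laplace-expansion step, so making your projection onto the $\{i,j\}$ coordinates rigorous would close a gap that the published argument shares rather than departing from it.
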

\begin{proof}
See Appendix \ref{asec:necessary restrictions on W}.
\end{proof}
This proposition highlights a key limitation of equation \eqref{eq:reduced form static propagation}. To apply network models in a way that is consistent with reality, researchers must either restrict their focus to a very particular set of networks or allow for correlation in sectoral or firm-level shocks. Under Assumptions \ref{ass:stable weighting matrix} and \ref{ass:idiosyncratic static shocks}, consistent networks have only one-way connections or entries which are overly sparse and/or economically uninteresting. In the structural model developed in Appendix \ref{asec:General Equilibrium Model of Input-Output Linkages}, the entries of $\mathbf{W}$ are primitives of the production function and depend on each unit's sales and cost shares. For example, To capture the effect of demand shocks propagating from $j$ to $i$, the implied weight is $w_{ij} = \frac{sales_{i\to j}}{sales_i}$. Proposition \ref{prop:necessary restrictions on W} requires that $w_{ij}w_{ji}  = \frac{sales_{i\to j}}{sales_i} \cdot \frac{sales_{j\to i}}{sales_j} \geq 1$ for all $i\neq j$, which implies that sectors which use each other's inputs can only use each other's inputs. 

Intuitively, one might argue that the static network model in \eqref{eq:reduced form static propagation} is too parsimonious to capture all the sources of risk comovement in the economy. Although this is likely true, the restrictions on $\mathbf{W}$ cannot be relaxed by adding omitted macroeconomic factors driving common variation in risk nor by adding an omitted network component. Moreover, Proposition \ref{prop:necessary restrictions on W} implies that there is no sufficient statistic that can be obtained from $\mathbf{W}$ which fully characterizes cross-sectional variation in granular risk, even in a world where sectoral shocks are identically distributed. See Appendix \ref{asec:simulation evidence} for supporting numerical evidence. In the remainder of this section, I explore the implications of allowing for correlation in demand and supply shocks across units.

\subsection{Granular Volatility with Correlated Shocks}
I investigate the volatility predictions of \eqref{eq:reduced form static propagation} when shocks $u_i$ are allowed to be correlated across units $i$ (i.e., $\var({\mathbf{u}})$ is not diagonal). Practically speaking, there are several reasons why supply and demand shocks to units might be correlated, especially at the granular level. For example, two sectors or firms that produce related goods or services are likely to experience correlated demand shocks. If the two sectors produce using the same inputs, then supply-side shocks might be correlated as well. \citet{Hoberg2016textindustries} show that firms with similar products might belong to different industries (according to SIC or NAICS classifications).\footnote{More specifically, \citet{Hoberg2016textindustries} find that firms in the newspaper, printing, and publishing industry (SIC3 271) are similar to firms in the radio broadcasting industry (SIC3 483) and argue that this is driven by common customers who demand advertising services. They also find that Disney and Pixar have similar products (movies) although they are in different industries (business services (SIC3 737) and motion pictures (SIC3 781) industries, respectively). In this case, the differences in industry stem from the production method and not the product offering.} In the even more simple setting where multiple firms produce the exact same goods and services, supply and demand shocks at the product level mechanically generate correlation in supply and demand shocks at the firm level. \citet{Hottman2016} provide empirical evidence that this is generally the case, with 69\% of firms, which account for 99\% of industrial output, supplying multiple (and intersecting) products.

Like product proximity, both technological and geographic proximity might also generate correlation in firm and sectoral shocks. For instance, \citet{bloomtechspillovers2013ecma} show that shocks to research and development (R\&D) have correlated effects on the productivity and growth of firms with similar technologies. Similarly, industries or firms operating in nearby geographies are exposed to the same underlying shocks associated with local labor markets (see e.g., \citet{autordornhanson2013}, \citet{Mian2014}), local factor prices (\citet{tuzel2017local}, Grigoris (2019)), local technological progress (\citet{oberfield_theory_2018}), or local  weather events (\citet{BarrotSauvagnat2016}, \citet{Kruttli2019pricingposeidon}).

In the benchmark network model with uncorrelated shocks, the variance of growth rates depends solely on the concentration of risk across independent suppliers and/or customers. \citet{herskovic_firm_2020} provide theoretical and empirical evidence linking firm volatility and customer concentration in terms of size dispersion in this setting. However, allowing for correlated shocks implies two additional variance components. These components capture the concentration and substitutability of risk \textit{between} trade partners, respectively. The distinction between concentration ``across" and ``between" trade partners is important. Concentration across refers to the composition of a unit's reliance on any particular customer or supplier, while concentration between refers to how a unit's the distribution of reliance on a set of customers or suppliers that are exposed to the same shocks. On the other hand, substitutability between customers and suppliers captures the distribution of reliance on a diversified set of customers or suppliers that are exposed to shocks of the opposite sign.

In other words, concentration between customers and suppliers captures compounding effects of positively related shocks to similar trade partners, while substitutability captures mitigating effects of spreading reliance on trade partners that are exposed to negatively related shocks. Intuitively, a supplier with major customers that tend to reduce demand at the same time is more risky than a supplier with some customers that increase demand when the others reduce it. To see this mathematically, define $[h_{ij}]_{ij}$ to be the set of entries in the Leontief inverse matrix $\mathbf{H}:= (\mathbf{I} - \mathbf{W})^{-1}$ and recall that equation \eqref{eq:reduced form static propagation} can equivelently by written as $\mathbf{y} = \mathbf{H} \mathbf{u}$. Note that in this setup the element $h_{ij}$ captures the percent change in unit $i$'s growth after a 1\% shock to unit $j$. Then the variance of unit $i$'s growth rates can be written:
\begin{align*}
    \var(y_i)& = \var\bigg(\sum_{j=1}^n h_{ij} u_j\bigg) = \sum_{j=1}^n h_{ij}^2 \cdot \var(u_j) + \sum_{j\neq k} h_{ij} h_{ik} \cdot \text{cov}(u_{j}, u_k).
\end{align*}
The first term is the standard expression for variance in this network model (see e.g., \citet{Acemoglu2012}), while the second term is only non-zero when inter-industry shocks are correlated. Next, I define the scalar $s_{jk}$ to be the sign of the pairwise correlation between shocks to $j$ and $k$ (i.e., $s_{jk} := \sgn(\sigma_{jk}) \equiv \sgn(\cov(u_j,u_k))$ where  $\sgn(.)$ is the sign function). To build some more intuition on the additional terms, I can further decompose the covariance term as follows:

\begin{align}
    \var(y_i) = \underbrace{h_{ii}^2 \cdot \sigma_i^2}_{\text{self}} + \underbrace{\sum_{j\neq i} h_{ij}^2 \cdot \sigma_{j}^2}_{\text{concentration ``across"}}  + \underbrace{\sum_{j\neq k, s_{jk}=1} h_{ij} h_{ik}\cdot \sigma_{jk}}_{\text{concentration ``between"}}+  \underbrace{\sum_{j\neq k, s_{jk} = -1} h_{ij} h_{ik} \cdot \sigma_{jk}}_{\text{substitutability}}
    \label{eq:volatility decomposition static network model}
\end{align}

Consider a first order approximation of the Leontief inverse matrix such that $\mathbf{H} \approx \mathbf{I} + \mathbf{W}$, where the weights in the propagation matrix $\mathbf{W}$ are related to sales shares when modeling downstream propagation supply-side shocks, and purchase shares when modeling upstream propagation of demand-side shocks.\footnote{More specifically, the weight of downstream propagation of supply-side shocks from supplier $j$ to customer $i$ is captured by $w_{ij}^d = sales_{j\to i} / purchases_i$ and the weight of upstream propagation of demand shocks from customer $j$ to supplier $i$ is captured by $w_{ij}^u = sales_{i\to j} / sales_i$. In general, these weights are both asymmetric (i.e., $w_{ij} \neq w_{ji})$ and different depending on the direction of propagation (i.e., $w_{ij}^u \neq w_{ij}^d$).} Suppose also that units are homoskedastic such that $\var(u_j) = \sigma^2$ for all $j$ and $\text{cov}(u_j,u_k) = \nu \cdot s_{jk} $  for all $j\neq k$, where $\sigma$ and $\nu$ are positive scalars. In this case, the first term ($h_{ii}\sigma_i^2 = \sigma_{i}^2$) is unrelated to the network and captures the variance of supply or demand shocks to sector $i$. On the other hand, the second component (concentration across network linkages) is non-negative and large when reliance is highly concentrated across trade partners. Similarly, the third term (concentration between network linkages) is non-negative and large when reliance is concentrated between trade partners who experience positively correlated shocks. 

Finally, the last term (substitutability of network linkages) is always non-positive and is large in magnitude when reliance is spread equally between trade partners who are likely to experience shocks of opposite sign. Additionally, the sum of the final two terms captures explicitly the trade-off between concentration and substitutability of correlated supply or demand shocks.  Although this simplification is useful for building intuition, the more realistic version of the variance decomposition should also take into account unit heteroskedasticity. That is, two sectors with an equal set of input-output weights have different network-implied variance only if their trade partners are exposed to differential volatility in supply or demand shocks. 

Consider for example the Printed Circuit Boards industry (SIC 3672), whose top 3 major manufacturing industry customers include Electronic Components (SIC 3679) and Electronic Computers (SIC 3571), and Communications Equipment (SIC 3669). At first glance, these customers appear very similar, and one might suspect that a negative demand shock to one customer is likely to be correlated with a negative shock to the other, which amplifies upstream propagation to their shared supplier. In other words, the Printed Circuit Boards industry has high concentration between customers and a harder time substituting away from upstream effects demand shocks to its major customers.\footnote{I find that the average product similarity score between these customer industries is in the top 10\% (based on the similarity score developed in \citet{Hoberg2016textindustries}). Additionally, I find significant positive correlation in demand shocks to these industries such as changes of newly awarded federal defense procurement contracts.} On the other hand, the three most important customers of the Jewelry and Precious Metal industry (SIC 3911) include Watches, Clocks, and Clockwork Operated Devices (SIC 3873), Perfumes and Cosmetics (SIC 5048), and Drawing and Insulating of Nonferrous Wire (SIC 3357). In this case, customers produce seemingly unrelated goods (both durable and non-durable) and there is evidence the demand shocks have zero or negative pairwise correlation.\footnote{The average pairwise correlation in federal procurement shocks and Chinese import penetration shocks is -37\% and -22\%, respectively for the full set of Jewlery and Precious Metal customers.} In other words, the Jewelry and Precious Metal industry has is able able to substitute away from demand shocks propagating upstream from any individual customer.

There are similar examples of high concentration and substitutability on the supply-side. For instance, the Computer Storage Devices industry (SIC 3572) has a highly concentrated customer base composed of Electronic Components (SIC 3679), Electronic Coils, Transformers, and other Inductors (SIC 3677), Semiconductors and Related Devices (SIC 3674), and Electronic Connectors (SIC 3678). This industry is thus more exposed to correlated supply-side risk. On the contrary, the Meat Packing Plants industry (SIC 2011) can more easily substitute away from supply-side risk, with a more diversified set of major suppliers like Poultry Slaughtering and Processing (SIC 2015), Plastics Film and Sheet (3081), and Paper Mills (2621). 

Although these network variance components are intuitive and theoretically justified if supply and demand shocks are correlated, an important practical concern is that granular supply and demand shocks are not easily identified from available data, especially at a high frequency. In the following section, I address this challenge and propose an empirical methodology for estimating customer and supplier concentration and substitutability at the industry and firm levels. I show that both supply and demand channels explain cross-sectional heterogeneity in risk exposure, beyond what can be explained by other determinants of variance identified by the literature.


\section{Empirical Evidence}
\label{sec:empirical evidence}

In this section, I provide empirical estimates of the network-implied variance components motivated in equation \eqref{eq:volatility decomposition static network model}. This requires granular data on input-output relationships and estimates of the variance-covariance matrix of supply and demand shocks. Consistent with theory, I find that these additional components explain important variation in realized volatility, controlling for characteristics such as size, centrality, concentration across trade partners, vertical position in the supply chain, and durability of output. These results hold at both industry and firm levels. The main takeaway here is simple. When accounting for input-output linkages and non-negligible correlation in supply and demand shocks, heterogeneity in risk exposure is at least in part driven by differences in the ability of network units to substitute away from correlated supply and demand shocks.

\subsection{Setup}
Consider the $n$-sector network model from equation \eqref{eq:reduced form static propagation} and add a time subscript $t$. Suppose that in each period I obtain estimates for the $n\times n$ Leontief Inverse matrices $\hat{\mathbf{H}}_{q,t}$ and the variance-covariance matrices of supply and demand shocks $\hat{\mathbf{\Sigma}}_{q,t}$ where $q=u$ for upstream propagation and $q=d$ for downstream propagation. Then for both supply and demand-side shocks, I can compute three empirical network-implied variance components, denoted by ``self-originating", ``across", and ``between" risk. The final component sums the final two covariance terms from \eqref{eq:volatility decomposition static network model} and captures the concentration/substitutability trade-off  between trade partners. Low values of concentration between customers and suppliers implies high substitutability. Then for each industry, direction, and time triple $(i,q,t)$, I compute self-originating risk as:
\begin{align}
    \hat{\sigma}_{iqt,self}^2 = [\hat{h}_{qt}]_{ii}^2 \cdot \hat{\sigma}_i^2, 
    \label{eq: self empirical}
\end{align}
and across and between risk as:
\begin{align}
\hat{\sigma}_{iqt,acr}^2 & = \sum_{j\neq i}[\hat{h}_{qt}]_{ij}^2  \cdot \hat{\sigma}_j^2,
\label{eq:across empirical}\\\hat{\sigma}_{iqt,bet}^2 & = \sum_{j\neq k}[\hat{h}_{qt}]_{ij} \cdot  [\hat{h}_{qt}]_{ik}  \cdot \hat{\sigma}_{jk}.
\label{eq:between empirical}
\end{align}
In the next section, I provide details on the data sources, assumptions, and methodologies used to estimate $\hat{\mathbf{H}}_{qt}$ and $\hat{\mathbf{\Sigma}}_{qt}$. While the former can be observed directly, I need to make some assumptions to identify the latter from available data sources. Then I compute all three components at the industry-level and study their empirical relationship with realized industry variance. I find that the elasticity of realized variance to all three components is significant and positive for both directions of propagation, controlling for a variety of industry characteristics.

\subsection{Upstream and Downstream Propagation Networks}
I begin by constructing the network of input-output linkages at the disaggregated industry level from the Make and Use tables published by the Bureau of Economic Analysis (BEA). The goal is to build a directed weighted network which captures the importance of trade relationships over time and for the population of industries.\footnote{As far as I know, this is the most disaggregated database on the entire population of input-output relationships.} Network weights represent the strength of each unit's reliance on customers and suppliers, and the network is directed to capture differences in shock propagation in the upstream (customer to supplier) and downstream (supplier to customer) directions.  In particular, the BEA publishes these tables annually between 1997-2020 for 66 industry groups.

More specifically, I construct downstream and upstream propagation matrices $\mathbf{W}_d = [w_d]_{ij}$ and $\mathbf{W}_u = [w_u]_{ij}$ with entries:
\begin{align}
    \label{eq:industry upstream and downstream propagation defintions}
    [w_d]_{ij} = \frac{sales_{j\to i}}{costs_i}, \qquad [w_u]_{ij} = \frac{sales_{i \to j}}{sales_i},
\end{align}
where $sales_{i\to j}$ represents gross trade flows from $i$ to $j$, and $sales_{i}$ and $costs_i$ represent the total sales and costs of industry $i$, respectively. The downstream (upstream) weights are non-negative and capture the direct reliance of industry $i$ on supplier (customer) $j$. When weights are large, direct effects of propagated shocks should also be large. To account for higher order (indirect) network effects as well, I calculate the strength of network propagation based on the Leontief inverse $\mathbf{H}_q := (\mathbf{I} - \mathbf{W}_q)^{-1}$ of the propagation matrices $\{\mathbf{W}_q: q\in u,d\}$.\footnote{See e.g., \citet{baqaeeFarhiMacroimpact} and \citet{Herskovic2020JPE} for discussion on the importance of higher order network effects.} The entries of $\mathbf{H}_q = [h_q]_{ij}$ capture the total percent effect on $i$ of a 1\% shock to $j$ traveling in the $q$-stream direction when accounting for all weighted direct and indirect connections. 

Appendix  \ref{asec:summary input output tables} reports summary statistics for observed input-output connections. At the 66-industry granularity, I find that both propagation and Leontief inverse weights are highly persistent with an average autocorrelation of more than 95\% for each entry. The cross-sectional correlation is about 8.55\% between upstream and downstream weights and about 11.08\% between upstream and downstream Leontief matrix entries, suggesting that propagation occurs differently in either direction.

\subsection{Variance-Covariance Matrix of Supply and Demand Shocks}

Unlike the sectoral input-output network, there is no definitive data source on supply and demand shocks and their variance-covariance matrix. As a baseline, I implement an empirical analog of the reduced-form equation in \eqref{eq:reduced form static propagation}. In particular, consider the spatial panel regression:
\begin{align}
 \tilde{y}_{it} = \delta_{t} + \phi \cdot \tilde{y}_{i,t-1} + \beta_u \cdot \sum_{j=1}^n w_{u,ij} \tilde{y}_{jt}  + \beta_d \cdot \sum_{j=1}^n w_{d,ij} \tilde{y}_{jt} + \varepsilon_{it},
 \label{eq:spatial panel}
\end{align}
where $\tilde{y}_{it} := \log(y_{it}/y_{i,t-1})$ is output growth in industry $i$ at quarter $t$ and $w_{q,ij}$ is the $(i,j)$ entry of the $q$-stream propagation matrix $\mathbf{W}_q$. Assuming the variance-covariance matrix of residuals is static, then $\hat{\mathbf{\Sigma}}$ is the empirical variance-covariance matrix of estimated residuals $\hat{\varepsilon}_{it}$. To ensure that my estimates for network components \eqref{eq:across empirical} and \eqref{eq:between empirical} are robust to estimation error in $\hat{\mathbf{\Sigma}}$, I calculate the average value over samples in which I randomly drop 10\% of pairwise non-zero correlations.\footnote{Note that estimation error from $\hat{\mathbf{\Sigma}}_q$ is magnified in estimated network components \eqref{eq:across empirical} and \eqref{eq:between empirical} at a rate proportional to the number of nonzero row entries in the Leontief inverse matrix $\mathbf{H}_q$.} See Appendix \ref{asec:determinants of network variance} for more details and alternative specifications.

Pairwise correlation in residuals is centered with a mean value of 0.5\% (0.4\%) and a standard deviation of 25\% (26\%). The largest positive pairwise correlation is 82\% between the Primary Metals (BEA Code 331) and Wholesale Trade (BEA Code 42) and 81\% between Housing (HS) and Educational Services (61). On the other hand, the largest negative pairwise correlation is -80\% between Primary Metals (331) and Federal Reserve Banks, Credit Intermediation, and Related Activities (521CI) and -72\% between Food and Beverage and Tobacco Products (311FT) and Wholesale Trade (42).

\subsection{Network Determinants of Realized Variance}
After relaxing the idiosyncratic shock assumption, the benchmark input-output propagation model predicts that realized variance should depend positively on three network risk components: risk that is self-originating, risk across trade partners, and risk between trade partners. In the baseline setup, this might hold mechanically for self-originating risk since it is estimated from the variance of residual output growth in equation \eqref{eq:spatial panel}. However, both risk across and between trade partners contain only variance-covariance information associated with other industries.  I verify these predictions empirically using panel regressions of the log of realized industry variance on the log of network components, controlling for a variety of characteristics such as size, centrality, durability of output, and industry cluster and time fixed effects.\footnote{I adjust network components by a constant to ensure that the minimum value is positive so the log is well defined. Industry clusters are defined by major industry groups (2-digit NAICS code).} I measure realized industry variance using both stock market and output growth data. I define market variance as the annual return variance of an equal-weighted industry portfolio and fundamental variance as the variance of quarterly year-on-year output growth. I obtain similar results when using idiosyncratic variance as the dependent variable.\footnote{I define idiosyncratic market variance as the variance of equal weighted residual returns from a Fama and French three-factor model. Similarly, I define idiosyncratic output growth as the residual of industry output growth after a regression on aggregate output growth. Results also replicate for value-weighted industry portfolios, or industry sales growth, which is constructed as the year-on-year change in the sum of quarterly sales (reported on Compustat) for all public firms in the industry.} Although the annual variance across quarterly year-on-year output growth and monthly returns are fairly noisy proxies for true realized cash-flow variance, the results are robust for several specifications.

I summarize the main results in Table \ref{tab:var regressions main}. Consistent with theoretical predictions in equation \eqref{eq:volatility decomposition static network model}, the elasticity of realized industry variance to concentration across and between customers are both positive and significant in all specifications. This holds for both market and output growth measures of variance. Conditional on both directions of propagation and all controls, increasing concentration between customers from the median to the $90^{th}$ percentile increases industry sales growth variance by over 45\% (about 0.37 standard deviations) and market variance by 15\% (about 0.09 standard deviations). Similarly, increasing concentration between suppliers from  the median to the $90^{th}$ percentile increases industry sales growth variance by over 20\% (about 0.17 standard deviations) and market variance by 19\% (about 0.11 standard deviations). Without controls, downstream network risk explains 23\% of time series variation in market variance and 22\% of time series variation in output growth variance. Similarly, upstream network risk explains 22\% and 31\% of market and output growth variance, respectively. Both directions of propagation are important for explaining the panel dynamics of industry variance. 

Consistent with the firm-level findings of \citet{herskovic_firm_2020}, I find that industry variance has a positive elasticity to concentration across customers and a negative elasticity to average size. A new but related result is the positive elasticity of variance to concentration across suppliers. Additionally, \citet{Ahern2013NetworkCA} argues that more central industries have greater market risk since they are more exposed to aggregate shocks, and thus earn higher returns on average. On the other hand, my results suggest that more central industries have less volatile stock returns, but also have less exposure to aggregate volatility risk and lower idiosyncratic volatility.\footnote{I find that industries in the highest average upstream (downstream) centrality decile have 31\% (21\%) less exposure to systematic volatility risk than the lowest decile. Average upstream and downstream centrality are positively correlated (56\% cross-sectionally), and industries who are in the top decile for both average centrality measures have a 52\% lower exposure to aggregate volatility risk than industries in the bottom decile for both. Moreover, top centrality decile stocks have 25\% lower idiosyncratic volatility, on average.} My results are thus consistent with \citet{Ahern2013NetworkCA}, since stocks with lower exposure to aggregate volatility risk or and lower idiosyncratic volatility earn higher returns on average (see e.g., \citet{AngVolatilityPuzzle}). Table \ref{tab:corr vol predictors} shows that there is no significant relationship between centrality and concentration between or across trade partners.

\subsection{Sources of Correlation in Network Propagating Shocks}
\label{sec:alternative covariance measures}
So far, I have established both theoretically and empirically the importance of accounting for correlation in shocks that propagate through the input-output network. In particular, I show that concentration between trade partners explains a large amount of variation in the industry panel of realized variance. However, statistical estimates for the variance-covariance matrix of shocks do not provide much insight on the underlying sources of correlation between industries. In this section, I argue that correlation in supply-side shocks that propagate downstream can be explained by technological proximity between sectors, while correlation in demand-side shocks that propagate upstream can be explained by product similarity.

\subsubsection{Observed Supply and Demand Shocks} 
 \citet{AcemogluAkjitKerr2016} argue that productivity shocks primarily propagate downstream while government spending and trade shocks primarily propagate upstream. In this case, these shocks might help to capture differences in inter-industry correlations which are specific to the direction of propagation. Along these lines, I construct an annual industry panel of 5-factor total factor productivity (TFP) growth between 1959-2018 from the NBER-CES Database \citep{nbercesdata}. Since this measure of TFP controls for materials, it does not mechanically encode any information related to downstream effects such as changes in price and/or quantity. Similarly, I construct a monthly panel of newly awarded federal procurement contracts between Jan 2000-Jan 2021 from the universe of contracts published in the Federal Procurement Data System (FPDS).\footnote{I also consider other observed shocks in Appendix \ref{asec:determinants of network variance}}.  

 To focus on inter-industry correlations which are unrelated to common aggregate factors (e.g., the secular decline in several manufacturing industries), I estimate the variance-covariance matrix of residuals after an OLS regression on the cross-sectional average of shocks.\footnote{The cross-sectional mean approximates the first principal component of shocks when there are missing values. For shocks $dz_t$, I calculate the covariance between sectors $k$ and $j$ as $\cov(u_{kt},u_{jt})$ where $u_{kt}$ is the residual in the regression $dz_{kt} = \alpha + \beta \cdot \bar{dz}_{.,t} + u_{kt}$. Endogeneity of shocks is not a major concern assuming any confounding shocks largely propagate in the same direction in the network. Given such a confounder, my estimate for the variance-covariance matrix of shocks can be written as the true estimate plus some measurement error. } I then estimate the corresponding network components using equations \eqref{eq:across empirical} and \eqref{eq:between empirical} and study their relationship with realized variance. Table \ref{tab:industry variance panel regressions (tfp)} shows that the elasticity of realized variance to supplier concentration is positive and more precisely estimated when calculating supply-side shock covariance as a function of productivity growth. On the other hand, Table \ref{tab:industry variance panel regressions (trade)} reports more precise estimates for the elasticity of variance to customer concentration when calculating demand shock covariance as a function of federal procurement shocks. This suggests that productivity growth is more informative about upstream network risk, while changes in government demand are more informative about downstream network risk.

\subsubsection{Technological and Product Proximity}
On the other hand, suppose that correlation between supply and demand-side shocks is a function of underlying firm and industry characteristics. Intuitively, I might assume that correlation in demand shocks propagating upstream is driven by product similarity and/or geographic proximity of customers, while correlation in supply shocks propagating downstream is driven by technological similarity and/or geographic proximity of suppliers. 

More generally, I assume that each industry is associated with a vector of positions in some latent surface $\mathbf{z}_{it} \in \Omega_z \subset \mathbb{R}^d$ and that the correlation between industry shocks can be written as a function of the distance between these latent vectors.\footnote{Latent surface models are often used to impute network relationships in microeconomic applications (see e.g., \citet{McCormickZhengLatentARD}, \citet{BrezaChandrasekar2020}). } Following \citet{McCormickZhengLatentARD}, I suppose that industry positions $\mathbf{z}_{it}$ lie on the surface of a $p$-dimensional latent surface on the $p+1$-dimensional unit hypersphere $\mathcal{S}^{p+1}$. This implicitly implies that latent positions follow a uniform distribution across the sphere's surface. Moreover, since the hypersphere has bounded surface area, the distance between any two points is bounded. I further assume that points in the same position have correlation 1 and points on opposite sides of the sphere have correlation -1.

In practice, I experiment with constructing latent positions of industries using several combinations of industry variables. For simplicity, my main results rely on univariate distances in product and technology space.\footnote{Moreover, contours of the sphere present some calibration difficulties in higher dimension.} I measure product distance using use the text-based scores developed in \citet{Hoberg2016textindustries} and technology distance using patent-based technological proximity scores along the lines of \citet{bloomtechspillovers2013ecma}. Since both of these scores are available at the firm-level, I first construct a firm-by-firm product distance network where distances are inversely related to proximity. To get the distance between sectors, I use the median length of the shortest weighted path between firms in the two sectors, rescaled such that the furthest pairwise distance is 1 and the shortest pairwise distance is zero. I calculate the shortest pairwise path between any two nodes using Dijkstra's Algorithm. I calculate these measures annually. 

Transforming distances to correlations, I rescale by the variance estimates from residuals in equation \eqref{eq:spatial panel} and recompute network components. When using product distance to calculate network risk, the elasticity of realized variance to concentration between customers is significant and positive. On the other hand, the analogous elasticity to concentration between suppliers is significantly negative, which suggests that product similarity across suppliers actually indicates better substitutability away from supply-side shocks. When approximating correlations based on technological distance, realized variance has a positive elasticity to concentration between suppliers and customers, but the elasticity is more precisely estimated on the supply side. Taken together, these results suggest that technological proximity is a good proxy for correlated exposure to supply shocks propagating downstream, while product proximity is a good proxy for correlated exposure to demand shocks propagating upstream. Along these lines, Table \ref{tab:correlation network variance components} shows that average technological proximity between sectors is closely related to correlation in TFP growth shocks, while product similarity is closely related to correlation in federal defense procurement shocks.

\subsubsection{Accounting for Dynamics}

To account for potential time-variation in industry correlations, I also compute pairwise inter-industry correlations using the dynamic conditional correlation (DCC) estimator from \citet{engeldcc}. In particular, I estimate bivariate DCC models for all pairs of industries using Bayesian MCMC following \citet{Fioruci2013}. Since product and technological proximity are computed at an annual frequency, I am thus able to obtain a one-to-one comparison between correlation in spatial panel residuals, observed shocks, and distance based measures. Table \ref{tab:correlation network variance components} reports consistent results when using dynamic correlations.

\newpage
\begin{table}[H]
    \centering
  \scriptsize
  \caption{\textbf{Network Determinants of Industry Variance}}
    \begin{tabular}{lcccccc}
    \toprule
    \multicolumn{7}{c}{Panel A: Market Return Variance} \\
    \midrule
          & (1)   & (2)   & (3)   & (4)   & (5)   & (6) \\
    \midrule
    Self-origin (demand) & 0.055** & 0.056** &       &       & 0.003 & 0.017 \\
          & (0.022) & (0.024) &       &       & (0.028) & (0.028) \\
    Across (demand) & 0.094** & 0.083** &       &       & 0.081** & 0.072** \\
          & (0.036) & (0.030) &       &       & (0.024) & (0.021) \\
    Between (demand) & 0.147*** & 0.122** &       &       & 0.197*** & 0.089** \\
          & (0.051) & (0.049) &       &       & (0.053) & (0.038) \\
    Self-origin (supply) &       &       & 0.072*** & 0.073*** & 0.067*** & 0.083** \\
          &       &       & (0.021) & (0.023) & (0.022) & (0.023) \\
    Across (supply) &       &       & 0.216*** & 0.156*** & 0.160*** & 0.154* \\
          &       &       & (0.056) & (0.054) & (0.060) & (0.067) \\
    Between (supply) &       &       & 0.416*** & 0.317*** & 0.210** & 0.196** \\
          &       &       & (0.093) & (0.095) & (0.088) & (0.073) \\
    Size  &       & -0.378*** &       & -0.361*** &       & -0.289*** \\
          &       & (0.094) &       & (0.091) &       & (0.104) \\
    Upstream centrality &       & -0.182* &       & -0.289*** &       & -0.232** \\
          &       & (0.103) &       & (0.095) &       & (0.101) \\
    Downstream centrality &       & -0.051 &       & -0.086** &       & -0.023 \\
          &       & (0.054) &       & (0.043) &       & (0.054) \\
    Durability &       & -0.167 &       & -0.404 &       & -0.637 \\
          &       & (0.573) &       & (0.662) &       & (0.615) \\
    Vertical position &       & 1.550** &       & -0.756** &       & 1.970*** \\
          &       & (0.701) &       & (0.332) &       & (0.710) \\
    Constant & -6.279 & -3.26 & -4.692 & -1.082 & -5.005 & -3.608 \\
    Obs   & 1484  & 1484  & 1484  & 1484  & 1484  & 1484 \\
    Adj R2 & 0.231 & 0.292 & 0.159 & 0.223 & 0.245 & 0.330 \\
    \midrule
    \multicolumn{7}{c}{Panel B: Output Growth Variance} \\
    \midrule
          & (1)   & (2)   & (3)   & (4)   & (5)   & (6) \\\midrule
    Self-origin (demand) & 0.034 & 0.006 &       &       & 0.026 & 0.006 \\
          & (0.026) & (0.026) &       &       & (0.030) & (0.028) \\
    Across (demand) & 0.159** & 0.074 &       &       & 0.157** & 0.095 \\
          & (0.069) & (0.024) &       &       & (0.070) & (0.057) \\
    Between (demand) & 0.210*** & 0.281*** &       &       & 0.196*** & 0.280*** \\
          & (0.064) & (0.078) &       &       & (0.065) & (0.082) \\
    Self-origin (supply) &       &       & 0.081** & 0.072 & 0.017 & 0.042 \\
          &       &       & (0.021) & (0.022) & (0.024) & (0.025) \\
    Across (supply) &       &       & 0.128** & 0.114** & 0.098** & 0.091** \\
          &       &       & (0.053) & (0.033) & (0.021) & (0.031) \\
    Between (supply) &       &       & 0.332** & 0.241*** & 0.221** & 0.198** \\
          &       &       & (0.098) & -0.079 & (0.103) & (0.092) \\
    Size  &       & -0.046 &       & -0.185* &       & -0.110 \\
          &       & (0.099) &       & (0.098) &       & (0.106) \\
    Upstream centrality &       & -0.127 &       & -0.096 &       & -0.131 \\
          &       & (0.108) &       & (0.095) &       & (0.111) \\
    Downstream centrality &       & -0.218*** &       & -0.067 &       & -0.222*** \\
          &       & (0.056) &       & (0.042) &       & (0.056) \\
    Durability &       & 0.115 &       & -0.038 &       & 0.024 \\
          &       & (0.647) &       & (0.643) &       & (0.703) \\
    Vertical position &       & 1.545** &       & 0.181 &       & 1.375** \\
          &       & (0.654) &       & (0.355) &       & (0.682) \\
    Constant & -5.088 & -6.954 & -6.818 & -5.327 & -6.17 & -6.718 \\
    Obs   & 1484  & 1484  & 1484  & 1484  & 1484  & 1484 \\
    Adj R2 & 0.221 & 0.382 & 0.198 & 0.319 & 0.277 & 0.412 \\
    \bottomrule
    \end{tabular}%
    \fnote{\tiny \textit{Notes}: This table reports panel regressions of realized industry variance on a variety of characteristics including the log variance of supply and demand shocks (self-origin), log concentration across trade partners, log concentration between trade partners, log total output (size), log centrality of the upstream and downstream propagation networks, durability of output, vertical position in the supply chain, and industry cluster and year fixed effects. In Panel A, the dependent variable is the log variance of annualized monthly returns on an equal-weighted industry portfolio. In Panel B, the dependent variable is the log variance of total quarterly year-on-year industry sales growth. I obtain return data from CRSP and GDP data from the BEA. Concentration between and across trade partners is calculated as the average value of estimates obtained using equations. \ref{eq:across empirical} and \ref{eq:between empirical}. I calculate the variance-covariance matrix of shocks using residuals from equation \ref{eq:spatial panel} and calculate the average value of components over 1000 random samples each randomly dropping 10\% of pairwise correlations.  Following \citet{Ahern2013NetworkCA}, I compute industry centrality as the eigenvector centrality of upstream and downstream propagation adjacency matrices. I calculate durability as the proportion of sub-industries classified as durable by \citet{Gomes2009durability}, and I calculate vertical position of each industry following \citet{Antrs2012} and \citet{GofmanSegalWu2020}. $***$, $**$, and $*$ indicate significance at the 1\% and 5\% and 10\% levels, respectively. Standard errors are clustered at the BEA major 15 major industry group level. Sample is at an annual frequency from 1997 to 2019 for 66 BEA non-government industries.}
  \label{tab:var regressions main}%
\end{table}%

\newpage
\begin{table}[H]
    \centering
    \caption{\textbf{Sources of Correlation in Supply and Demand Substitutability}}
     \begin{tabular}{lccccc}
    \toprule
    \multicolumn{6}{c}{Panel A: Upstream Supplier Substitutability (static)} \\
    \midrule
    \multicolumn{1}{c}{Covariance Method} & Spatial & TFP   & Procurement & Prod Similarity & Tech Proximity \\ \midrule
    Spatial  & 1     & 0.59  & 0.37  & 0.30  & 0.48 \\
    TFP   &       & 1     & 0.31  & 0.35  & 0.46 \\
    Procurement &       &       & 1     & 0.62  & 0.39 \\
    Prod Similarity &       &       &       & 1     & 0.27 \\
    Tech Proximity &       &       &       &       & 1 \\
    \midrule
    \multicolumn{6}{c}{Panel B: Downstream Customer Substitutability (static)} \\
    \midrule
    \multicolumn{1}{c}{Covariance Method} & Spatial & TFP   & Procurement & Prod Similarity & Tech Proximity \\ \midrule
    Spatial  & 1     & 0.55  & 0.61  & 0.43  & 0.35 \\
    TFP   &       & 1     & 0.27  & 0.20  & 0.48 \\
    Procurement &       &       & 1     & 0.69  & 0.40 \\
    Prod Similarity &       &       &       & 1     & 0.24 \\
    Tech Proximity &       &       &       &       & 1 \\
    \midrule
    \multicolumn{6}{c}{Panel C: Upstream Supplier Substitutability (dynamic)} \\ 
    \midrule
    \multicolumn{1}{c}{Covariance Method} & Spatial & TFP   & Procurement & Prod Similarity & Tech Proximity \\\midrule
    Spatial  & 1     & 0.61  & 0.40  & 0.34  & 0.49 \\
    TFP   &       & 1     & 0.35  & 0.39  & 0.48 \\
    Procurement &       &       & 1     & 0.65  & 0.42 \\
    Prod Similarity &       &       &       & 1     & 0.25 \\
    Tech Proximity &       &       &       &       & 1 \\
    \midrule
    \multicolumn{6}{c}{Panel D: Downstream Customer Substitutability (dynamic)} \\
    \midrule
    \multicolumn{1}{c}{Covariance Method} & Spatial & TFP   & Procurement & Prod Similarity & Tech Proximity \\ \midrule
    Spatial  & 1     & 0.57  & 0.68  & 0.45  & 0.30 \\
    TFP   &       & 1     & 0.22  & 0.27  & 0.52 \\
    Procurement &       &       & 1     & 0.70  & 0.38 \\
    Prod Similarity &       &       &       & 1     & 0.30 \\
    Tech Proximity &       &       &       &       & 1 \\
    \bottomrule
    \end{tabular}%
    \fnote{\scriptsize \textit{Notes:} This table reports the correlation across different measures of upstream and downstream substitutability measures (negative of concentration ``between"). Substitutability is calculated as the negative value of the log of \eqref{eq:between empirical} (plus a large enough constant) and correlation between two measures $x_{it}$ and $y_{it}$ is defined by $\hat{\rho}$ from the regression $y_{it} = \hat{\rho }x_{it} + u_{it}$, where $x_{it}$ and $y_{it}$ are transformed to have mean zero and standard deviation one. Panels A and B rely no a static assumption for the variance-covariance matrix across shocks, while Panels C and D estimate a dynamic conditional variance-covariance matrix \`{a} la \citet{Fioruci2013}. Panels A and C report results in the upstream (supply-side) direction and Panels B and D report results in the downstream (customer-side) direction. Spatial covariance is based on the panel model in equation \eqref{eq:spatial panel}. TFP covariance is based on TFP growth measured in \citet{nbercesdata}, procurement based on federal government shares interacted with the procurement proxy in \citet{BrigantiSellemiProcurement}, product similarity using the latent distance method and scores from \citet{Hoberg2016textindustries}, and tech proximity constructed following \citet{bloomtechspillovers2013ecma}. }
    \label{tab:correlation network variance components}
\end{table}


\newpage
\section{Dynamic Network Model of Supply-Chain Substitutability}
\label{sec:theoretical model}

In this section, I incorporate the supply chain substitutability-concentration trade-off and my empirical results from the previous section in a structural dynamic asset pricing model detailed in Appendix \ref{asec:General Equilibrium Model of Input-Output Linkages}. This model builds on existing production-based models with input output networks (see e.g., \citet{Ramirez2017}, \citet{herskovic_networks_2018}, or \citet{GofmanSegalWu2020}). Unlike existing models, I introduce a correlation structure in shocks to firm growth rates which propagate both upstream and downstream in an input-output network.\footnote{To my knowledge, this is the first network model to feature both correlated shocks and two directions of propagation.} Firms are subject to both productivity shocks which propagate downstream and demand shocks which propagate upstream. Shocks are drawn from a joint distribution with finite second moments and in which correlation across firm-level shocks is induced by shared variation in firms' input-output substitutability. More specifically, firms' ability to substitute away from productivity (demand) shocks is inversely related to concentration of trade partners in latent technology (product) space and correlated between firms who share trade partners.

\subsection{Setting}

Consider a discrete-time economy with $n$ distinct goods and $n$ firms. Output goods are characterized by vector in product-technology space, which is fixed exogenously for each good. Firm output (cash flow) depends on aggregate economic conditions and the cash flows of its customers and suppliers. There are two kinds of random shocks in this economy, productivity shocks which propagate downstream from suppliers to customers, and demand shocks which propagate upstream from customers to suppliers. The input-output network is captured by two sequences of graphs with $n$ nodes for each firm and weighted directed edges capturing the importance of firm-to-firm trade relationships. In the customer (supplier) network, the edge from $i$ to $j$ represents the relative reliance of $j$ on customer (supplier) $i$.

For tractability, I assume that trade relationships are exogenously determined at the start of each period. Additionally, the model features a representative investor with constant relative risk aversion (CRRA) preferences who owns all firms and lives off labor wages and dividends. Next, I describe the process for firm cash flows, the network structure, and the mechanism of shock propagation through the input-output network. Then I derive equilibrium consumption growth and asset prices. For ease of exposition, I provide details on the production side of the economic since that is the primary source of risk. Further details are left to Appendix \ref{asec:General Equilibrium Model of Input-Output Linkages}.

\subsection{Substitutability and Firms' Cash Flows}

Firms are exposed to undiversifiable aggregate risk factors and risk from trade partners which can be mitigated with diversification of customers and suppliers. Every firm is both a customer who purchases inputs from other firms, and a supplier who produces a single final good. Final goods are characterized by a latent position in technology-product space, which fluctuates according to a persistent stationary process discussed in Appendix \ref{asec:product varieties latent space}.\footnote{This assumption is justified empirically by the results of Section \ref{sec:empirical evidence}, which suggest both that distances in product space and technology space change over time.} Latent position dynamics are exogenous to firm and household decisions and can be interpreted as random changes in product differentiation. For example, \citet{Syverson2004sub} argues that the same products might be perceived differently as a result of intangible factors like delivery speed, documentation, product support, or branding and advertising.

In reduced-form, firm cash flows are determined by random shocks each period which propagate stochastically both downstream to customers and upstream to suppliers. 
The probability that a shock propagates through the supply chain is a function of firms' customer and supplier substitutability. Consistent with the empirical results from Section \ref{sec:empirical evidence}, I assume that a firm's customer substitutability depends on the product diversity of the goods sold by its customers. Likewise on the supply-side, a firm's suppliers substitutability depends on the technological diversity of its suppliers. When supply chains are highly substitutable, shocks are less likely to propagate.

In particular, firm cash-flow growth has the following reduced-form equation: 
\begin{align}
    \label{eq:firm level cash flows}
    \Delta y_{i,t+1} = \Delta z_{i,t+1} + \Delta g_{i,t+1},
\end{align}
where $\Delta z_{i,t+1} = \log(z_{i,t+1} / z_{it})$ is a shock to productivity and $\Delta g_{i,t+1} = \log(g_{i,t+1} / g_{it})$ is a shock to government demand. I assume that dependence across shocks is determined by both the firm's input-output network and the relative location of its final good in product-technology space. Productivity growth follows the process:
\begin{align}
    \Delta z_{i,t+1} = \gamma_{u}\cdot a_{t+1} - \beta_u\cdot  \varepsilon_{iu,t+1},
    \label{eq:productivity growth}
\end{align}
where $a_t\sim_{iid} \mathcal{N}(0,\sigma_a^2)$ is aggregate productivity growth at time $t$, $\gamma_u$ and $\beta_u$ are positive scalars, and $\varepsilon_{iut}$ is a Bernoulli shock that negatively affects productivity and originates upstream. Similarly, government demand growth follows the process:
\begin{align}
    \Delta g_{i,t+1} = \gamma_{d}\cdot g_{t+1} - \beta_d\cdot  \varepsilon_{id,t+1},
    \label{eq:government demand growth}
\end{align}
where $g_t\sim_{iid} \mathcal{N}(0,\sigma_g^2)$ is aggregate growth in government spending at time $t$, $\gamma_d$ and $\beta_d$ are positive scalars, and $\varepsilon_{idt}$ is a Bernoulli shock which negatively affects demand and originates downstream. In other words, $\varepsilon_{idt}$ ($\varepsilon_{iut}$) is equal to one when firm $i$ experiences a demand (supply) shock which originates at $i$ and/or propagates from its downstream customers (upstream suppliers). Shocks propagating in different directions are independent (i.e., $\varepsilon_{idt} \perp \varepsilon_{iut}$ for all $i$ and $t$).

\subsection{Network Structure}

The propagation of shocks depends on the sequence of input-output network connections between firms, defined as follows. The sequence of upstream and downstream graphs $(\mathcal{G}_{n,u,t})_{n}$ and $(\mathcal{G}_{n,d,t})_n$ are $n$-node graphs with weighted edges given by $w_{ijt}^u$ and $w_{ijt}^d$, respectively. Weights capture the importance of the directed relationship $i\to j$ from the perspective of $i$ and are fixed exogenously at the start of period $t$.

To ensure that the input-output network is realistic, I assume that all weights are between 0 and 1 and introduce some additional restrictions on the growth rates of input-output connections relative to the number of firms. In particular, I assume that the number of shared customers and suppliers between two firms cannot grow at a rate faster than the total number of firms in the economy $n$, and that the maximum number of firm suppliers or customers must grow slower than the total number of possible edges. First consider the following definitions.

\begin{definition}[Paths] A $k$-path between nodes $i$ and $j$ in graph $\mathcal{G}$ is a length $k$-sequence $\{a_\ell\}_{\ell=1}^k$ where $a_1 = i$, $a_k = j$, and $w_{a_\ell a_{\ell+1}} > 0$ for all $\ell = 1,...,k-1$. Denote by $A_{ij}(\mathcal{G})$ the set of paths between nodes $i$ and $j$ and by $A_{i} := \{k: A_{ki}(\mathcal{G}) \neq \emptyset \}$ the set of nodes for which a path to $i$ exists.
\end{definition}
\begin{definition}[Maximal Dependency] The maximal dependency of an $n$-vertex graph $\mathcal{G}_{n}$ is given by:
\begin{align}
\bar{M}_n(\mathcal{G}_{n}) := \sup_{i,j} \bigg[ \text{card} \big( A_{i}(\mathcal{G}_{n}) \cap A_{j} (\mathcal{G}_{n}) \big) \bigg]
\end{align}

\end{definition}
\begin{definition}[Maximal Degree] The maximal (unweighted) degree in an $n$-vertex graph $\mathcal{G}_{n}$ is given by:
\begin{align}
    \bar{D}_n(\mathcal{G}_{n}) = \sup_{i} \bigg[ \sum_{j=1}^n \text{card} \big( A_{ji}(\mathcal{G}_n)\big) \bigg]
\end{align}
\end{definition}
\noindent If only direct connections exist, then $\bar{M}_n(\mathcal{G}_n) = \sup_{i,j}\sum_{k=1}^n \mathbbm{1}_{\{w_{ki} > 0\}}\mathbbm{1}_{\{w_{kj} > 0\}}$ and $\bar{D}_n(\mathcal{G}_n) = \sup_{i}\sum_{j=1}^n \mathbbm{1}_{\{w_{ji} > 0\}} $. Given these definitions, the following assumptions formally restricts the growth rate of input-output connections as the number of firms $n$ grows. These assumptions are fairly general and relevant for deriving tractable theoretical properties of the model. 

\begin{assumption}[Bounded Growth Rate of Maximal Degree Sequence] \label{ass:bounded maximal dependency} For all $q$ and $t$, the maximal degree sequence grows at a rate strictly less than $n^2$:
\begin{align*}
    \bar{D}_{nq} = o(n^2)
\end{align*}
\end{assumption}
These assumptions are intuitive and weaker than the restriction that no firms can serve as a customer or supplier to all other firms. In this case, both the maximal dependency and the maximal degrees must grow at a rate slower than $n$.

\begin{assumption}[Bounded Growth Rate of Maximal Dependency] \label{ass:bounded maximal dependency 2} For all $q$ and $t$, the maximal dependency sequence grows at a rate strictly less than $n$:
\begin{align*}
    \bar{M}_{nq} = o(n)
\end{align*}
\end{assumption}

\subsection{Shock Propagation Mechanism} For tractability, productivity and demand shocks  propagate in a single direction within period $t$ and die out in the following period. Network connections induce correlation across firm-level shocks. At the start of period $t$, shocks are drawn from distributions $\varepsilon_{idt} \sim \text{Bernoulli}(p_{idt})$ and $\varepsilon_{iut} \sim \text{Bernoulli}(p_{iut})$ where $p_{idt}$ and $p_{iut}$ represent time-varying propensities for firms to experience downstream (demand-side) or upstream (supply-side) shocks, respectively. Propensities are a function of the network structure and firm substitutability, both of which are fixed exogenously at the start of each period.

Intuitively, firms with more substitutability across customers (suppliers) should have a lower average propensity  $p_{idt}$ ($p_{iut}$) to experience shocks. Mathematically, I assume propensities follow a logistic (sigmoid) curve: 

\begin{align}
    p_{iqt} = g(s_{iqt}; k_{iq}, x_{iq}) = \frac{1}{1 + \exp\big\{k_{iqt} \cdot (s_{iqt} - x_{iqt})\big\}}, \quad q \in \{u,d\},
    \label{eq:propensity function}
\end{align}

where $s_{iut}$ ($s_{idt}$) is the supply-side (demand-side) substitutability of firm $i$, $k_{iqt}$ is the sensitivity (steepness) of firm propagation to substitutability, and $x_{iqt}$ is a scalar midpoint. The cross-sectional normalization ensures that firms with substitutability $x_{iqt}$ have a 50\% chance of being shocked.  Substitutability captures network-weighted dispersion in $i$'s supplier-technology (customer-product) space, while $x_{iqt}$ and $k_{iqt}$ jointly characterize the firm-specific risk of firm $i$. Inverting terms in the ``between" concentration measure from \eqref{eq:between empirical}, I assume substitutability can be written:
\begin{align}
    s_{iqt} = \log\sum_{j\neq k} w_{ijt}^q \cdot w_{ikt}^q \cdot \delta_{jkt}^q,
    \label{eq:substitutability definition model}
\end{align}
where $w_{ijt}^q$ represents the importance of trade between $j$ and $i$ in the $q$-stream direction and $\delta_{jkt}^q$ is normalized distance between industries $j$ and $k$ in latent product ($q = d$) or technological ($q=u$) space. See Appendix \ref{asec:product varieties latent space} for details. Shared customer and supplier connections induce correlation in substitutability $s_{iqt}$ across firms. This also implies the shock transmission propensities $p_{iqt}$ are also correlated. Time-variation in firm product differentiation generates correlated changes in substitutability across firms who share customers and suppliers. When there are no network connections, firms are hit by shocks with probability $p_{iqt} = 1/(1+\text{exp}({-k_{iqt}x_{iqt}}))$. For remaining sections, I assume that $k_{iqt} = k_{iq}$ and $x_{iqt} = x_{iq}$ are time invariant.

\subsection{Consumption Growth and the Stochastic Discount Factor}
I assume that representative households in this economy own shares in each firm and have the following preferences:
\begin{align}
    u(c_{1t},...,c_{nt},\ell_t) = \frac{1}{1-\gamma} \cdot \bigg(\prod_{i=1}^n c_{it}^{\beta_i}\bigg)^{1-\gamma},
\end{align}
where $c_{it}$ is the consumption of good $i$ with preference weights $\beta_i$ such that $\sum_{i}\beta_i = 1$, $\gamma$ is risk aversion, and $g(.)$ is a decreasing and differentiable function capturing disutility of labor $\ell_t$. In Appendix \ref{asec:General Equilibrium Model of Input-Output Linkages}, I show that equilibrium consumption growth and output growth are equal such that $\Delta \tilde{c}_{i,t+1} := \log (c_{i,t+1} / c_{it}) = \Delta \tilde{y}_{i,t+1}$ for all $i$ and $t$. I also derive an appropriate price normalization such that equilibrium consumption expenditure is given by $C_t = \prod_i c_{it}^{\beta_i} =  \sum_{i} p_{it}c_{it}$ for a given set of positive prices $p_{it}$. Finally, the following proposition derives the expression for growth in aggregate consumption expenditure under the same assumptions. 
\begin{proposition}[Aggregate Consumption and Output Growth]
\label{prop:aggregate consumption growth}
Assuming  $\beta_i = 1/n$ for all $i$ and
under the price normalization in Appendix \ref{asec:General Equilibrium Model of Input-Output Linkages},  aggregate consumption growth can be written: 
\begin{align}
    \Delta \tilde{c}_{t+1} & = \gamma_u \cdot a_{t+1} + \gamma_d \cdot g_{t+1} - \beta_u \cdot W_{u,t+1} - \beta_d \cdot W_{d,t+1},
    \label{eq:consumption growth}
\end{align}
where $W_{ut} = \frac{1}{n}\sum_{i=1}^n \varepsilon_{iut}$ and $W_{dt} = \frac{1}{n}\sum_{i=1}^n \varepsilon_{idt}$ and $\gamma_u, \gamma_d, \beta_u, \beta_d$ are positive scalars.
\end{proposition}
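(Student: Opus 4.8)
The plan is to reduce the aggregate claim to the firm-level reduced form by exploiting the log-linearity of the Cobb-Douglas consumption aggregator and then averaging across firms. I would proceed in three stages: first express aggregate expenditure growth as a $\beta_i$-weighted sum of firm-level consumption growth; second invoke the equilibrium equivalence of consumption and output growth to swap each firm's consumption growth for its cash-flow growth; third substitute the reduced-form laws of motion in \eqref{eq:productivity growth} and \eqref{eq:government demand growth} and collect terms.

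For the first stage, under the price normalization established in Appendix \ref{asec:General Equilibrium Model of Input-Output Linkages} we have $C_t = \prod_i c_{it}^{\beta_i}$, so $\log C_t = \sum_i \beta_i \log c_{it}$ is linear in the log consumptions. Differencing between $t$ and $t+1$ then yields $\Delta \tilde c_{t+1} = \sum_i \beta_i \, \Delta \tilde c_{i,t+1}$ with no interaction terms, precisely because the aggregator is Cobb-Douglas; imposing $\beta_i = 1/n$ collapses this to the cross-sectional average $\Delta \tilde c_{t+1} = \frac{1}{n}\sum_i \Delta \tilde c_{i,t+1}$. For the second stage, I would invoke the equilibrium result (derived in the same appendix) that consumption growth equals output growth good by good, $\Delta \tilde c_{i,t+1} = \Delta \tilde y_{i,t+1}$, so the average becomes $\frac{1}{n}\sum_i \Delta \tilde y_{i,t+1}$.

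For the third stage, I substitute $\Delta \tilde y_{i,t+1} = \Delta z_{i,t+1} + \Delta g_{i,t+1} = \gamma_u a_{t+1} - \beta_u \varepsilon_{iu,t+1} + \gamma_d g_{t+1} - \beta_d \varepsilon_{id,t+1}$ from \eqref{eq:firm level cash flows}. Since $a_{t+1}$ and $g_{t+1}$ are common aggregate shocks that do not depend on $i$, the averaging operator leaves them unchanged (the average of a constant is that constant), while the firm-specific Bernoulli components average to $W_{u,t+1} = \frac{1}{n}\sum_i \varepsilon_{iu,t+1}$ and $W_{d,t+1} = \frac{1}{n}\sum_i \varepsilon_{id,t+1}$ by definition. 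Collecting the four pieces reproduces exactly the claimed expression in \eqref{eq:consumption growth}.

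The computation here is mechanical, so the substantive content lives entirely in the two inputs imported from the appendix: the Cobb-Douglas price normalization that delivers $C_t = \prod_i c_{it}^{\beta_i}$, and the market-clearing argument giving $\Delta \tilde c_{i,t+1} = \Delta \tilde y_{i,t+1}$ for every good. The main obstacle is therefore not the aggregation step itself but verifying those two equilibrium results under the stated CRRA preferences and network shock structure; granting them, the proposition follows immediately from the linearity of the log of the aggregator and of the cross-sectional average.
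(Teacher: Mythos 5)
Your proposal matches the paper's own proof essentially step for step: the paper also writes $\Delta \tilde c_{t+1} = \frac{1}{n}\sum_i \log(c_{i,t+1}/c_{it})$ from the Cobb--Douglas aggregator under the price normalization, replaces consumption growth with output growth via the equilibrium equivalence, and then substitutes the reduced form \eqref{eq:firm level cash flows} and averages. The argument is correct and identical in structure, including the observation that the aggregate shocks $a_{t+1}$ and $g_{t+1}$ pass through the cross-sectional average unchanged while the Bernoulli terms define $W_{u,t+1}$ and $W_{d,t+1}$.
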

\begin{proof}
See Appendix \ref{asec:proof of aggregate consumption growth}.
\end{proof}
This proposition decomposes aggregate consumption growth into four components. The first two components capture innovations to aggregate productivity and demand growth ($a_t$ and $g_t$, respectively), both of which are positively related to output and consumption growth. On the other hand, the next two components are negatively related to output and consumption growth and capture the average impact of bad shocks to productivity originating upstream ($W_{ut}$), and the average impact of bad shocks to demand originating downstream ($W_{dt}$). Combining this result with \eqref{eq:sdf1}, the log stochastic discount factor (SDF) can be written:
\begin{align}
    m_{t+1} = \log \beta - \gamma \big( \gamma_u \cdot a_{t+1} + \gamma_d \cdot g_{t+1} - \beta_u \cdot W_{u,t+1} - \beta_d \cdot W_{d,t+1} \big)
    \label{eq:model log SDF},
\end{align}
where $\beta$ is the intertemporal discount factor and $\gamma$ is risk aversion. This implies that aggregate productivity and demand growth have a positive price of risk while average upstream and downstream propagation have a negative price of risk.

\subsection{Additional Theoretical Results}
This section summarizes some additional relevant theoretical results from the model. The following proposition states that the conditional distribution of consumption growth in this model is asymptotically normal as the number of firms grows.  
\begin{proposition}[Distribution of Consumption Growth] \label{prop:distribution of consumption growth} Under Assumption \ref{ass:bounded maximal dependency}, the sequence of consumption growth is asymptotically normal as $n\to\infty$, conditional on time $t$ for all $t$:
\begin{align}
    \Delta \tilde{c}_{t+1}  \xrightarrow[]{d} \mathcal{N}(\mu_{c,t+1}, \sigma_{c,t+1}^2),
\end{align}
where:
\begin{align*}
    \mu_{ct} & := \mathbb{E}_t[\Delta \tilde{c}_{t} ] = \frac{1}{n}\sum_{i=1}^n (p_{iut} + p_{idt}), \\
    \sigma_{ct}^2 &:= \var_t[\Delta \tilde{c}_{t}] =  \sigma_g^2 + \sigma_a^2 + \var_t(W_{ut} + W_{dt}).
\end{align*}

\end{proposition}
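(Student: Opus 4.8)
The plan is to establish the limiting mean and variance first and then invoke a central limit theorem for dependent arrays to obtain asymptotic normality. From the expression for $\Delta \tilde{c}_{t+1}$ in Proposition \ref{prop:aggregate consumption growth}, I would separate the two genuinely aggregate terms, $\gamma_u a_{t+1} + \gamma_d g_{t+1}$, which are already Gaussian and independent of the network, from the two averages $W_{u,t+1} = \frac{1}{n}\sum_i \varepsilon_{iut}$ and $W_{d,t+1} = \frac{1}{n}\sum_i \varepsilon_{idt}$, which are averages of correlated Bernoulli indicators. The conditional mean computation is routine: $\mathbb{E}_t[\varepsilon_{iqt}] = p_{iqt}$, so $\mathbb{E}_t[\Delta \tilde{c}_t]$ reduces to $\frac{1}{n}\sum_i(p_{iut}+p_{idt})$ after the zero-mean aggregate terms drop out. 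The conditional variance follows from independence of $a$, $g$, and the two shock directions (recall $\varepsilon_{idt}\perp\varepsilon_{iut}$), giving $\sigma_g^2 + \sigma_a^2 + \var_t(W_{ut}+W_{dt})$.

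The substantive step is the asymptotic normality of the normalized sum $\sqrt{n}\,(W_{ut} + W_{dt} - \mathbb{E}_t[W_{ut}+W_{dt}])$. First I would argue that the dependence among the $\varepsilon_{iqt}$ is \emph{local} in the network: two indicators $\varepsilon_{iqt}$ and $\varepsilon_{jqt}$ are correlated only through shared upstream/downstream ancestors, so the number of indices $j$ that are dependent on a fixed $i$ is controlled by the maximal dependency $\bar{M}_n$. Under Assumption \ref{ass:bounded maximal dependency} (here I would note the proposition cites Assumption \ref{ass:bounded maximal dependency}, though the relevant neighborhood-size control is the dependency condition $\bar M_{nq}=o(n)$ of Assumption \ref{ass:bounded maximal dependency 2}), each summand depends on at most $o(n)$ others, so the array forms a dependency graph whose maximum degree grows slower than $n$. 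The natural tool is then a central limit theorem for sums with sparse dependency graphs --- for instance Stein's method via the dependency-neighborhood bound (Chen--Shao / Ross), which gives a quantitative Wasserstein or Kolmogorov bound on the distance to a Gaussian in terms of the neighborhood size divided by an appropriate power of the variance.

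The key steps in order would be: (i) compute $\mu_{ct}$ and $\sigma_{ct}^2$ as above; (ii) formalize the dependency-neighborhood structure, showing $\varepsilon_{iqt}$ is independent of the collection $\{\varepsilon_{jqt}: j\notin A_i\cup(\text{ancestors sharing a node with }i)\}$, so that the size of each dependency neighborhood is $O(\bar{M}_{nq}) = o(n)$; (iii) verify a uniform boundedness condition (automatic since the $\varepsilon$'s are Bernoulli, hence bounded by $1$) together with a nondegeneracy condition $\var_t(W_{ut}+W_{dt}) \gtrsim 1/n$ so that the variance does not vanish too fast; and (iv) apply the Stein-method CLT for locally dependent summands to conclude that the standardized sum converges to $\mathcal{N}(0,1)$, hence $\Delta\tilde c_{t+1}\xrightarrow{d}\mathcal{N}(\mu_{c,t+1},\sigma_{c,t+1}^2)$ after adding back the independent Gaussian aggregate terms.

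The hard part will be making the dependency structure precise and checking the nondegeneracy of the limiting variance. The correlation between $\varepsilon_{iqt}$ and $\varepsilon_{jqt}$ is mediated by the substitutability $s_{iqt}$ through the sigmoid propensity in \eqref{eq:propensity function}, and I must translate the graph-theoretic condition $\bar M_{nq}=o(n)$ into a genuine bound on the number of nonzero covariance pairs; the subtlety is that a small maximal dependency bounds \emph{pairwise} overlaps but I need control of the full neighborhood in the dependency graph used by Stein's method, which may require the degree bound of Assumption \ref{ass:bounded maximal dependency} as well. The second delicate point is ruling out degeneracy: if almost all propensities $p_{iqt}$ collapse to $0$ or $1$ the Bernoulli variances vanish and the CLT scaling breaks down, so I would either impose or verify that a nonvanishing fraction of firms have propensities bounded away from $\{0,1\}$, ensuring $n\,\sigma_{ct}^2$ stays bounded below.
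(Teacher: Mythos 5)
Your proposal is correct in substance and follows the same overall architecture as the paper's proof: isolate the Gaussian aggregate terms $\gamma_u a_{t+1}+\gamma_d g_{t+1}$, reduce the problem to a CLT for the averages of locally dependent Bernoulli indicators $W_{u,t+1}$ and $W_{d,t+1}$, and control the dependence through the network. Where you differ is the key lemma: you invoke a Stein's-method CLT for sums with sparse dependency graphs (Chen--Shao/Ross), whereas the paper applies Theorem 2 of \citet{Janson1988}, a cumulant-based normal-convergence result whose sufficient condition is $\mathcal{X}_{n,m}=(n/\bar{D}_{nq})^{1/m}\,\bar{D}_{nq}/(n\sigma_{nq})=o(1)$ for some integer $m$, verified with $m=2$ using the maximal-degree bound of Assumption \ref{ass:bounded maximal dependency} together with the variance lower bound $\sigma_{nq}^2\geq n^{-1}\min_i p_{iq}$. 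Your instincts on the two delicate points are sound and match what the paper actually does: (i) it is indeed the \emph{degree} condition (Assumption \ref{ass:bounded maximal dependency}) rather than the maximal-dependency condition (Assumption \ref{ass:bounded maximal dependency 2}) that enters this proof --- the latter is reserved for Corollary \ref{prop:concentration of network factors} --- so your hedge about which assumption carries the argument resolves in favor of the one the proposition cites; and (ii) the paper's non-degeneracy is exactly the bound $\sigma_{nq}^2\geq n^{-1}\min_i p_{iq}$ (aided by the fact that the pairwise covariances $\cov(\varepsilon_{iq},\varepsilon_{jq})$ are shown to be positive), which implicitly requires propensities bounded away from $\{0,1\}$, the condition you say you would need to impose or verify. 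The practical trade-off between the two routes: Stein's method buys an explicit Wasserstein/Kolmogorov rate, but its standard dependency-graph bound scales like $\bar{D}^2\sum_i\mathbb{E}|X_i|^3/\sigma^3$, which under a variance of order $n$ forces a more restrictive degree growth than Janson's condition, where taking $m$ large relaxes the requirement toward $\bar{D}_{nq}/(n\sigma_{nq})\to 0$; so the paper's choice of lemma is the one better matched to the stated growth assumptions, and if you pursue the Stein route you should expect to need a tighter bound on $\bar{D}_{nq}$ than Assumption \ref{ass:bounded maximal dependency} alone provides.
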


\begin{proof}
See Appendix \ref{asec:proof distribution of consumption growth}.
\end{proof}
\noindent Although the conditional mean of consumption growth is known in this model, there is no closed form expression for the conditional variance term. This follows from the fact that shock transmission propensities follow a logistic normal distribution (see Appendix \ref{asec:product varieties latent space}). After deriving the asymptotic distribution of consumption growth, the next corollary characterizes the probability the $W_{nqt}$ deviates from its cross-sectional mean when propensities are known.

\begin{corollary}[Concentration of Network Factors]
\label{prop:concentration of network factors}
Under Assumption \ref{ass:bounded maximal dependency 2} and if $\bar{M}_{nqt} > 1$, the propagation factor $W_{qt}$ can be written:
\begin{align}
    W_{nqt} = \mu_{nqt|t} + \epsilon_{nqt|t}
\end{align}
where $\mu_{nqt|t} = \frac{1}{n}\sum_{i}p_{iqt|t}$ and $\epsilon_{nqt|t} \sim \mathcal{N}(0,\sigma_{nqt|t}^2) $ where:
\begin{align}
    \sigma_{nqt|t}^2 \leq \frac{\bar{M}_{nqt}}{n} = o(1)
\end{align}
Moreover, for any $k>0$, the magnitude of $\epsilon_{nqt|t}$ can be upper bounded as follows:
\begin{align*}
    \text{Pr}\big(|\varepsilon_{nqt|t}| \geq 2k(\bar{M}_{qt}/n) \big) \leq \frac{1}{k^2}
\end{align*}
\end{corollary}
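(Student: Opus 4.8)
The plan is to treat the corollary as a statement, conditional on the time-$t$ information (so that the propensities $p_{iqt}$ are fixed), about the empirical average $W_{nqt}=\frac1n\sum_{i=1}^n\varepsilon_{iqt}$ of the Bernoulli shock indicators whose marginal means are the $p_{iqt}$. The decomposition $W_{nqt}=\mu_{nqt|t}+\epsilon_{nqt|t}$ is then purely definitional: set $\mu_{nqt|t}:=\mathbb{E}_t[W_{nqt}]=\frac1n\sum_i p_{iqt}$ and $\epsilon_{nqt|t}:=W_{nqt}-\mu_{nqt|t}$, so $\mathbb{E}_t[\epsilon_{nqt|t}]=0$ by construction. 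I would not re-derive the asymptotic normality $\epsilon_{nqt|t}\xrightarrow{d}\mathcal{N}(0,\sigma_{nqt|t}^2)$ from scratch: it is the centered version of the object whose central-limit behavior already underlies Proposition \ref{prop:distribution of consumption growth}, since $\Delta\tilde c_{t+1}$ is an affine function of $W_{ut}$ and $W_{dt}$ with the Gaussian aggregates $a_{t+1},g_{t+1}$ entering separately. I would therefore invoke that proposition, and the dependent-array CLT used to prove it under Assumption \ref{ass:bounded maximal dependency}, to obtain normality of $\epsilon_{nqt|t}$.

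The substantive work is the variance bound. I would begin from
\begin{align*}
\sigma_{nqt|t}^2=\var_t(W_{nqt})=\frac{1}{n^2}\sum_{i=1}^n\sum_{j=1}^n\cov_t(\varepsilon_{iqt},\varepsilon_{jqt}).
\end{align*}
The key structural observation is that $\cov_t(\varepsilon_{iqt},\varepsilon_{jqt})=0$ whenever firms $i$ and $j$ have no common dependency, i.e.\ whenever $A_i(\mathcal{G}_{n,q,t})\cap A_j(\mathcal{G}_{n,q,t})=\emptyset$, because conditional on the propensities the realized shocks of two firms with disjoint ancestor sets are driven by independent primitive source shocks. Each surviving covariance is a covariance of $\{0,1\}$-valued variables and is thus bounded in absolute value, and the number of shared dependencies of any pair is at most $\bar M_{nqt}$ by the definition of maximal dependency. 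The target is then to show that the correlation mass is \emph{localized}: the aggregate $\sum_j\cov_t(\varepsilon_{iqt},\varepsilon_{jqt})$ attached to any fixed firm $i$ is controlled by $\bar M_{nqt}$, so that summing over the $n$ rows yields $\sum_{i,j}\cov_t\le n\,\bar M_{nqt}$ and hence $\sigma_{nqt|t}^2\le \bar M_{nqt}/n$. Assumption \ref{ass:bounded maximal dependency 2} ($\bar M_{nqt}=o(n)$) then immediately delivers $\sigma_{nqt|t}^2=o(1)$.

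The tail statement is a one-line consequence of Chebyshev's inequality: for any $a>0$, $\Pr(|\epsilon_{nqt|t}|\ge a)\le \sigma_{nqt|t}^2/a^2$, and taking the deviation threshold to be the stated multiple of $\bar M_{nqt}/n$ (equivalently, $k$ times the standard-deviation bound $\sqrt{\bar M_{nqt}/n}$, with the factor absorbed into the constant) collapses the right-hand side to the $1/k^2$ form.

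The main obstacle is the middle step — converting the pairwise structural facts (covariances vanish off shared-dependency pairs, and shared dependencies number at most $\bar M_{nqt}$) into the aggregate bound $\sum_{i,j}\cov_t\le n\,\bar M_{nqt}$ rather than the naive $n^2\,\bar M_{nqt}$. The difficulty is that a single high-out-degree ancestor can correlate one firm with many others, so the count of correlated partners of a firm is \emph{not} itself bounded by $\bar M_{nqt}$; the accounting must exploit the degree restriction of Assumption \ref{ass:bounded maximal dependency} alongside the maximal-dependency bound of Assumption \ref{ass:bounded maximal dependency 2}. As a cleaner alternative I would attempt an Efron--Stein / bounded-differences route: writing $W_{nqt}$ as a function of independent per-node primitive shocks, the change induced by resampling source $k$ is at most the fraction of firms reachable from $k$, and bounding the resulting sum of squared reach-sizes through the maximal-dependency and maximal-degree assumptions should reproduce the same $\bar M_{nqt}/n$ order.
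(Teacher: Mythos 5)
Your route is the paper's route: the decomposition $W_{nqt}=\mu_{nqt|t}+\epsilon_{nqt|t}$ is definitional, asymptotic normality is inherited from the dependent-array CLT behind Proposition \ref{prop:distribution of consumption growth}, the variance is expanded into a diagonal part plus off-diagonal covariances that vanish for pairs with disjoint dependency sets, and the tail bound is Chebyshev. The one substantive difference is that the paper treats the diagonal term separately, bounding $\frac{1}{n^2}\sum_i p_i(1-p_i)\le \frac1n$ and invoking the hypothesis $\bar M_{nqt}>1$ to absorb it, so it actually arrives at $\sigma_n^2\le 2\bar M_n/n$ rather than the $\bar M_{nqt}/n$ claimed in the statement.

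On the step you single out as the main obstacle: the paper does not close it either. After writing $\sigma_n^2=\frac{1}{n^2}\sum_i p_i(1-p_i)+\frac{1}{n^2}\sum_{i\neq j}\text{cov}(\varepsilon_i,\varepsilon_j)\cdot\mathbbm{1}\{A_i\cap A_j\neq\emptyset\}$, it simply asserts that the second term is at most $\bar M_n/n$. Your diagnosis of why this is not automatic is exactly right: $\bar M_n$ caps the number of shared ancestors of a fixed \emph{pair}, not the number of partners correlated with a fixed firm, so the count of nonzero off-diagonal terms can be of order $n^2$ (a single high-out-degree supplier correlates everyone with everyone while $\bar M_n$ stays small), and the asserted inequality needs either a localization argument combining Assumptions \ref{ass:bounded maximal dependency} and \ref{ass:bounded maximal dependency 2} or a bound on each $\text{cov}(\varepsilon_i,\varepsilon_j)$ scaling with $|A_i\cap A_j|$. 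Neither you nor the paper supplies this; your Efron--Stein alternative is a sensible way to try to repair it. A second soft spot you partially gloss over: the paper's Chebyshev step compares the threshold $2k\bar M_n/n$ to $k\sigma_n$, which requires $\sigma_n\le 2\bar M_n/n$; the variance bound only gives $\sigma_n\le\sqrt{2\bar M_n/n}$, and since $\bar M_n/n\to 0$ the square root is the \emph{larger} quantity, so the comparison goes the wrong way. The deviation scale that Chebyshev actually delivers is $k\sqrt{\bar M_n/n}$ — this is not equivalent to $k\bar M_n/n$ up to constants, contrary to your parenthetical. In short, your proposal reproduces the paper's argument faithfully, including its two gaps, and is more candid than the paper about the first of them.
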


\begin{proof}
See Appendix \ref{asec:proof concentration of network factors}.
\end{proof}










\section{Testable Implications}
\label{sec:testable implications}
In this section, I verify the main quantitative predictions of the model using financial and macroeconomic data. According to equation \eqref{eq:model log SDF}, innovations in average supply and demand shock propagation have a negative price of risk. In addition, level changes in these components should be negatively correlated with aggregate consumption growth.

\subsection{Data and Calibration}
I construct a panel of firms between 1997-2019 whose North American Industry Classification System (NAICS) are in the set of industries for which BEA Input-Output accounts are available. I obtain annual and quarterly firm variables from Compustat and stock return data from CRSP for share codes 10, 11, and 12.\footnote{Firm and return variables are winsorized at the 1\% level unless otherwise specified.} I obtain aggregate time series of Total Factor Productivity growth from \citet{Fernald2012TFP}, government demand growth from the procurement proxy in \citet{BrigantiSellemiProcurement}, and annual market and risk-free returns from Kenneth French's Website.

I begin by computing input-output propagation factors, denoted by $\hat{W}_{ut}$ and $\hat{W}_{dt}$. In Section \ref{sec:empirical evidence}, I introduce a latent distance approach to compute the panel of industry concentration and substitutability between customers and suppliers from equation \eqref{eq:between empirical}. Assuming that substitutability is the same for firms in a given industry, I can then directly compute $\hat{s}_{iqt}$ for any firm with industry data available. The expression for $\hat{p}_{iqt} = g(\hat{s}_{iqt}; k_{iq}, x_{iq})$ follows directly from equation \eqref{eq:propensity function} conditional on scalar parameters $k_{iq}$ and $x_{iq}$. To calibrate these parameters, I first estimate the following panel regression:
\begin{align}
    \Delta \tilde{y}_{i,t+1} & = \gamma_u a_{t+1} + \gamma_d g_{t+1} + \text{controls} + \epsilon_{i,t+1},
    \label{eq: empirical sales growth panel regression}
\end{align}
where $\Delta \tilde{y}_{i,t+1}$ is year-on-year sales growth, $a_{t+1}$ is TFP growth, and $g_{t+1}$ is growth in the federal defense. Controls include year and industry fixed effects, lagged firm size, age, and return on assets to ensure that changes in $\epsilon_{i,t+1}$ is unrelated to aggregate economy-wide or industry-level forces or trends in large, young, or profitable firms.\footnote{Industry fixed effects are at the two-digit NAICS granularity.} Then let $\hat{\epsilon}_{i,t+1}$ denote residual sales growth, and let $\omega_{iu}$ ($\omega_{id}$) denote the average cost share (sales share) of intermediate inputs in $i$'s industry, and choose values of $k_{iq}\geq 0$ and $s_{iq}\in\mathbb{R}$ such that:
\begin{align*}
 \widehat{\var}(\hat{\epsilon}_{i,t+1}) = \frac{\exp(k_{iq}(\bar{s}_{iq} - x_{iq}))}{\big(1 + \exp(k_{iq}(\bar{s}_{iq} - x_{iq}))\big)^2}, \quad \text{ and } \quad \omega_{iq} = \frac{\big(1 + \exp(k_{iq}(\bar{s}_{iq} - x_{iq}))\big)^2}{\big(1 + \exp(-k_{iq} x_{iq}))\big)^2},
\end{align*}
for $q\in\{u,d\}$ where $\bar{s}_{iq} = \frac{1}{T}\sum_{t}s_{iqt}$ is firm $i$'s average substitutability over time. The first restriction is based on equation \eqref{eq:firm level cash flows} and ensures that the variance of a typical Bernoulli($p_{iqt}$) shock is equal to residual sales growth variance, while the second restriction requires $\omega_{iq}$ proportion of this variance to be attributed to network propagation. Together, the system of equations uniquely identify $k_{iq}$ and $x_{iq}$. Table \ref{tab:calibrated parameters} summarizes the calibrated parameter values. I then approximate each realized network propagation factor with its cross-sectional empirical mean as follows: 
\begin{align}
    \hat{W}_{qt} \approx \frac{1}{n}\sum_{i=1}^n \hat{p}_{iqt}, \quad q\in \{u,d\},
    \label{eq:empirical propagation factors}
\end{align}
where $\hat{p}_{iqt}$ is the empirical propensity. I use the cross-sectional mean since realized shocks cannot be identified even when firm propensities $p_{iqt}$ are known. In practice, this is not a large concern, as Proposition \ref{prop:concentration of network factors} shows that the measurement error can be bounded arbitrarily by increasing the sample size.\footnote{As a heuristic evaluation of this bound, suppose I restrict our sample to only firms that show up in the Customer Segments database ($\bar{M}_{nqt} = 149$ and $n = 12489$), then the probability that the measurement error more than 10\% is less than 1\%.} I plot the estimated series in Figure \ref{fig:propagation factors} and report summary statistics in Table \ref{tab:summary propagation factors}. See Appendix \ref{asec:additional firm results} for more details.

\begin{table}[H]
    \centering
        \caption{\textbf{Descriptive Statistics Network Propagation Factors}}
    \begin{tabular}{l|ccccccccc}
    \toprule
    \multicolumn{1}{r}{} & $W_{ut}$   & $S_{ut}$   & $W_{dt}$   & $S_{dt}$  & $g_t$    & $a_t$    & $\sigma_{t}^{civ}$  & $\sigma_{t}^{mkt}$ & AC(1) \\
    \midrule
    $W_{ut}$    & 1     & -0.81 & 0.13  & 0.17  & 0.25  & -0.20 & -0.38 & -0.31 & 0.26 \\
   $S_{ut}$   &       & 1     & -0.36 & -0.03 & -0.21 & -0.16 & 0.62  & 0.52  & 0.20 \\
    $W_{dt}$  &       &       & 1     & -0.44 & -0.05 & -0.13 & -0.22 & -0.06 & 0.01 \\
    $S_{dt}$   &       &       &       & 1     & -0.15 & -0.17 & 0.13  & -0.14 & 0.04 \\
    $g_t$    &       &       &       &       & 1     & 0.05  & 0.00  & 0.26  & 0.62 \\
    $a_t$   &       &       &       &       &       & 1     & -0.19 & -0.43 & 0.30 \\
    $\sigma_{t}^{civ}$ &       &       &       &       &       &       & 1     & 0.53  & -0.30 \\
   $\sigma_{t}^{mkt}$  &       &       &       &       &       &       &       & 1     & -0.21 \\
    \bottomrule
    \end{tabular}%
     \fnote{\scriptsize\textit{ Notes:} This table reports the time-series correlation and first-order autocorrelation of network propagation factors $W_{ut}$ and $W_{dt}$, average industry substitutability $S_{ut}$ and $S_{dt}$, procurement demand growth $g_t$, productivity growth $a_t$, innovations to common idiosyncratic volatility $\sigma_{t}^{civ}$, and innovations to market volatility $\sigma_{t}^{mkt}$. I calculate $g_t$ as the first log difference in the federal procurement proxy from \citet{BrigantiSellemiProcurement}, $a_t$ as the first difference in the TFP series from \citet{Fernald2012TFP}, innovations in common idiosyncratic volatility as the first log difference in the first principal component of firm volatility following \citet{HKLNCIV2016}, and innovations in market volatility as the first log difference in market return volatility.   }
    \label{tab:summary propagation factors}
\end{table}

\begin{figure}[H]
    \centering
    \subfloat[Panel A: Propagation Factors]{\includegraphics[width = .5\textwidth]{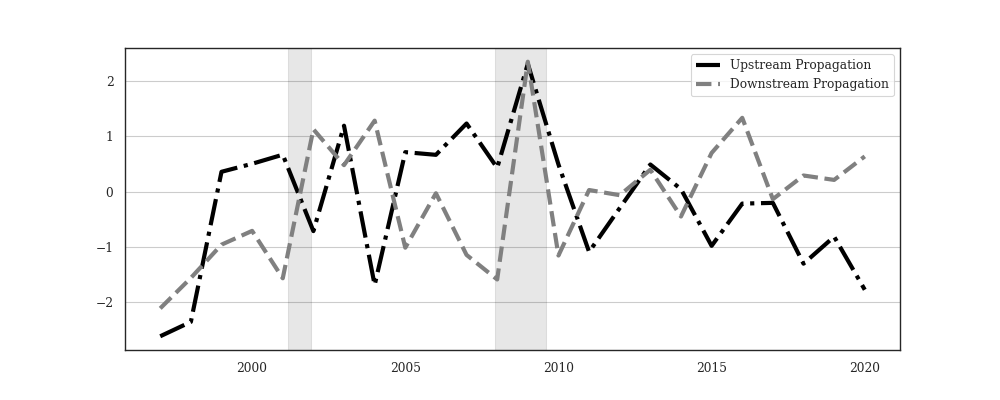}}
    \subfloat[Panel B: Innovations in Avg Substitutability]{\includegraphics[width = .5\textwidth]{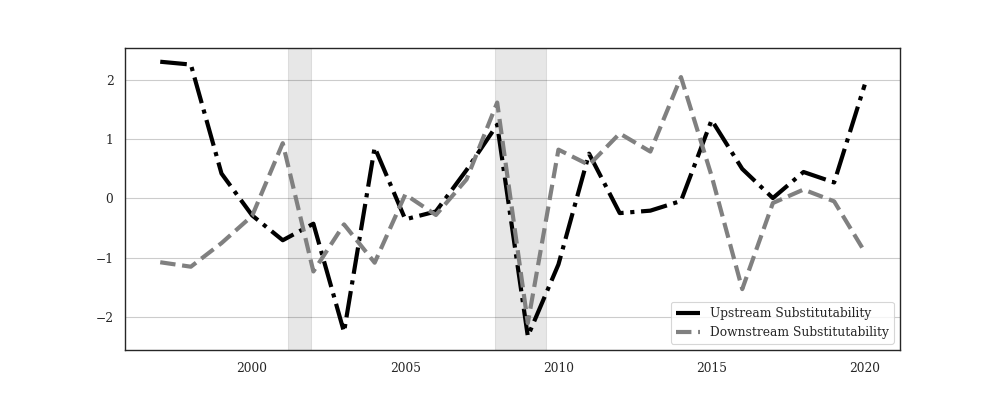}}
     \caption{\textbf{Network Propagation Risk Factors}}
    \label{fig:propagation factors}
    \fnote{\scriptsize\textit{Notes:} This figure plots the time series of network propagation risk factors (Panel A), the cross-sectional average industry substitutability (Panel B). Shaded regions indicate NBER-dated recession periods.  }
\end{figure}

\newpage

\subsection{Asset Pricing Tests}
To verify the prices of risk predicted in \eqref{eq:model log SDF}, I sort stocks based on their exposure to factors and form quintile-sorted portfolios. In particular, for every stock $i$ I regress annual excess returns $r_{it} - r_{ft}$ on a constant, aggregate demand and productivity growth, and additional controls.\footnote{I test several specifications including controlling for lag factor levels. Results are robust to several specifications on the set of controls, including the baseline without controls.} The main regression is given by:
\begin{align}
    r_{it} - r_{ft} = \alpha_i + \beta_{ia} a_t + \beta_{ig}  g_t + \beta_{iu} W_{ut} + \beta_{id} W_{dt} +  \varepsilon_{it},
    \label{eq:portfolio sorting}
\end{align}
where equation \eqref{eq:model log SDF} implies that stocks with high $\beta_{ia}$ and $\beta_{ig}$ should have higher expected excess returns and stocks with high $\beta_{iu}$ and $\beta_{id}$ should have lower expected excess returns. For each year $t$, I compute stock exposure to factors on a 15-year rolling window from $t-14$ to $t$ using \eqref{eq:portfolio sorting}, and then sort stocks into five portfolios on each beta both separately (one-way sort) and pairwise (two-way sort). Then I construct value and equal-weighted portfolios over the subsequent year $t+1$ and compute average out-of-sample excess returns for each portfolio. 

Table \ref{tab:one way sort} provides evidence of a significant return spread in one-way beta sorted portfolios. In particular, the highest quintile upstream propagation beta portfolio earns -11.42\% lower annual returns than the lowest quintile portfolio, while the highest quintile downstream propagation beta portfolio earns -4.18\% lower annual returns than the lowest quintile portfolio. Both return spreads are statistically significant, although more pronounced for upstream propagation beta sorted portfolios.\footnote{I also test for monotonicity of returns in upstream and downstream propagation betas, following \citet{timmermannMRtest}. I reject this null hypothesis at the 10\% level for upstream beta sorted portfolios, but fail to reject for downstream beta sorted portfolios.} This is consistent with \citet{herskovic_firm_2020}, who argue that upstream propagation is the more important channel. 

I also observe a return spread in post-sample alphas from the CAPM and Fama and French (FF3) three factor models, which implies that network propagation risk is not captured by market returns or FF3 factors. In light of the variance results of Section \ref{sec:empirical evidence}, I also verify that return spreads are not explained by market volatility or idiosyncratic volatility factors in Table \ref{tab:one way sort controlling for vol}.\footnote{I measure market volatility as the annual volatility of market returns and idiosyncratic volatility following \citet{HKLNCIV2016}.} Additionally, return spreads cannot be explained by differences in return volatility, average size, or average book-to-market ratios. Finally, the average correlation between upstream and downstream propagation betas is 8.6\%, suggesting that the two network factors are distinct sources of risk.

Return spreads are robust to the choice of trailing window length, equal or value weighting in portfolios, control variables, and show up in double-sorted portfolios as well. See Appendix \ref{asec:additional firm results} for more details.

\begin{table}[H]
   \centering
  \caption{\textbf{One-Way Sorted Portfolios on Network Propagation Factors}}
    \begin{tabular}{lcccccccc}
    \toprule
    \multicolumn{9}{c}{Panel A: One-way sorts on upstream propagation beta (controlling for $a_t$ and $g_t$)} \\
    \midrule
          & \multicolumn{1}{c}{1 (Low)} & \multicolumn{1}{c}{2} & \multicolumn{1}{c}{3} & \multicolumn{1}{c}{4} & \multicolumn{1}{c}{5 (High)} & \multicolumn{1}{c}{H-L} & \multicolumn{1}{c}{t(H-L)} & \multicolumn{1}{c}{MR p-val} \\
    \midrule
   $\mathbb{E}[r] - r_f$ & 18.10  & 12.81 & 10.59 & 9.42  & 6.69  & \multicolumn{1}{r}{-11.42} & \multicolumn{1}{r}{-13.22} & \multicolumn{1}{r}{0.07} \\
    $\alpha_{capm}$ & 0.29  & -0.1  & -0.23 & -0.31 & -1.02 & \multicolumn{1}{r}{-1.32} & \multicolumn{1}{r}{-15.71} & \multicolumn{1}{r}{0.05} \\
     $\alpha_{ff3}$ & 0.08  & -0.09 & -0.22 & -0.29 & -0.54 & \multicolumn{1}{r}{-0.63} & \multicolumn{1}{r}{-8.61} & \multicolumn{1}{r}{0.09} \\
    Volatility (\%) & 15.54 & 13.89 & 13.59 & 13.03 & 19.66 & -     & -     & - \\
    Book-to-market & 0.52  & 0.56  & 0.53  & 0.55  & 0.50   & -     & -     & - \\
    Market value (\$bn) & 6.46  & 16.99 & 10.62 & 15.15 & 9.11  & -     & -     & - \\
    \midrule
    \multicolumn{9}{c}{Panel B: One-way sorts on downstream propagation beta (controlling for $a_t$ and $g_t$)} \\
    \midrule
          & \multicolumn{1}{c}{1 (Low)} & \multicolumn{1}{c}{2} & \multicolumn{1}{c}{3} & \multicolumn{1}{c}{4} & \multicolumn{1}{c}{5 (High)} & \multicolumn{1}{c}{H-L} & \multicolumn{1}{c}{t(H-L)} & \multicolumn{1}{c}{MR p-val} \\
    \midrule
    $\mathbb{E}[r] - r_f$& 13.54 & 13.23 & 11.02 & 9.77  & 9.36  & \multicolumn{1}{r}{-4.18} & \multicolumn{1}{r}{-7.56} & \multicolumn{1}{r}{0.25} \\
    $\alpha_{capm}$& -0.04 & -0.18 & -0.28 & -0.38 & -0.60  & \multicolumn{1}{r}{-0.56} & \multicolumn{1}{r}{-4.78} & \multicolumn{1}{r}{0.00} \\
    $\alpha_{ff3}$ & -0.11 & -0.14 & -0.23 & -0.28 & -0.36 & \multicolumn{1}{r}{-0.25} & \multicolumn{1}{r}{-3.62} & \multicolumn{1}{r}{0.03} \\
     Volatility (\%)  & 15.44 & 13.95 & 18.58 & 12.99 & 13.88 & -     & -     & - \\
    Book-to-market & 0.52  & 0.56  & 0.55  & 0.53  & 0.51  & -     & -     & - \\
    Market value (\$bn) & 15.84 & 7.45  & 4.54  & 17.6  & 12.72 & -     & -     & - \\
    \midrule
    \multicolumn{9}{c}{Panel C: One-way sorts on upstream propagation beta (no controls)} \\
    \midrule
          & \multicolumn{1}{c}{1 (Low)} & \multicolumn{1}{c}{2} & \multicolumn{1}{c}{3} & \multicolumn{1}{c}{4} & \multicolumn{1}{c}{5 (High)} & \multicolumn{1}{c}{H-L} & \multicolumn{1}{c}{t(H-L)} & \multicolumn{1}{c}{MR p-val} \\
    \midrule
    $\mathbb{E}[r] - r_f$ & 15.15 & 12.61 & 11.46 & 9.44  & 7.23  & \multicolumn{1}{r}{-7.91} & \multicolumn{1}{r}{-11.57} & \multicolumn{1}{r}{0.07} \\
    $\alpha_{capm}$ & 0.09  & -0.15 & -0.17 & -0.28 & -1.09 & \multicolumn{1}{r}{-1.18} & \multicolumn{1}{r}{-17.53} & \multicolumn{1}{r}{0.26} \\
    $\alpha_{ff3}$ & -0.12 & -0.14 & -0.18 & -0.22 & -0.58 & \multicolumn{1}{r}{-0.46} & \multicolumn{1}{r}{-9.96} & \multicolumn{1}{r}{0.31} \\
     Volatility (\%)  & 15.26 & 14.23 & 13.57 & 12.61 & 20.97 & -     & -     & - \\
    Book-to-market & 0.54  & 0.58  & 0.52  & 0.52  & 0.50   & -     & -     & - \\
    Market value (\$bn) & 6.87  & 17.48 & 10.92 & 16.38 & 6.56  & -     & -     & - \\
    \midrule
    \multicolumn{9}{c}{Panel D: One-way sorts on downstream propagation beta (no controls)} \\
    \midrule
          & \multicolumn{1}{c}{1 (Low)} & \multicolumn{1}{c}{2} & \multicolumn{1}{c}{3} & \multicolumn{1}{c}{4} & \multicolumn{1}{c}{5 (High)} & \multicolumn{1}{c}{H-L} & \multicolumn{1}{c}{t(H-L)} & \multicolumn{1}{c}{MR p-val} \\
    \midrule
    $\mathbb{E}[r] - r_f$ & 12.66 & 11.94 & 11.8  & 8.34  & 5.13  & \multicolumn{1}{r}{-7.53} & \multicolumn{1}{r}{-8.65} & \multicolumn{1}{r}{0.42} \\
    $\alpha_{capm}$ & -0.15 & -0.18 & -0.19 & -0.4  & -0.64 & \multicolumn{1}{r}{-0.49} & \multicolumn{1}{r}{-11.37} & \multicolumn{1}{r}{0.41} \\
    $\alpha_{ff3}$  & -0.10  & -0.21 & -0.21 & -0.29 & -0.32 & \multicolumn{1}{r}{-0.22} & \multicolumn{1}{r}{-4.76} & \multicolumn{1}{r}{0.44} \\
     Volatility (\%)  & 14.09 & 13.9  & 14.44 & 13.22 & 31.95 & -     & -     & - \\
    Book-to-market & 0.54  & 0.49  & 0.58  & 0.51  & 0.54  & -     & -     & - \\
    Market value (\$bn) & 15.97 & 12.79 & 6.33  & 16.88 & 6.34  & -     & -     & - \\
    \bottomrule
     \end{tabular}%
    \fnote{\scriptsize \textit{Notes:} This table reports average excess returns and post-sample alphas in annual percentages for value-weighted portfolios sorted into quintiles on annual upstream and downstream propagation factors. Sample is between 1997-2021 for more than 10,000 stocks belonging to the BEA 66 non-government industry classifications. Panels A and B control for productivity growth and federal procurement demand growth, while Panels C and D have no controls. I also report average return volatility, book-to-market ratio and market value for each portfolio. To test for significant return spreads,  I report $t$-statistics for the null hypothesis $H_0: xr_{5} = xr_{1}$, where $xr_q$ is the average return of the $q^{th}$ quintile single sorted portfolio. Moreover, I report p-values for the test $H_0: xr_{q+1} < xr_q  \forall q \leq 4$, calculated via bootstrap following \citet{timmermannMRtest}. }
  \label{tab:one way sort}%
\end{table}%

\newpage
\subsection{Verifying Macroeconomic Predictions}

Equation \eqref{eq:consumption growth} predicts that upstream and downstream propagation factors should be negatively correlated with consumption, output growth, and aggregate dividend growth. To test this, I construct aggregate series between 1997-2021 for consumption and output growth from the National Income and Product Accounts (NIPA) and corporate dividend growth from BEA data. Then I regress each outcome on network propagation factors, controlling for aggregate productivity and federal procurement demand growth. I standardize each variable to have zero mean and unit standard deviation. Consistent with the predictions of the model, Table \ref{tab:network risk and macro factors} reports negative and statistically significant coefficients on both upstream and downstream propagation risk factors. The factors explain a large portion of time variation in consumption, output, and dividend growth with $R^2$ values of 56\%, 68\%, and 26\%, respectively. 

The coefficients on downstream propagation are -0.17 ($t = -1.89$), -0.60 ($t = -2.57$), and -0.01 ($t = -0.52$) for aggregate consumption, output, and dividend growth regressions, respectively. On the other hand, the coefficients on upstream propagation are not significant for aggregate consumption and output growth regressions, although the coefficient in the dividend growth regression is $-0.03$ ($t = -1.80$). Additionally, the coefficients on upstream propagation are significant when the dependent variable is limited to only durable consumption or output growth, -0.111 ($t = -2.39$) and -1.38 ($t=-2.39$), respectively. This suggests that durable consumption is more sensitive to upstream (supply-side) risk.

\begin{table}[H]
    \centering
    \caption{\textbf{Network Propagation and Macroeconomic Factors}}
    \begin{tabular}{lccccccc}
     \toprule
    \multicolumn{1}{c}{Variable} & $\Delta c_{t}$   &  $\Delta c_{t}^{dur}$   &  $\Delta c_{t}^{nondur}$  & $\Delta y_{t}$    & $\Delta y_{t}^{dur}$     & $\Delta y_{t}^{nondur}$    & $\Delta D_{t}$ \\
    \midrule
    $W_{ut}$    & 0.003 & -0.111** & -0.158 & -0.112 & -1.383** & -0.416 & -0.033* \\
          & (0.103) & (0.048) & (0.087) & (0.275) & (0.607) & (0.229) & (0.021) \\
    $W_{dt}$    & -0.174* & -0.031 & -0.022 & -0.598** & -0.387 & -0.058 & -0.010 \\
          & (0.074) & (0.054) & (0.101) & (0.222) & (0.684) & (0.267) & (0.021) \\
    $a_t$  & 0.321** & 0.128** & 0.197** & 1.018** & 1.607** & 0.520** & 0.018 \\
          & (0.114) & (0.055) & (0.108) & (0.271) & (0.689) & (0.285) & (0.010) \\
    $g_t$    & 0.271 & -0.056 & 0.604 & -0.011 & -0.704 & 1.592 & 0.029 \\
          & (0.332) & (0.276) & (0.638) & (0.956) & (3.450) & (1.682) & (0.109) \\\midrule
    Intercept & -0.485 & -0.111 & -0.205 & 1.218 & 4.09  & 2.023 & 0.042 \\
    Obs   & 24    & 24    & 24    & 24    & 24    & 24    & 24 \\
   $R^2$ & 0.56  & 0.537 & 0.376 & 0.679 & 0.537 & 0.376 & 0.257 \\
    \bottomrule
    \end{tabular}%
  \fnote{\scriptsize \textit{Notes:} This table reports results of OLS regressions of aggregate consumption growth, output growth, and dividend growth on input-output network propagation risk factors, controlling for productivity and federal procurement demand growth. The columns represent different dependent variables corresponding to aggregate PCE growth, durable consumption growth, non-durable consumption growth, output growth, durable output growth, non-durable output growth, and dividend growth. All series are standardized to have zero mean and unit variance. Sample is at an annual frequency between 1997-2021.}
    \label{tab:network risk and macro factors}
\end{table}%

\section{Conclusion}

In this work, I propose a production-based asset pricing model with input-output networks in which productivity shocks propagate downstream from suppliers to customers and demand shocks propagate upstream from customers to suppliers. This model is consistent with a benchmark reduced-form equation that links output growth of each node (industry or firm) in the network to output growth of other nodes and a node-specific shock. Almost all research in this area assumes that node-specific shocks are idiosyncratic- that is, drawn independently across units. In this work, I prove that the idiosyncratic shock assumption in this reduced-form equation is not consistent with realistic input-output networks. I argue that researchers who make use of network models should account for potential inter-node correlation in propagated shocks. 

When accounting for non-negligible correlation in shocks, the variance expression for output growth gains an additional component which depends on a weighted sum of covariances of shocks between each node's trade partners. As a result, units in the network are exposed to risk associated with the homogeneity of their customer and supplier connections. This generates cross-sectional differences in nodes' ability to substitute away from correlated shocks propagating through the network. I define substitutability as the negative value of the new covariance term, termed concentration ``between" suppliers.  Consistent with theory, I provide empirical evidence that customer and supplier substitutability can explain differences in realized variance across industries. In particular, higher substitutability (lower between-concentration) explains lower realized variance, conditional on other input-output and variance related characteristics. 

Although the covariance between shocks is defined in a purely statistical sense, I also investigate industry characteristics which can provide a structural explanation for high correlation in shocks. To this end, I argue that technological proximity explains correlation in supply-side shocks that propagate downstream, while product similarity explains correlation in demand-side shocks that propagate upstream. As a result, supply chain substitutability can be calculated as a function of distances between units in latent technology and product space. Although there is evidence of other sources of correlation in shocks, this simplification is simple, tractable, and can be incorporated in theoretical models to provide a structural origin for correlation in shocks. 

I incorporate this mechanism in an extension of the network model which directly models the propagation of shocks between firms in the input-output network. The propensity that shocks propagate is a decreasing function in substitutability, such that firms with more substitutable supply chains are less likely to experience shocks. Moreover, demand shock substitutability is captured by network-weighted dispersion of a firm's customers in latent product space, while supply shock substitutability is captured by network-weighted dispersion of a firm's suppliers in latent technology space. In the extended model, systematic risk is driven by the average propagation of upstream and downstream transmitted shocks. 

Under some reasonable assumptions, I calibrate the model to data on publicly traded firms and construct empirical analogues of upstream and downstream propagation risk factors. I provide evidence that these risk factors are negatively priced in the cross-section of returns. Additionally, consistent with the model's theoretical predictions, both factors are associated with lower aggregate consumption growth, lower output growth, and lower dividend growth. Empirical results suggest the importance of both directions of propagation as significant sources of systematic risk in the economy. 

Future work might investigate in more depth the sources of correlation in supply and demand shocks and generate more granular estimates of the agents' ability to substitute away from them.

\newpage

\bibliography{risk_network_economies.bib}

\newpage
\begin{appendices}

\section{General Equilibrium Model of Input-Output Linkages}
 \label{asec:General Equilibrium Model of Input-Output Linkages}
 In this section, I show that \eqref{eq:reduced form static propagation} can be cast as an outcome of a production-based asset pricing model. This model provides a structural foundation for the theoretical contributions of this work and is closely related to \citet{Acemoglu2012}, \citet{AcemogluAkjitKerr2016}, \citet{Ramirez2017}, and \citet{herskovic_networks_2018}. Consider a competitive economy with $n$ production units (firms or industries) with Cobb-Douglas technology, representative households with constant relative risk aversion (CRRA) preferences over a basket of goods and who work and own shares in all firms and live off wages and dividends, and a government that finances purchases with a lump-sum tax. In this economy, Hicks-neutral productivity shocks propagate downstream from suppliers to customers, while government demand  shocks propagate upstream from customers to suppliers.

 \subsection{Production} Production unit $i$'s output is a constant returns to scale function of labor and intermediate inputs: 
 \begin{align}
     y_{it} = \exp(z_{it}) \ell_{it}^{\alpha_{i\ell}} \prod_{j=1}^n x_{ijt}^{w_{ijt}},
     \label{eq:firm production}
 \end{align}
 where $x_{ijt}$ is the amount of product $j$ used as input by industry $i$ at time $t$, $\ell_{it}$ is labor input, and $z_{it}$ is a Hicks-neural productivity shock, respectively. I assume that for all $i$ and $t$, the labor share of production is positive (i.e., $\alpha_{i\ell} > 0$) and intermediate input shares are non-negative ($w_{ijt} \geq 0$) and sum to the capital share of production (i.e., ($\sum_{j=1}^n w_{ijt} = 1 - \alpha_{i\ell}$). 
 
 Taking both spot market prices and input shares as given, production units optimize dividends (denoted $D_{it}$) as a function of input and labor purchases: 
 \begin{align}
     D_{it} = \max_{\{x_{ijt}\}_{j=1}^n, \ell_{it}} p_{it} y_{it} - \sum_{j=1}^n p_{jt} x_{ijt} - p_{wt}\ell_{it}
     \label{eq:dividends}
 \end{align}
 subject to \eqref{eq:firm production} and $\ell_{it} \in (0,1)$. Suppose further that $M_{t+1}$ is the stochastic discount factor (SDF) that prices all assets in the economy. Then the cum dividend value of firm $i$ (denoted $V_{it}$) satisfies the following Bellman equation: 
 \begin{align}
     V_{it} = D_{it} + \mathbb{E}_t[M_{t+1}V_{i,t+1}].
 \end{align}
 
 \subsection{Government}
The government purchases goods $G_{it}$ from each unit $i$ at time $t$ and finances them via a lump-sum tax $T_t$. Taking  prices as given, the government's financing constraint implies that $T_t = \sum_{i=1}^n p_{it} G_{it}$.

\subsection{Households}
Assume that the representative household owns shares in each unit and has the following preferences:
\begin{align}
    u(c_{1t},...,c_{nt}, \ell) = \frac{1}{1-\gamma} \cdot \bigg(\prod_{i=1}^n c_{it}^{\beta_i} \bigg)^{1-\gamma} \cdot g(\ell_t),
\end{align}
where $c_{it}$ is the consumption of good $i$ with preference weights $\beta_i$ such that $\sum_{i}\beta_i$ = 1 and $g(.) = (1-\ell_t)^\nu$ is a decreasing and differentiable function capturing disutility of labor. Households also have a time-discount factor of $\beta$ and cannot store goods from one period to another. In equilibrium, households hold a zero net position in a risk-free asset and choose to own $\vartheta_{it}$ in each unit according to the following budget constraint:
\begin{align}
    T_t +p_{wt}\ell_t + \sum_{i=1}^n p_{it} c_{it} + \sum_{i=1}^n \vartheta_{i,t+1} (V_{it} - D_{it}) = \sum_{i=1}^n \vartheta_{i,t} V_{it},
    \label{eq:HH budget constraint}
\end{align}
where the right hand side is total value of investments and the left hand side is the sum of taxes paid, wages earned, cost of consumption, and unrealized capital gains, respectively. The household's optimization problem satisfies the Bellman equation:
\begin{align}
U_t = \max_{\{c_{it}, \vartheta_{i,t+1}, \ell\}_{i=1}^n} u(.) + \beta \mathbb{E}_t[U_{t+1}],
\end{align}
subject to \eqref{eq:HH budget constraint}.
\subsection{Equilibrium}

The competitive equilibrium of the economy consists of spot market prices $\{p_{it}\}_{i=1}^n$, consumption bundles  $\{c_{it}\}_{i=1}^n$, share holdings  $\{\vartheta_{it}\}_{i=1}^n$, labor supply $\ell_t$, wages $p_{wt}$, and input bundles $\{x_{ijt}\}_{i,j=1}^n$ such that both production units and households exhibit optimal behavior and good/asset markets clear. 
\subsubsection{Market Clearing}
  In equilibrium, all good markets clear such that:
 \begin{align*}
         y_{it} = \underbrace{c_{it}}_{\text{final consumption demand}} + \underbrace{\sum_{j=1}^n x_{jit}}_{\text{intermediate demand}} +  \underbrace{G_{it}}_{\text{government consumption}},
 \end{align*}
 and all asset markets clear $\vartheta_{it} = 1$ for all $i$ and $t$. 

\subsubsection{Producer Optimality} Taking prices as given, unit $i$'s first order dividend maximizing conditions satisfy 
\begin{align}
   w_{ijt} = \frac{p_{jt} x_{ijt}}{p_{it}y_{it}} \equiv \frac{sales_{j\to i}}{sales_i}
   \label{eq:foc production}
\end{align}
and 
\begin{align}
    \alpha_{i\ell} = \frac{p_{wt}\ell_{it}}{p_{it}y_{it}}
    \label{eq:foc labor}
\end{align}

\subsubsection{Consumer Optimality} Given the Cobb-Douglas aggregation in preferences over goods (i.e., $C_t := \prod_i c_{it}^{\beta_i}$), utility maximizing households consume $\beta_i$ of income on good $i$ and hold shares fixed at $\vartheta_{it} = 1$. More specifically, letting $\lambda_t$ be the Lagrange multiplier for the period $t$ household budget constraint, the first-order condition for consumption is written:
\begin{align}
    \lambda_t = \frac{C_t^{-\gamma}}{p_{it}} \cdot \frac{\partial C_t}{\partial c_{it}}. \label{eq:lagrange multiplier}
\end{align}
This implies that equilibrium consumption satisfies:
\begin{align}
    p_{it}c_{it} = \beta_i \bigg(p_{wt}\ell_t^* + \sum_{j=1}^n D_{jt} - T_t\bigg).
    \label{eq:foc consumption}
\end{align}
where $\ell_t^*$ solves:
\begin{align}
\frac{p_{wt}\ell_t^*}{p_{wt}\ell_t^* + \sum_{j=1}^n D_{jt} - T_t} = -\frac{\ell_t^* g'(\ell_t^*)}{g(\ell_t^*)}
\label{eq:foc labor hh}
\end{align}
\subsubsection{Asset Prices} From \eqref{eq:lagrange multiplier}, the stochastic discount factor can be written:
\begin{align}
    M_{t+1} = \beta \frac{\lambda_{t+1}}{\lambda_t} = \beta \bigg(\frac{C_{t+1}}{C_t}\bigg)^{-\gamma} \frac{p_{it} \cdot \partial C_{t+1} / \partial c_{i,t+1}}{p_{i,t+1}\cdot \partial C_{t} / \partial c_{it}}.
    \label{eq:sdf1}
\end{align}
Following \citet{Herskovic2018JoF}, I assume that prices are normalized such that $p_{it} = \partial C_t / \partial c_{it}$, or equivalently that $\prod_j p_{jt}^{\beta_j} = \prod_j \beta_j^{\beta_j}$ for all $i$ and $t$. This implies the the utility aggregator is equal to the household's consumption expenditure $C_t = \sum_{i=1}^n p_{it}c_{it}$.\footnote{I further assume that $C_t = p_{wt} \ell_t$.}. Then \eqref{eq:sdf1} simplifies to:
\begin{align}
    M_{t+1} = \beta \bigg( \frac{\sum_{i=1}^n p_{i,t+1} c_{i,t+1}}{\sum_{i=1}^n p_{it} c_{it}} \bigg)^{-\gamma}.
\end{align}

\subsubsection{Shock Propagation}
I now derive closed form expressions for the effects of productivity and government demand shocks on output growth in this model. The main takeaway is that output growth is captured by the following reduced form expression:
\begin{align*}
    \mathbf{d}\log \mathbf{y}_t = \mathbf{H}_{down,t} \mathbf{dz}_t + \mathbf{H}_{up,t} \mathbf{dG}_t,
\end{align*}
where $\mathbf{H}_{.,t}$ are $n\times n$ Leontief inverse propagation matrices. I provide a derivation for each component separately. 

\paragraph{Productivity Shocks} Totally differentiate the expression in \eqref{eq:firm production} to obtain:
\begin{align}
    d\log y_{it} = d z_{it} + \alpha_{i\ell} d\log \ell_{it} + \sum_{j=1}^n w_{ijt} d\log x_{ijt}
\end{align}
Totally differentiating \eqref{eq:foc production}, \eqref{eq:foc labor}, and \eqref{eq:foc consumption} and plugging in to this expression yields:
\begin{align*}
     d\log y_{it} & = d z_{it} +  \alpha_{i\ell} d\log \ell_{it} + \sum_{j=1}^n w_{ijt} (d\log y_{it} + d\log p_{it} - d\log p_{jt}) \\
     & =  d z_{it} +  \alpha_{i\ell} (d\log y_{it} - d\log c_{it}) + \sum_{j=1}^n w_{ijt} (d\log y_{it} - d\log c_{it} + d\log c_{jt}).
\end{align*}
Given constant returns to scale ($\alpha_{i\ell} + \sum_j w_{ij}=1$), this expression can be further simplified as follows:
\begin{align*}
    d\log c_i = dz_{it} + \sum_{j=1}^n w_{ijt} d\log c_j
\end{align*}
or in vector notation:
\begin{align*}
    \mathbf{d}\log \mathbf{c}_t = \mathbf{dz}_t + \mathbf{W}_t  \mathbf{d}\log \mathbf{c}_t,
\end{align*}
where $\mathbf{W}_t$ has entries $w_{ijt}$. Note that market clearing and profit maximization conditions together imply that: 
\begin{align*}
    \frac{y_{jt}}{c_{jt}} = 1 + \sum_{i=1}^n w_{ijt} \frac{\beta_i y_{it}}{\beta_j c_{it}}
\end{align*}
which implies that equilibrium consumption growth is equal to equilibrium output growth:
\begin{align}
    \mathbf{d}\log \mathbf{c}_t = \mathbf{d}\log \mathbf{y}_t
\end{align}
and thus that:
\begin{align}
    \mathbf{d}\log \mathbf{y}_t = (\mathbf{I} - \mathbf{W}_t)^{-1} \mathbf{dz}_t,
\end{align}
where $\mathbf{H}_{down,t} :=  (\mathbf{I} - \mathbf{W}_t)^{-1}$ is the Leontief inverse of $\mathbf{W}_t$. Here, the $\mathbf{W}_t$ matrix determines the strength of downstream propagation of productivity shocks. 

\paragraph{Demand Shocks} To study the effects of government spending shocks in the model, normalize $\mathbf{z}_t = 0$ and consider the unit cost function for $i$:
\begin{align*}
    C_{it}(\mathbf{p}_t,p_{wt}) = A_{it} p_{wt}^{\alpha_{i\ell}} \prod_{j=1}^n p_{jt}^{w_{ijt}}, \quad \text{where } A_{it} = \alpha_{i\ell}^{-\alpha_{i\ell}} \prod_{j=1}^n w_{ijt}^{-w_{ijt}}.
\end{align*}
The zero productivity normalization implies zero dividends for production units, and combined with the price normalization for wages, this implies that:
\begin{align*}
\log p_{it} = \log A_{it} + \sum_{j=1}^n w_{ijt} \log p_{jt} 
\end{align*}
Conditional on productivity vector $\mathbf{z}_t$ and defining the vector $\mathbf{a}_t$ with entries $\log A_{it}$, prices are a function of the network and cost but not government purchases: 
\begin{align*}
    \log \mathbf{p}_t = (\mathbf{I} - \mathbf{W}_t)^{-1} \mathbf{a}_t.
\end{align*}
Setting $\nu = 1$ Then \eqref{eq:foc labor hh} and the fact that $T_t = \sum_{i} p_{it} G_{it}$ implies that: 
\begin{align}
    \ell_t = \frac{1}{2} + \frac{1}{2} \sum_{i=1}^n p_{it}G_{it}
\end{align}
and thus:
\begin{align*}
    p_{it} c_{it} = \beta_i [p_{wt}\ell_t - T_t] = \frac{\beta_i}{2}\bigg(1 - \sum_{j=1}^n p_{jt}G_{jt}\bigg).
\end{align*}
Differentiating and combining with the resource constraint and profit maximization conditions yields: 
\begin{align*}
    \frac{d(p_{it} y_{it})}{p_{it}y_{it}} = \sum_{j=1}^n w_{jit} \frac{d(p_{jt}y_{jt}}{p_{it}y_{it}} + \frac{dG_{it}}{y_{it}} - \frac{\beta_i}{2} \sum_{j=1}^n \frac{d(p_{jt} G_{jt})}{p_{it} y_{it}}.
\end{align*}
Since prices are constant (i.e., $d(p_{it}y_{it}) / p_{it}y_{it} = d\log y_{it}$), I can write in vector notation:
\begin{align}
    \mathbf{d} \log \mathbf{y}_t = \mathbf{H}_{up,t} \mathbf{dG_t}, 
\end{align}
where $\mathbf{H}_{up,t} = (\mathbf{I} - \mathbf{W}_t^\top)^{-1} \mathbf{\Lambda}_t$ is the upstream propagation Leontief inverse and $\mathbf{\Lambda}_t$ is a scaling matrix with diagonal entries $(1-\beta_i/2)/p_{it}y_{it}$ and off-diagonal entries $-(\beta_{i}/2)/p_{it}y_{it}$
for row indices $i$.


\newpage

\section{Product Varieties in Latent Space}
\label{asec:product varieties latent space}

Suppose that each industry (or firm) is associated with a random position $z_{it} := (\cos \theta_{it}, \sin \theta_{it})$ on a circular surface on the $3$-dimensional hypersphere $\mathcal{S}^{p+1}$. Suppose that the surface represents the space of varieties in production technology space. I also assume each unit corresponds to a position in latent product variety space, and that product positions are independent of technological positions. The stochastic process for positions is the same in both spaces and depends on changes in the angle $\theta_{it}$ as follows:
\begin{align}
    \theta_{it} = \rho \cdot \theta_{i,t-1} + \epsilon_{it}, \qquad \epsilon_{it} \sim_{iid} \mathcal{N}(0,\sigma_\theta^2),
    \label{eq:angular evolution latent circle}
\end{align}
where $\theta_{it}$ is measured in radians. The distance between two points $i$ and $j$ can then be written: 
\begin{align}
    \delta_{ijt} = \frac{1}{2\pi}|\theta_{it} - \theta_{jt}|.
\end{align}
This implies a correlation structure in the distances between units as follows: 
\begin{align}
    \text{corr}(\delta_{ijt}, \delta_{kmt}) = \begin{cases} 0 & \text{if } i\notin \{k,m\} \cap j\notin \{k,m\},\\
     1 & \text{if } i \in \{k,m\} \cup j \in \{k,m\},
    \end{cases}
    \label{eq:correlation distances},
\end{align}
or equivalently that:
\begin{align}
    \text{cov}(\delta_{ijt}, \delta_{kmt}) = \begin{cases} 0 & \text{if } i\notin \{k,m\} \cap j\notin \{k,m\},\\
     \sqrt{\var(\delta_{ijt})\var(\delta_{kmt})} & \text{if } i \in \{k,m\} \cup j \in \{k,m\},
    \end{cases}
    \label{eq:covariance distances},
\end{align}
where:
\begin{align*}
    \sigma_d^2 := \var(\delta_{ijt}) & = \var(\theta_{it}) + \var(\theta_{jt}) - \mathbb{E}[|\theta_{it} - \theta_{jt}|]^2
= \frac{2\sigma_\theta^2}{1-\varphi^2} - \frac{(4/\pi) \sigma_\theta^2}{1-\varphi^2},
\end{align*}

Define the set of $i$'s $q$-stream located trade partners by $A_{iqt} := \{k: A_{ki}(\mathcal{G}_{qt}) \neq \emptyset \}$. Note that in technology space, $q$ refers to upstream propagation. Combining with equation \eqref{eq:substitutability definition model} yields:
\begin{align*}
    \cov(s_{iqt},s_{jqt}) & = \sum_{k\neq \ell} \sum_{m \neq p} w_{ikt} w_{i\ell t} w_{jmt} w_{jpt} \cdot \cov(\delta_{k\ell t}, \delta_{mpt}) \\
    & = \sum_{k,\ell \in A_{iqt}; m,p \in A_{jqt} } w_{ikt} w_{i\ell t} w_{jmt} w_{jpt} \cdot \sigma_d^2 \cdot \mathbbm{1}\{(m \in A_{iqt}) \cup (p \in A_{iqt})\}
\end{align*}
Additionally, $s_{iqt}$ are jointly distributed as a folded truncated normal with variance-covariance matrix $\mathbf{\Sigma}_{qt} = [\cov(s_{iqt},s_{jqt})]_{ij}$ and mean vector $\boldsymbol{\mu}_{qt} = [\mu_{iqt}]_i$ with entries:
\begin{align*}
    \mu_{iqt} := \mathbb{E}[s_{iqt}] & = - \sigma_d \sqrt{8/\pi} \cdot \sum_{j<k; j,k \in A_{iqt}} w_{ijt} w_{ikt}.
\end{align*}
When $i = j$, I can further simplify as follows:
\begin{align*}
    \var(s_{iqt}) & = 4 \sigma_d^2 \cdot \sum_{j<k; j,k \in A_{iqt}} (w_{ijt} w_{ikt})^2.
\end{align*}
Next, using equation \eqref{eq:propensity function}, I can write the log-odds function as follows:
\begin{align*}
    l_{iqt} := \log \bigg(\frac{p_{iqt}}{1-p_{iqt}}\bigg) & = k\bigg(s_{iqt} - \frac{1}{n} \sum_{i=1}^n s_{iqt}\bigg) = k\bigg(\hat{e}_{in}^\top - \frac{1}{n}\iota_n^\top\bigg)\mathbf{s}_{qt},
\end{align*}
where $\hat{e}_{in}$ is the $i$th column of an $n\times n$ identity matrix  and $\iota_n$ is an $n\times 1$ vector of ones. Equivalently, the vector $\mathbf{l}_{qt} := (l_{1qt},...,l_{nqt})^\top$ can be written:
\begin{align*}
    \mathbf{l}_{qt} & = k\bigg(\mathbf{I}_n - \frac{\iota_n\iota_n^\top}{n}\bigg)\mathbf{s}_{qt} \sim \mathcal{N}\big(\mathbf{B}_{k,n} \boldsymbol{\mu}_{qt}, \mathbf{B}_{k,n}^\top\mathbf{\Sigma}_{qt} \mathbf{B}_{k,n}\big),
\end{align*}
where $\mathbf{B}_{k,n}:= k(\mathbf{I}_n - \iota_n\iota_n^\top/n)$. Notice that $p_{iqt} = F(l_{iqt})$ where $F(x) = (1 + \exp(-x))^{-1}$ and thus the vector $\mathbf{p}_{qt}$ has a logistic normal distribution and thus no closed form representation for its mean vector and variance-covariance matrix. However, I can write $
    \text{median}(\mathbf{p}_{qt}) = F(\mathbf{B}_{k,n} \boldsymbol{\mu}_{qt})
$.


\newpage
\section{Proofs}

\subsection{Proof of Proposition \ref{prop:necessary restrictions on W}}
\label{asec:necessary restrictions on W}
I begin by characterizing the family of matrices $\mathbf{W} \in M_n$ that are consistent with Assumptions \ref{ass:stable weighting matrix} and \ref{ass:idiosyncratic static shocks} with a sequence of if and only if relationships. First define $\mathbf{\Sigma}_y := \mathbb{V}[\mathbf{y}]$ to be the variance-covariance matrix of sectoral production, and notice that Assumption \ref{ass:idiosyncratic static shocks} implies that $\mathbf{\Sigma}_y = (\mathbf{I} - \mathbf{W})^{-1}\mathbf{D} (\mathbf{I} - \mathbf{W}^\top)^{-1}$ is non-negative. This is because $\mathbf{\Sigma}_y$ is the product of non-negative matrices by definition since $ (\mathbf{I} - \mathbf{W})^{-1} = \mathbf{I} + \mathbf{W} + \mathbf{W}^2 + \mathbf{W}^3 +... \geq 0$ and $\mathbf{D} \geq 0$. The former decomposition is only possible when $\rho(\mathbf{W}) \leq 1$. Rearranging terms, I obtain: 
\begin{align*}
     \mathbf{D} & = \mathbb{V}[(\mathbf{I} - \mathbf{W})\mathbf{y}]  = (\mathbf{I} - \mathbf{W}) \mathbf{\Sigma}_y (\mathbf{I} - \mathbf{W}^\top).
\end{align*}
For ease of notation, let $h_{ij} := \mathbbm{1}_{\{i=j\}} - w_{ij}$ and $\rho_{ij}\sigma_i\sigma_j$ to be the $(i,j)$ entries of $\mathbf{H} := \mathbf{I} - \mathbf{W}$ and $\mathbf{\Sigma}_y$, respectively. Then the matrix $\mathbf{Q}_n := \mathbf{H\Sigma H}^\top = [q_{ij}]$ is diagonal with entries:
\begin{align*}
    q_{ij} = \sum_{k=1}^n \sum_{m=1}^n h_{ik} h_{jm} \rho_{km} \sigma_k \sigma_m
\end{align*}
Notice that $\mathbf{Q}_n$ is symmetric and since it is also diagonal, then the following must hold: 
\begin{align*}
    \text{det} \big( \mathbf{Q}_n\big) = \prod_{i=1}^n \bigg( \sum_{k=1}^n \sum_{m=1}^n h_{ik} h_{im} \rho_{km} \sigma_k \sigma_m \bigg).
\end{align*}
Additionally, $\mathbf{P}\mathbf{Q}_n\mathbf{P}^\top$ is also diagonal for any permutation matrix $\mathbf{P} \in M_n$ so I can exchange the ordering of units without loss of generality. Consider the Laplace expansion of $\mathbf{Q}_n$ by minors along row $i$. Set $i=1$ arbitrarily and notice that:
\begin{align*}
    \text{det}\big(\mathbf{Q}_n\big) = q_{11} \cdot \text{det}\big(\mathbf{Q}_n[1,1]\big)  + \sum_{k=2}^n (-1)^{1+k}q_{1k} \cdot \text{det}\big(\mathbf{Q}_{n}[1,k]\big),
\end{align*}
where $\mathbf{Q}_{n}[i,j]$ is a size $n-1$ sub-matrix of $\mathbf{Q}_{n}$ obtained by deleting row $i$ and column $j$. Since the principal sub-matrix $\mathbf{Q}_{n}[1,1]$ is also diagonal, its determinant is $\prod_{i=2}^n q_{ii}$ and thus the second term in the above expression must be zero. This implies that the determinant for any size $2$ principal sub-matrix of $\mathbf{Q}_n$ with size $n-2$ index set $\alpha$ and entries $a_{ij}$ must satisfy:
\begin{align*}
    \text{det} \big( \mathbf{Q}_n[\alpha]\big) = a_{11}a_{22} -  a_{12}a_{21} = a_{11}a_{22}
\end{align*}
Note that the indices (1,2) refer without loss of generality to any arbitrary pair of sectors in the set $\{1,...,n\}$, and all of the following results must hold for all $n\text{C}2$ pairwise combinations of units. Substitute terms and notice that all principal sub-matrices of a symmetric matrix must also be symmetric (i.e., $a_{12} = a_{21}$) to obtain the equivalent restriction:
\begin{align*}
    0  = a_{12} & = w_{12} \sigma_2^2 - \rho_{12} \sigma_1 \sigma_2 (1+w_{21}w_{12}) + w_{21}\sigma_1^2 
\end{align*}
Or identically: 
\begin{align}
\label{eq:condition for diagonal idiosyncratic variance}
         w_{21} \mathbb{V}[y_1] + w_{12} \mathbb{V}[y_2] - \text{cov}(y_1,y_2) (1+w_{12}w_{21}) = 0,
\end{align}
where $\mathbf{V}[y_i]$ and $\text{cov}(y_i,y_j)$ are the $(i,i)$th and $(i,j)$th entries of $\mathbf{\Sigma_y}$, respectively. I consider two different cases. 
\paragraph{Case 1.} At least one of $w_{21}$ or $w_{12}$ is zero. If both are zero, then \eqref{eq:condition for diagonal idiosyncratic variance} holds trivially. However, if $w_{ij} = 0$ for all $i\neq j$, then $\mathbb{W}$ no longer has full rank. If only one is zero (e.g., $w_{12} = 0$), this implies that $\text{cov}(y_1,y_2) = w_{21}\mathbb{V}[y_1]$ which is consistent with \eqref{eq:reduced form static propagation}. 

\paragraph{Case 2.} Both $w_{21}$ and $w_{12}$ are positive. Redefine $\tilde{y}_1:= y_1/\sqrt{w_{21}}$ and $\tilde{y}_2 := y_2 / \sqrt{w_{12}}$ and apply the Cauchy-Schwarz Inequality:
\begin{align}
\label{eq:cauchy schwarz inequality proof 1}
    \frac{(\mathbb{V}[\tilde{y}_1] + \mathbb{V}[\tilde{y}_2])^2}{(w_{21}w_{12})\cdot  (1+w_{21}w_{12})^2} = \text{cov}(\tilde{y}_1,\tilde{y}_2)^2  \leq \mathbb{V}(\tilde{y}_1)\mathbb{V}(\tilde{y}_2)
\end{align}
For now write $c := w_{21}w_{12} (1+w_{21}w_{12})^2$:
\begin{align*}
    \mathbb{V}[\tilde{y}_1]^2 +  \mathbb{V}[\tilde{y}_2]^2  +  2\mathbb{V}[\tilde{y}_1] \mathbb{V}[\tilde{y}_2] \leq c \mathbb{V}(\tilde{y}_1)\mathbb{V}(\tilde{y}_2) \\
     \mathbb{V}[\tilde{y}_1]^2 +  \mathbb{V}[\tilde{y}_2]^2  +  (2 - c)\mathbb{V}[\tilde{y}_1] \mathbb{V}[\tilde{y}_2] \leq 0
\end{align*}
The left hand side is a quadratic equation in $\mathbf{R}_+^2$ of the variables $\mathbf{V}[\tilde{y}_1]$ and $\mathbf{V}[\tilde{y}_2]$. Suppose without loss of generality that $\mathbb{V}[\tilde{y}_1] = \mathbb{V}[\tilde{y}_2] > 0$, then $c$ must satisfy:
\begin{align*}
    (4-c)\mathbb{V}[\tilde{y}]^2 \leq 0,
\end{align*}
which implies that $c \geq 4$, or equivalently:
\begin{align*}
    w_{21}w_{12}(1+w_{12}w_{21})^2\geq 4
\end{align*}
Recall that since $\mathbf{W}$ is a non-negative matrix, any pair of weights which satisfy this inequality must have $w_{12}w_{21} \geq 1$. Note that due to \eqref{eq:cauchy schwarz inequality proof 1}, this restriction is necessary but not sufficient. Moreover, since the spectral radius $\rho(\mathbf{W}) \leq 1$, there must be at least one row or column bounded above by one (see e.g., Theorem 8.1.22 in Horn and Johnson (2013)). 

\newpage
\subsection{Proof of Proposition \ref{prop:aggregate consumption growth}}
\label{asec:proof of aggregate consumption growth}
Assuming $\beta_i = 1/n$ for all $i$, equilibrium consumption expenditure is given by $C_t = \sum_{i=1}^n p_{it} c_{it} = \prod_{i=1}^n c_{it}^{1/n}$.   Given the price normalization in Section A.4.4. from Appendix \ref{asec:General Equilibrium Model of Input-Output Linkages}, I can write:
\begin{align*}
    \Delta c_{t+1} := \log(C_{t+1}/C_t) & = \log\bigg(\prod_{i=1}^n \bigg(\frac{c_{i,t+1}}{c_{it}}\bigg)^{1/n}\bigg) \\
    & = \frac{1}{n}\sum_{i=1}^n \log(c_{i,t+1}/c_{it}) \\
     & = \frac{1}{n}\sum_{i=1}^n \log(y_{i,t+1}/y_{it})
\end{align*}
Plugging in the reduced form from \eqref{eq:firm level cash flows} yields: 
\begin{align*}
    \Delta c_{t+1} & = \frac{1}{n} \sum_{i=1}^n \big(\Delta z_{i,t+1} + \Delta g_{i,t+1}\big) \\
    & = \gamma_u \cdot \frac{1}{n}\sum_{i=1}^n a_{t+1} + \gamma_d \cdot \frac{1}{n}\sum_{i=1}^n g_{t+1} - \beta_u \cdot \frac{1}{n}\sum_{i=1}^n \varepsilon_{iu,t+1} - \beta_d \cdot \frac{1}{n}\sum_{i=1}^n \varepsilon_{id,t+1}
\end{align*}

\subsection{Proof of Proposition \ref{prop:concentration of network factors}}
\label{asec:proof concentration of network factors}
The proof of Proposition \ref{prop:distribution of consumption growth} established the asymptotic normality of $W_{nqt}$ for all $q$ and $t$. I omit subscripts and write:
\begin{align*}
    \sigma_n^2 = \var(W_{n}) & = \frac{1}{n^2}\sum_{i=1}^n \var[\varepsilon_{i}] + \frac{1}{n^2}\sum_{i\neq j} \text{cov} (\varepsilon_{i}, \varepsilon_{j}) \\
    & = \frac{1}{n^2}\sum_{i=1}^n p_{i}(1-p_i) + \frac{1}{n^2}\sum_{i\neq j} \text{cov} (\varepsilon_{i}, \varepsilon_{j}) \cdot \mathbbm{1}\{\mathbbm{1}\{i,j \in A_{i}(\mathcal{G}_{nq}) \cap A_{j}(\mathcal{G}_{nq})\} \}
\end{align*}
When $n>1$. This quantity can be bounded above by:
\begin{align*}
    \\
   \sigma_n^2 &  \leq \frac{1}{n} + \frac{\bar{M}_n}{n} \leq 2\frac{\bar{M}_n}{n} = o(1),
\end{align*}
where the final equality follows from Assumption \ref{ass:bounded maximal dependency 2}. Moreover, for $k>0$ Chebyshev's Inequality yields:
\begin{align*}
    \text{Pr}\big(|W_n - \mathbb{E}[W_n]| \geq 2k\bar{M_n} / n\big) & \leq  \text{Pr}\big(|W_n - \mathbb{E}[W_n]| \geq k\sigma_n\big) \leq \frac{1}{k^2}
\end{align*}

\newpage

\subsection{Proof of Proposition \ref{prop:distribution of consumption growth}}
\label{asec:proof distribution of consumption growth}
This proof is an application of Theorem 2 from \citet{Janson1988}. Fix $t$ and $q$ and omit the time-subscript without loss of generality. Since $\varepsilon_{iq} \sim \text{Bernoulli}(p_{iq})$, the mean and variance of firm cash flow shocks conditional on $p_{iq}$ can be written:
\begin{align*}
    \mathbb{E}[\varepsilon_{iq}] = p_{iq}, \quad \var(\varepsilon_{iq}) = p_{iq}(1-p_{iq}),
\end{align*}
and the covariance between firm shocks can be written:
\begin{align*}
    \cov(\varepsilon_{iq},\varepsilon_{jq}) & = \mathbb{E}[\varepsilon_{iq}\varepsilon_{jq}] - \mathbb{E}[\varepsilon_{iq}] \mathbb{E}[\varepsilon_{jq}]  = \underbrace{\text{Pr}(\varepsilon_{iq} = 1, \varepsilon_{jq}=1)}_{=:p_{ijq}} - p_{iq}p_{jq} > 0.
\end{align*}
Therefore, the mean and variance of $W_{nq} = \frac{1}{n} \sum_{i=1}^n \varepsilon_{iq}$ can be written:
\begin{align*}
    \mu_{nq} := \mathbb{E}[W_{nq}] & = \frac{1}{n}\sum_{i=1}^n \mathbb{E}[\varepsilon_{iq}] = \frac{1}{n}\sum_{i=1}^n p_{iq}\\
    \sigma_{nq}^2 := \var\big[W_{nq}\big] & =  \frac{1}{n^2}\sum_{i=1}^n \var[\varepsilon_{iq}] + \frac{1}{n^2}\sum_{i\neq j} \text{cov} (\varepsilon_{iq}, \varepsilon_{jq}) \\
    & \geq \frac{1}{n^2}\sum_{i=1}^n \var[\varepsilon_{iq}],\\
    & = \frac{1}{n^2}\sum_{i=1}^n p_{iq}(1-p_{iq})\\
    & \geq \frac{1}{n} \cdot \min_{i} p_{iq}
\end{align*}

Let $\bar{D}_{nq}$ denote the maximal number of edges incident to a single vertex in graph $\mathcal{G}_{nq}$. Theorem 2 from \citet{Janson1988} requires that $\mathcal{X}_{n,m} = o(1)$ for some integer $m$, where $\mathbb{X}_{n,m}$ is given by:
\begin{align*}
    \mathcal{X}_{n,m} & := \bigg(\frac{n}{\bar{D}_{nq}}\bigg)^{1/m} \cdot \frac{\bar{D}_{nq}}{n \sigma_{nq}}  = \frac{\bar{D}_{nq}^{1-1/m}}{n}\cdot \frac{1}{n\sigma_{nq}} \leq \frac{\bar{D}_{nq}^{1-1/m}}{n}.
\end{align*}
Assumption \ref{ass:bounded maximal dependency} implies:
\begin{align*}
    \mathcal{X}_{n,2} = o(1),
\end{align*}
and therefore:
\begin{align*}
    (W_{nq} - \mu_{nq}) / \sigma_{nq} \xrightarrow{d} \mathcal{N}(0,1)
\end{align*}
Combine this result with the assumptions that $a_t \sim \mathcal{N}(0,\sigma_a^2)$ and $g_t \sim \mathcal{N}(0,\sigma_g^2)$ to get the final distribution of consumption growth.


\newpage
\section{Upstream and Downstream Propagation Matrices}
\label{asec:constructing industry upstream and downstream propagation matrices}

\subsection{Construction from BEA Data}
 In this section, I discuss a simple method for constructing from the data the upstream and downstream propagation matrices developed in the previous section.\footnote{Similar procedures are discussed in \citet{AcemogluAkjitKerr2016}, \citet{OzdagliWeber2017}, and \citet{GofmanSegalWu2020}.} These matrices capture the strength of a connection between an industry and its customer or supplier industries. Shocks transmitted from customer $i$ to supplier $j$ (supplier $i$ to customer $j$) should depend on the strength of the connection in the upstream (downstream) direction. I construct these matrices directly from the BEA make and use tables described in \href{https://www.bea.gov/sites/default/files/methodologies/IOmanual_092906.pdf}{Horowitz, Planting, et al. (2006)}. Consider again an economy with $n$ industries.
 
 \paragraph{The Make Table} I extract from the BEA make table an $n\times n$ industry-by-commodity matrix with entries: 
 \begin{align*}
     (MAKE)_{ij} = OUT_{i\to j} \equiv \text{dollar value of commodity $j$ produced by industry $i$}
 \end{align*}
 Note that the BEA makes a slight distinction between commodities and industries, since in principle an industry might produce another industry's commodity as a by-product of its own output. Next, I denote the total production of commodity $j$ by $OUT_j := \sum_{i=1}^n OUT_{i\to j}$. Using the notation in \href{https://www.bea.gov/sites/default/files/methodologies/IOmanual_092906.pdf}{Horowitz, Planting, et al. (2006)}, I define the \emph{market share matrix} with the following entries:
 \begin{align*}
     (MKTSHARE)_{ij} = \frac{(MAKE)_{ij}}{OUT_j} = \frac{OUT_{i\to j}}{OUT_j}
 \end{align*}
 Here, the $(i,j)$ entry describes the share of industry $i$ in the total production of commodity $j$. Equivalently, I can write $MKTSHARE = MAKE \odot (\boldsymbol \iota_{n} \cdot \mathbf{S})$ where $\odot$ denotes the Hadamard (elementwise) product and $S = (OUT_1^{-1},...,OUT_n^{-1})$ is the $1\times n$ scaling vector. 
 \paragraph{The Use Table} Similarly, I extract from the BEA use table an $n\times n$ commodity-by-industry matrix with entries: 
 \begin{align*}
     (USE)_{ij} = IN_{i\to j} \equiv \text{dollar value of commodity $i$ used as input by industry $j$}
 \end{align*}
 Define the total output of industry $i$ by $y_i$.\footnote{I calculate total industry output from the BEA use table as the sum of total intermediates, scrap, and value added. An industry's value added is defined by the BEA as the ``market value it adds in production, or the difference between the price at which it sells its products and the cost of the inputs it purchases from other industries". } Then I construct the \emph{input requirement matrix} by rescaling the value of an industry's inputs by the industry's total value as measured by output. The entries of this matrix are given by:
 \begin{align}
     (INPUTREQ)_{ij} = \frac{(USE)_{ij}}{y_j} = \frac{IN_{i\to j}}{y_j}.
 \end{align}
 The $(i,j)$ entry of the above matrix describes the importance of industry $j$'s inputs from industry $i$ relative to $j$'s total size. 
 
 \paragraph{Scrap Adjustment} The BEA input-output tables include scrap as a commodity which includes any by-products of production with zero market demand. I therefore redefine the total output of an industry as the non-scrap output. Mathematically, this adjustment is implemented using the non-scrap ratio, calculated as follows: 
 \begin{align*}
     \theta_i = \frac{y_i - scrap_i}{y_i},
 \end{align*}
 where $scrap_i$ denotes the total scrap produced by industry $i$. I then write the entries of the scrap-adjusted market share matrix as follows:
 \begin{align*}
     \Tilde{(MKTSHARE)}_{ij} = \frac{OUT_{i\to j}}{OUT_j} \cdot \frac{1}{\theta_j}
 \end{align*}
 \paragraph{Direct Requirements} From the market share and input requirements matrices, I then construct the industry-by-industry direct requirement table, denoted by $\mathbf{W}$:
 \begin{align*}
     \underbrace{\mathbf{W}}_{\text{industry} \times \text{industry}} = \underbrace{\Tilde{MKTSHARE}}_{\text{industry} \times \text{commodity}} \cdot \underbrace{(INPUTREQ)}_{\text{commodity} \times \text{industry}}.
 \end{align*}
 To understand this construction, consider the entries of $\mathbf{W}$:
 \begin{align*}
     (\mathbf{W})_{ij} = \frac{1}{y_j} \sum_{k=1}^n \underbrace{ \frac{OUT_{i\to k}}{OUT_k \cdot \theta_k}}_{\text{share of $i$ in producing $k$}} \cdot \underbrace{IN_{k \to j}}_{\text{use of $k$ by $j$}}.
 \end{align*}
 Here, the $(i,j)$ entry captures industry $j$'s total dependence on inputs from industry $i$ relative to it's total output $y_j$. The key assumption here is as follows. If industry $j$ purchases $X$ of commodity $K$, then the proportion of $K$ coming from industry $i$ is equal to $i$'s adjusted market share of production of $K$. This is a reasonable assumption on average. Given this assumption, the following identity holds:
 \begin{align*}
     (\mathbf{W})_{ij} \equiv \frac{SALES_{i\to j}}{SALES_j} \iff (\mathbf{W}^\top)_{ij} = \frac{SALES_{j\to i}}{SALES_i}
 \end{align*}
 The downstream weighting matrix is thus defined: 
 \begin{align*}
     \mathbf{A} \equiv \mathbf{W}_{down} := \mathbf{W}^\top,
 \end{align*}
 whose $(i,j)$ entry in the above matrix represents the dependence of industry $i$ on input from industry $j$ (i.e., shocks to supplier $j$ propagate downstream to customer $i$ according to the corresponding downstream weight). The sum of row $i$ in this matrix is equal to $x_i/y_i$ where $x_i$ is industry $i$'s total input purchases relative to its size (normalized by $i$'s total output). For the upstream weighting matrix, I require the intermediate rescaling matrix $\mathbf{R}$ with entries $(\mathbf{R})_{ij} = y_j / y_i$. The upstream weighting matrix is thus defined:
 \begin{align*}
     \hat{\mathbf{A}}^\top \equiv \mathbf{W}_{up}:= \mathbf{W} \odot \mathbf{R},
 \end{align*}
 where $\odot$ is the Hadamard (elementwise) product. The $(i,j)$ entry in the above matrix represents the dependence of industry $i$ on sales to industry $j$ (i.e., shocks to customer $j$ propagate upstream to supplier $i$ according to the corresponding upstream weight) 
 \begin{align*}
     (\mathbf{W}\odot\mathbf{R})_{ij} = \frac{SALES_{i\to j}}{SALES_i}.
 \end{align*}

 \subsection{Descriptive Statistics}
 \label{asec:summary input output tables}
 
 \begin{figure}[H]
    \centering
     \subfloat[\centering Kernel Density]{{\includegraphics[width=.5\textwidth]{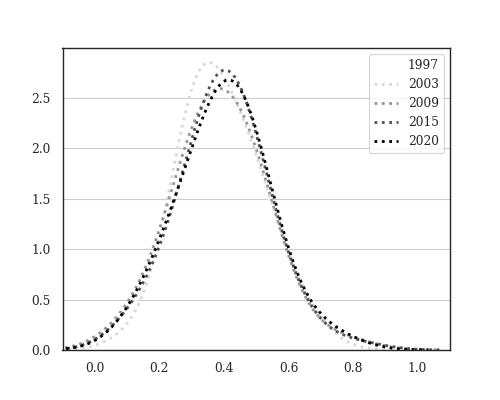} }}%
    \subfloat[\centering Empirical CCDF]{{\includegraphics[width=.5\textwidth]{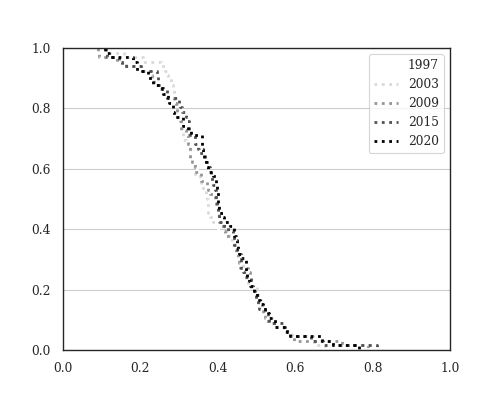} }}%
    \caption{\textbf{Weighted In-Degree Distribution}}
    \fnote{\scriptsize \textit{This figure visualizes the empirical distribution of weighted in and out-degrees across 66 non-government industries as defined in the Bureau of Economic Analysis (BEA) Make and Use Tables. The left panel shows the Gaussian kernel density estimate of the distribution, and the right panel shows the empirical counter-cumulative distribution function.}}
    \label{fig:in degree distribution}
\end{figure}

\begin{figure}[H]
    \centering
     \subfloat[\centering Kernel Density]{{\includegraphics[width=.5\textwidth]{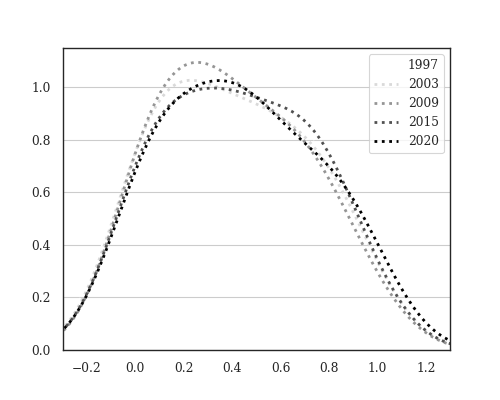} }}%
    \subfloat[\centering Empirical CCDF]{{\includegraphics[width=.5\textwidth]{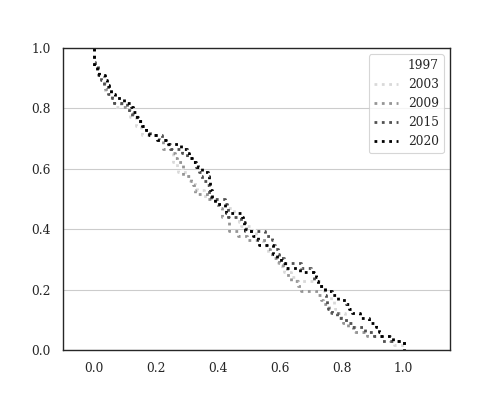} }}%
    \caption{\textbf{Weighted Out-Degree Distribution}}
    \fnote{\scriptsize \textit{This figure visualizes the empirical distribution of weighted in and out-degrees across 66 non-government industries as defined in the Bureau of Economic Analysis (BEA) Make and Use Tables. The left panel shows the Gaussian kernel density estimate of the distribution, and the right panel shows the empirical counter-cumulative distribution function.}}
    \label{fig:out degree distribution}
\end{figure}

\begin{figure}[H]
    \centering
     \subfloat[\centering Kernel Density]{{\includegraphics[width=.5\textwidth]{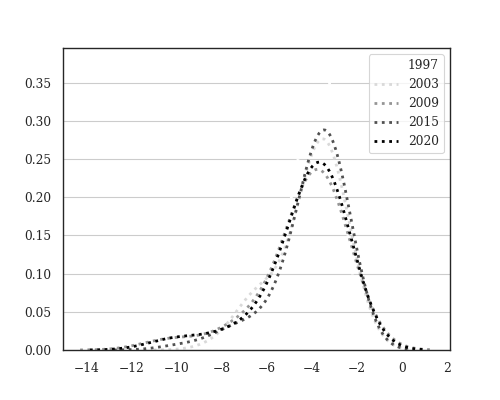} }}%
    \subfloat[\centering Empirical CCDF]{{\includegraphics[width=.5\textwidth]{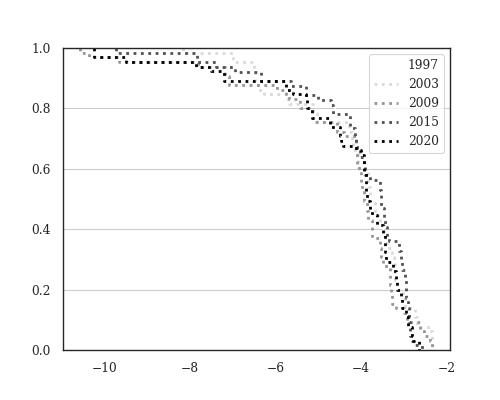} }}%
    \caption{\textbf{Distribution of Downstream Centrality}}
    \fnote{\scriptsize \textit{This figure visualizes the empirical distribution of weighted log eigenvector centrality of the upstream propagation matrix across 66 non-government industries as defined in the Bureau of Economic Analysis (BEA) Make and Use Tables. The left panel shows the Gaussian kernel density estimate of the distribution, and the right panel shows the empirical counter-cumulative distribution function.}}
    \label{fig:downstream centrality}
\end{figure}

\begin{figure}[H]
    \centering
     \subfloat[\centering Kernel Density]{{\includegraphics[width=.5\textwidth]{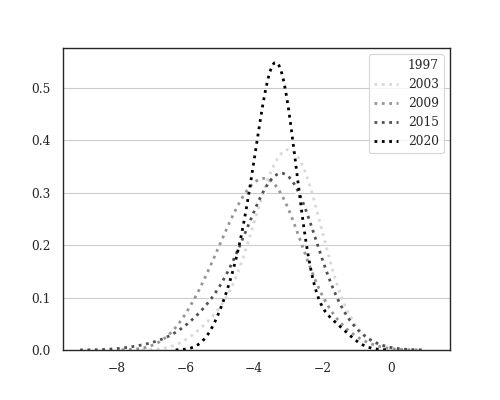} }}%
    \subfloat[\centering Empirical CCDF]{{\includegraphics[width=.5\textwidth]{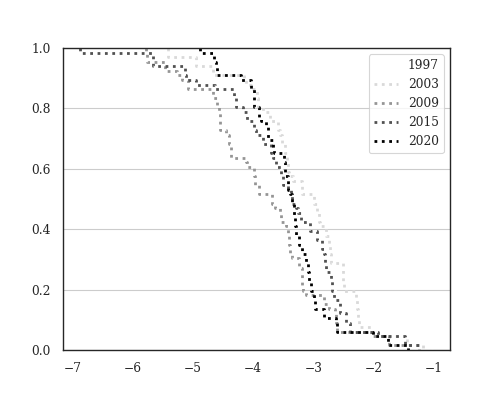} }}%
    \caption{\textbf{Distribution of Upstream Centrality}}
    \fnote{\scriptsize \textit{This figure visualizes the empirical distribution of weighted log eigenvector centrality of the upstream propagation matrix across 66 non-government industries as defined in the Bureau of Economic Analysis (BEA) Make and Use Tables. The left panel shows the Gaussian kernel density estimate of the distribution, and the right panel shows the empirical counter-cumulative distribution function.}}
    \label{fig:upstream centrality}
\end{figure}

\begin{table}[H]
  \centering
  \caption{\textbf{Correlation of Input-Output Shares}}
    \begin{tabular}{r|cccccccc}
    \toprule
    \multicolumn{1}{c}{Entry} & $w_{d,ij,t-1}$ & $w_{u,ij,t-1}$ & $h_{d,ij,t-1}$ & $h_{u,ij,t-1}$ & $w_{d,ij,t}$ & $w_{u,ij,t}$ & $h_{d,ij,t}$ & $h_{u,ij,t}$ \\
    \midrule
    $w_{d,ij,t-1}$ & 1     & 0.086 & 0.970 & 0.096 & 0.991 & 0.086 & 0.961 & 0.096 \\
    $w_{u,ij,t-1}$ &       & 1     & 0.094 & 0.967 & 0.084 & 0.994 & 0.093 & 0.961 \\
    $h_{d,ij,t-1}$ &       &       & 1     & 0.110 & 0.962 & 0.095 & 0.990 & 0.111 \\
    $h_{u,ij,t-1}$ &       &       &       & 1     & 0.095 & 0.962 & 0.109 & 0.994 \\
    $w_{d,ij,t}$ &       &       &       &       & 1     & 0.086 & 0.970 & 0.096 \\
    $w_{u,ij,t}$ &       &       &       &       &       & 1     & 0.095 & 0.967 \\
    $h_{d,ij,t}$ &       &       &       &       &       &       & 1     & 0.111 \\
    $h_{u,ij,t}$ &       &       &       &       &       &       &       & 1 \\
    \bottomrule
    \end{tabular}%
    \fnote{\scriptsize \textit{Note:} This table report correlations between entries in the upstream and downstream propagation matrices $W_{qt}$ and their Leontief inverses $H_{qt}$. I construct annual matrices from the BEA Input-Output Accounts for 66 non-government industries between 1997-2020.}
  \label{tab:correlation of input-output shares}
\end{table}%

\begin{figure}[H]
    \centering
    \subfloat[\centering In-Degree (downstream)]{{\includegraphics[width=8cm]{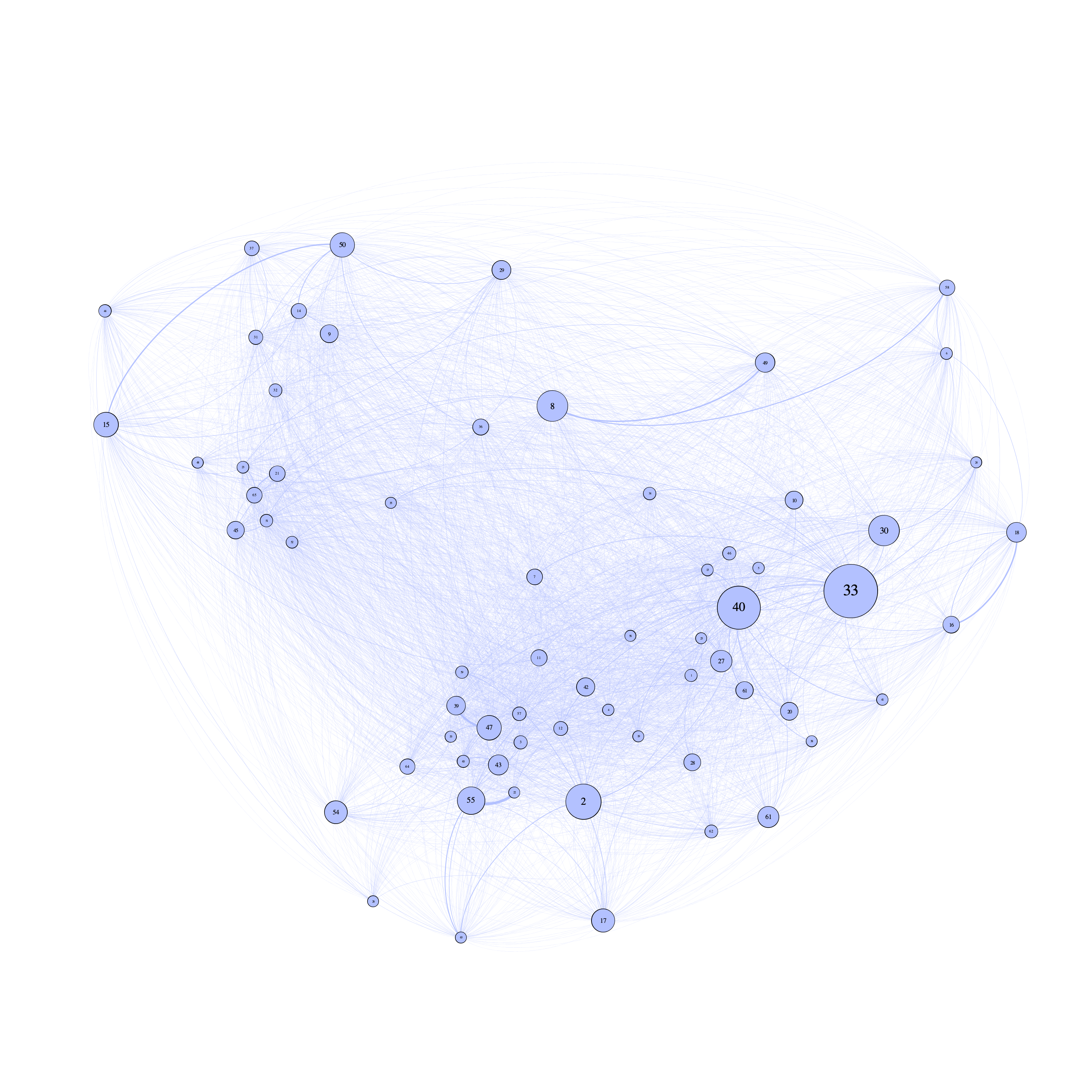} }}%
    \subfloat[\centering Out-Degree (upstream)]{{\includegraphics[width=8cm]{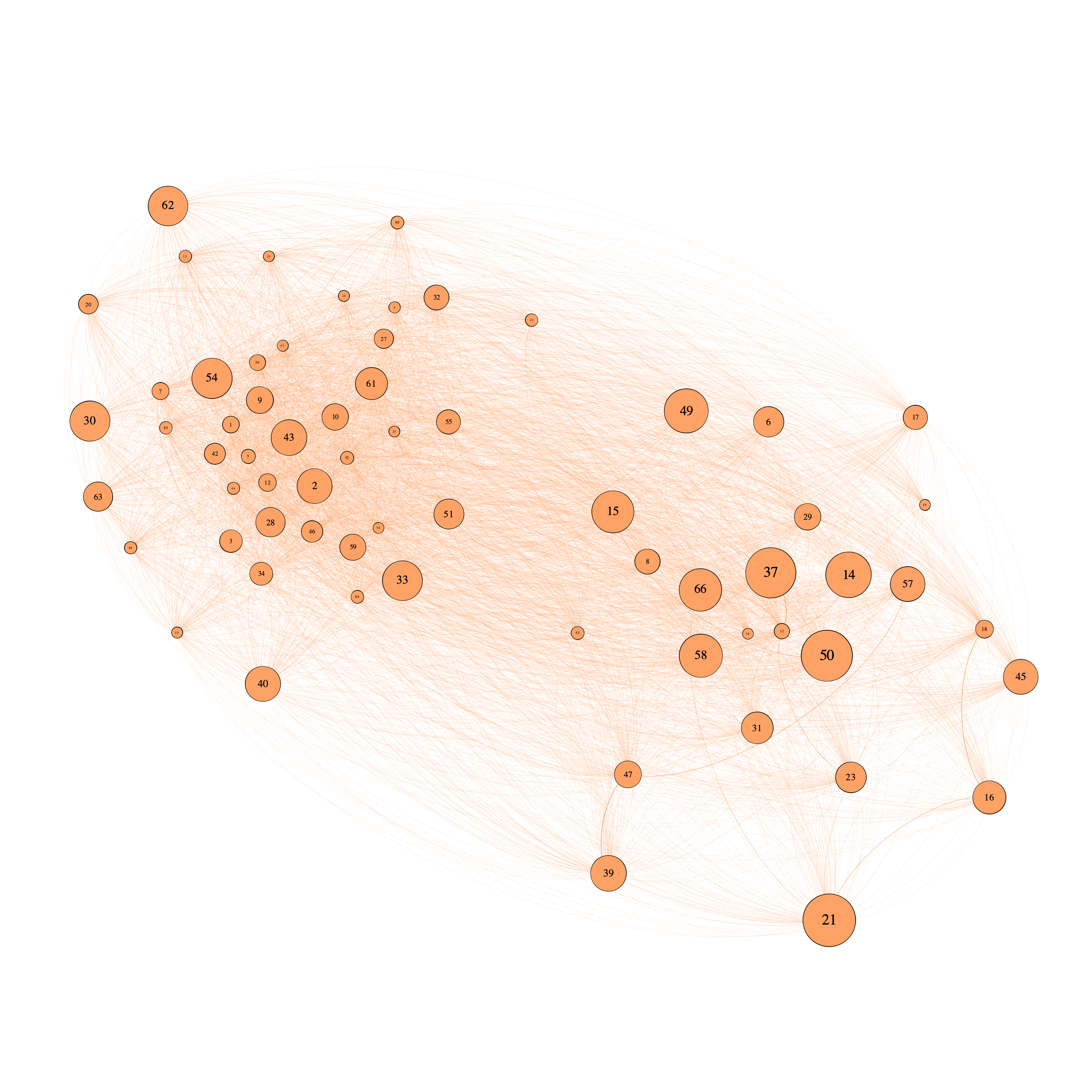} }}%
    \caption{\textbf{Visualizing Weighted Degree by Industry}}
    \fnote{\scriptsize The left panel shows the average weighted in-degree by sector as constructed from the downstream propagation network $\textbf{W}_{down}$. The right panel shows the average weighted in-degree by sector as constructed from the upstream propagation network $\textbf{W}_{up}$. The sectors are numbered identically in both panels, according to the BEA 66 non-government industry classification. }%
    \label{fig:in and out degree graphs sector}%
\end{figure} 


\newpage
\section{Determinants of Network Variance}
\label{asec:determinants of network variance}

\begin{table}[H]
  \centering
  \tiny
  \caption{\textbf{Correlation in Predictors of Realized Variance} }
    \begin{tabular}{r|ccccccccccc}
    \toprule
    \multicolumn{1}{c|}{Predictor} & \multicolumn{1}{c}{$\beta_{vol}$} & \multicolumn{1}{c}{$C_u$} & \multicolumn{1}{c}{$C_d$} & \multicolumn{1}{c}{Between (d)} & \multicolumn{1}{c}{Between (s)} & \multicolumn{1}{c}{Size} & \multicolumn{1}{c}{Across (s)} & \multicolumn{1}{c}{Across (d)} & \multicolumn{1}{c}{VP} & \multicolumn{1}{c}{mkt ivol} & mkt vol \\
\cmidrule{1-12}    $\beta_{vol}$  & \multicolumn{1}{c}{1} & -0.329 & -0.130 & -0.308 & 0.072 & 0.285 & 0.061 & -0.128 & -0.225 & 0.287 & \multicolumn{1}{r}{0.406} \\
    $C_u$ &       & \multicolumn{1}{c}{1} & 0.284 & 0.356 & -0.251 & 0.131 & -0.441 & 0.319 & 0.423 & -0.383 & \multicolumn{1}{r}{-0.396} \\
    $C_d$ &       &       & \multicolumn{1}{c}{1} & 0.588 & -0.018 & -0.128 & -0.116 & 0.288 & 0.289 & -0.003 & \multicolumn{1}{r}{-0.022} \\
    Between (d) &       &       &       & \multicolumn{1}{c}{1} & -0.205 & -0.113 & -0.156 & 0.704 & 0.708 & -0.270 & \multicolumn{1}{r}{-0.311} \\
    Between (s) &       &       &       &       & \multicolumn{1}{c}{1} & -0.133 & 0.056 & -0.163 & -0.313 & 0.073 & \multicolumn{1}{r}{0.054} \\
    Size  &       &       &       &       &       & \multicolumn{1}{c}{1} & -0.230 & 0.018 & -0.111 & -0.386 & \multicolumn{1}{r}{-0.275} \\
    Across (s) &       &       &       &       &       &       & \multicolumn{1}{c}{1} & -0.088 & -0.159 & 0.205 & \multicolumn{1}{r}{0.255} \\
    Across (d) &       &       &       &       &       &       &       & \multicolumn{1}{c}{1} & 0.627 & -0.224 & \multicolumn{1}{r}{-0.249} \\
    VP    &       &       &       &       &       &       &       &       & \multicolumn{1}{c}{1} & -0.279 & \multicolumn{1}{r}{-0.322} \\
    mkt ivol &       &       &       &       &       &       &       &       &       & \multicolumn{1}{c}{1} & \multicolumn{1}{r}{0.965} \\
    mkt vol &       &       &       &       &       &       &       &       &       &       & 1 \\
    \bottomrule
    \end{tabular}%
    \fnote{\scriptsize \textit{Notes:} This table reports the average correlation between predictors in the panel regression from Table \ref{tab:var regressions main}. $C_u$ and $C_d$ refer to upstream and downstream centrality, the $(d)$ and $(s)$ labels denote demand and supply-side concentration, size is average output, and VP is vertical position. Idiosyncratic volatility is calculated from FF3 residual returns. }
  \label{tab:corr vol predictors}%
\end{table}%

\newpage

\begin{table}[H]
  \centering
  \scriptsize
  \caption{\textbf{Network Determinants of Industry Variance (TFP growth shocks)}}
    \begin{tabular}{lcccccc}
    \toprule
    \multicolumn{7}{c}{Panel A: Market Return Variance} \\
    \midrule
          & (1)   & (2)   & (3)   & (4)   & (5)   & (6) \\
    \midrule
    Self-origin (demand) & 0.365*** & 0.246*** &       &       & 0.217*** & 0.119*** \\
          & (0.038) & (0.039) &       &       & (0.021) & (0.022) \\
    Across (demand) & 0.100*** & 0.084*** &       &       & 0.085*** & 0.119*** \\
          & (0.020) & (0.018) &       &       & (0.020) & (0.019) \\
   Between (demand) & 0.043** & 0.010 &       &       & 0.091*** & -0.076*** \\
          & (0.018) & (0.022) &       &       & (0.019) & (0.020) \\
    Self-origin (supply) &       &       & 0.039* & 0.149*** & 0.217*** & 0.119*** \\
          &       &       & (0.029) & (0.038) & (0.021) & (0.022) \\
    Across (supply) &       &       & 0.119*** & 0.147*** & 0.270*** & 0.170*** \\
          &       &       & (0.021) & (0.026) & (0.027) & (0.037) \\
    Between (supply) &       &       & 0.085*** & 0.042** & 0.086*** & 0.179*** \\
          &       &       & (0.019) & (0.019) & (0.025) & (0.034) \\
    Size &       & -0.202*** &       & -0.325*** &       & -0.183*** \\
          &       & (0.030) &       & (0.023) &       & (0.034) \\
    Upstream centrality &       & -0.145** &       & 0.129** &       & -0.238*** \\
          &       & (0.062) &       & (0.051) &       & (0.066) \\
    Downstream centrality &       & 1.956*** &       & 0.296** &       & 2.137*** \\
          &       & (0.145) &       & (0.125) &       & (0.165) \\
    Durability &       & 1.088*** &       & 0.389*** &       & 1.164*** \\
          &       & (0.153) &       & (0.108) &       & (0.156) \\
    Vertical position &       & 3.890*** &       & 0.243*** &       & 4.063*** \\
          &       & (0.116) &       & (0.083) &       & (0.134) \\
    Constant & -2.123 & -0.634 & -2.821 & -0.096 & -1.473 & -0.693 \\
    Obs  & 2359  & 1626  & 2861  & 2138  & 2114  & 1471 \\
    Adj $R^2$ & 0.225 & 0.607 & 0.083 & 0.191 & 0.288 & 0.633 \\
    \midrule
    \multicolumn{7}{c}{Panel B: Cash Flow Variance} \\
    \midrule
          & (1)   & (2)   & (3)   & (4)   & (5)   & (6) \\
    \midrule
    Self-origin (demand) & 0.363*** & 0.009 &       &       & 0.190*** & 0.017 \\
          & (0.072) & (0.093) &       &       & (0.037) & (0.049) \\
   Across (demand) & 0.038 & -0.050* &       &       & 0.011 & 0.099** \\
          & (0.039) & (0.039) &       &       & (0.040) & (0.042) \\
    Between (demand) & 0.136*** & 0.128*** &       &       & 0.156*** & 0.184*** \\
          & (0.037) & (0.042) &       &       & (0.039) & (0.047) \\
    Self-origin (supply) &       &       & 0.026 & 0.286*** & 0.190*** & 0.017 \\
          &       &       & (0.065) & (0.076) & (0.037) & (0.049) \\
    Across (supply) &       &       & 0.250*** & 0.384*** & 0.309*** & 0.107* \\
          &       &       & (0.051) & (0.062) & (0.061) & (0.091) \\
    Between (supply) &       &       & 0.093** & 0.169*** & 0.202*** & 0.162** \\
          &       &       & (0.044) & (0.055) & (0.057) & (0.086) \\
    Size &       & -0.161** &       & -0.333*** &       & -0.155** \\
          &       & (0.064) &       & (0.052) &       & (0.075) \\
    Upstream centrality &       & -0.069 &       & 0.562*** &       & -0.100 \\
          &       & (0.150) &       & (0.132) &       & (0.162) \\
    Downstream centrality &       & 2.193*** &       & -0.070 &       & 2.315*** \\
          &       & (0.343) &       & (0.305) &       & (0.379) \\
    Durability &       & 2.204*** &       & 1.172*** &       & 2.307*** \\
          &       & (0.272) &       & (0.182) &       & (0.285) \\
    Vertical position &       & 3.502*** &       & 1.142* &       & 3.590*** \\
          &       & (0.268) &       & (0.181) &       & (0.320) \\
    Constant & -4.201 & -0.711 & -5.246 & 0.023 & -4.123 & -0.75 \\
    Obs  & 2359  & 1626  & 2861  & 2138  & 2114  & 1471 \\
    Adj $R^2$   & 0.074 & 0.248 & 0.017 & 0.08  & 0.087 & 0.271 \\
    \bottomrule
    \end{tabular}%
  \fnote{\tiny \textit{Notes}: This table reports panel regressions of realized industry variance on a variety of characteristics including the average log variance of supply and demand shocks, log concentration across trade partners, log concentration between trade partners, log total employment (size), log centrality of the upstream and downstream propagation networks, durability of output, vertical position in the supply chain, and industry cluster and year fixed effects. I calculate network components as the average value of 1000 bootstrap samples that randomly drop 10\% of estimated pairwise non-zero correlations. In Panel A, the dependent variable is the log variance of annualized monthly returns on an equal-weighted industry portfolio. In Panel B, the dependent variable is the log variance of total quarterly year-on-year industry sales growth. I obtain return data from CRSP and sales data from Compustat. Concentration between and across trade partners are calculated as in \ref{eq:across empirical} and \ref{eq:between empirical}, where I calculate the variance-covariance matrix of supply and demand shocks directly from four-factor TFP growth in the NBER-CES Database \citep{nbercesdata}. Following \citet{Ahern2013NetworkCA}, I compute industry centrality as the eigenvector centrality of upstream and downstream propagation adjacency matrices. I obtain durability classifications from \citet{Gomes2009durability} and calculate vertical position of each industry as in \citet{Antrs2012} and \citet{GofmanSegalWu2020}. $^{***}$, $^{**}$ and $^{*}$ indicate significance at the 1\%, 5\%, and 10\% levels, respectively. Standard error are clustered at the BEA 15 major industry  group level.  Sample is at an annual frequency from 1988 to 2017 for 479 BEA manufacturing industries.  }
  \label{tab:industry variance panel regressions (tfp)}%
\end{table}%

\newpage
\begin{table}[H]
  \scriptsize
  \centering
  \caption{\textbf{Network Determinants of Industry Variance (federal procurement shocks)}}
    \begin{tabular}{lcccccc}
    \toprule
    \multicolumn{7}{c}{Panel A: Market Return Variance} \\
    \midrule
          & (1)   & (2)   & (3)   & (4)   & (5)   & (6) \\
    \midrule
    Self-origin (demand) & 1.058*** & 0.903*** &       &       & 0.166* & 0.558*** \\
          & (0.157) & (0.200) &       &       & (0.111) & (0.121) \\
    Across (demand) & 0.088** & 0.314*** &       &       & 0.062* & 0.356*** \\
          & (0.041) & (0.059) &       &       & (0.044) & (0.070) \\
    Between (demand) & 0.149*** & 0.153*** &       &       & 0.437*** & 0.261*** \\
          & (0.040) & (0.039) &       &       & (0.055) & (0.077) \\
    Self-origin (supply) &       &       & 0.219*** & 0.363*** & 0.166* & 0.558*** \\
          &       &       & (0.044) & (0.056) & (0.111) & (0.121) \\
    Across (supply) &       &       & 0.142*** & 0.094* & 0.072 & 0.088 \\
          &       &       & (0.036) & (0.052) & (0.072) & (0.079) \\
    Between (supply) &       &       & 0.150*** & 0.062 & -0.063 & -0.13 \\
          &       &       & (0.034) & (0.039) & (0.075) & (0.077) \\
    Size  &       & 0.059 &       & -0.242*** &       & -0.033 \\
          &       & (0.053) &       & (0.047) &       & (0.053) \\
    Upstream centrality &       & -2.434** &       & -4.885*** &       & -2.140* \\
          &       & (1.053) &       & (1.116) &       & (1.269) \\
    Downstream centrality &       & 10.76*** &       & 13.06*** &       & 10.55*** \\
          &       & (2.686) &       & (2.501) &       & (3.043) \\
    Durability &       & 4.128*** &       & 1.991*** &       & 3.884*** \\
          &       & (0.511) &       & (0.407) &       & (0.460) \\
    Vertical position &       & 11.74*** &       & 4.334*** &       & 10.66*** \\
          &       & (0.915) &       & (0.595) &       & (0.930) \\
    Constant & -6.278 & -3.489 & -5.461 & -4.234 & -6.781 & -3.419 \\
    Obs   & 839   & 666   & 1003  & 828   & 839   & 666 \\
    Adj R2 & 0.595 & 0.76  & 0.071 & 0.331 & 0.666 & 0.762 \\
    \midrule
    \multicolumn{7}{c}{Panel B: Cash Flow Variance} \\
    \midrule
          & (1)   & (2)   & (3)   & (4)   & (5)   & (6) \\
    \midrule
    Self-origin (demand) & 0.425 & 1.853* &       &       & 0.267 & 1.025** \\
          & (0.421) & (0.951) &       &       & (0.293) & (0.488) \\
    Across (demand) & 0.057 & 0.790*** &       &       & 0.183* & 0.768*** \\
          & (0.090) & (0.215) &       &       & (0.121) & (0.207) \\
    Between (demand) & 0.309*** & 0.461*** &       &       & 0.524*** & 0.471** \\
          & (0.091) & (0.121) &       &       & (0.146) & (0.199) \\
    Self-origin (supply) &       &       & 0.310*** & 0.004 & 0.267 & 1.025** \\
          &       &       & (0.109) & (0.185) & (0.293) & (0.488) \\
    Across (supply) &       &       & -0.184* & 0.002 & 0.136 & 0.327* \\
          &       &       & (0.101) & (0.139) & (0.193) & (0.258) \\
    Between (supply) &       &       & 0.235** & 0.028 & -0.247* & -0.268 \\
          &       &       & (0.092) & (0.112) & (0.192) & (0.244) \\
    Size  &       & 0.183 &       & -0.406*** &       & -0.141 \\
          &       & (0.209) &       & (0.122) &       & (0.214) \\
    Upstream centrality &       & 1.430 &       & 3.297* &       & 4.077* \\
          &       & (3.309) &       & (2.975) &       & (3.421) \\
    Downstream centrality &       & 5.582 &       & -5.585 &       & 0.810 \\
          &       & (8.923) &       & (6.799) &       & (8.718) \\
    Durability &       & 5.285*** &       & 1.175 &       & 5.536*** \\
          &       & (1.493) &       & (0.847) &       & (1.677) \\
    Vertical position &       & 12.20*** &       & 0.948 &       & 12.73*** \\
          &       & (3.184) &       & (1.236) &       & (4.096) \\
    Constant & -1.223 & -1.809 & -7.991 & 1.81  & -10.389 & -0.263 \\
    Obs   & 839   & 666   & 1003  & 828   & 839   & 666 \\
    Adj R2 & 0.239 & 0.211 & 0.043 & 0.08  & 0.248 & 0.215 \\
    \bottomrule
    \end{tabular}%
  \fnote{\tiny \textit{Notes}: This table reports panel regressions of realized industry variance on a variety of characteristics including the average log variance of supply and demand shocks, log concentration across trade partners, log concentration between trade partners, log total employment (size), log centrality of the upstream and downstream propagation networks, durability of output, vertical position in the supply chain, and industry cluster and year fixed effects. I calculate network components as the average value of 1000 bootstrap samples that randomly drop 10\% of estimated pairwise non-zero correlations In Panel A, the dependent variable is the log variance of annualized monthly returns on an equal-weighted industry portfolio. In Panel B, the dependent variable is the log variance of total quarterly year-on-year industry sales growth. I obtain return data from CRSP and sales data from Compustat. Concentration between and across trade partners are calculated as in \ref{eq:across empirical} and \ref{eq:between empirical}, where I calculate the variance-covariance matrix of changes in total obligations from newly awarded federal procurement contracts measured from FPDS. Following \citet{Ahern2013NetworkCA}, I compute industry centrality as the eigenvector centrality of upstream and downstream propagation adjacency matrices. I obtain durability classifications from \citet{Gomes2009durability} and calculate vertical position of each industry as in \citet{Antrs2012} and \citet{GofmanSegalWu2020}. $^{***}$, $^{**}$ and $^{*}$ indicate significance at the 1\%, 5\%, and 10\% levels, respectively. Standard error are clustered at the BEA 15 major industry level. Sample is at an annual frequency from 1991 to 2011 for 479 BEA manufacturing industries.  }
  \label{tab:industry variance panel regressions (trade)}%
\end{table}%

\newpage
\begin{table}[H]
\scriptsize
  \centering
  \caption{\textbf{Network Determinants of Industry Variance (technological proximity)}}
    \begin{tabular}{lcccccc}
    \toprule
    \multicolumn{7}{c}{Panel A: Market Return Variance} \\
    \midrule
          & (1)   & (2)   & (3)   & (4)   & (5)   & (6) \\
    \midrule
    Self-origin (demand) & 0.257*** & 0.352*** &       &       & 0.286*** & 0.308*** \\
          & (0.046) & (0.043) &       &       & (0.050) & (0.047) \\
    Across (demand) & 0.102*** & 0.120*** &       &       & 0.124*** & 0.145*** \\
          & (0.011) & (0.013) &       &       & (0.011) & (0.013) \\
    Between (demand) & 0.246*** & 0.117*** &       &       & 0.227*** & 0.058*** \\
          & (0.013) & (0.018) &       &       & (0.013) & (0.018) \\
    Self-origin (supply) &       &       & 0.290*** & 0.258*** & 0.229*** & 0.419*** \\
          &       &       & (0.035) & (0.039) & (0.040) & (0.043) \\
    Across (supply) &       &       & 0.009 & 0.032 & 0.223*** & -0.035 \\
          &       &       & (0.027) & (0.036) & (0.031) & (0.042) \\
    Between (supply) &       &       & 0.190*** & 0.168*** & 0.284*** & 0.143** \\
          &       &       & (0.031) & (0.043) & (0.034) & (0.051) \\
    Size  &       & -0.343*** &       & -0.312*** &       & -0.334*** \\
          &       & (0.031) &       & (0.027) &       & (0.030) \\
    Upstream centrality &       & -0.755*** &       & -0.102* &       & -0.543*** \\
          &       & (0.108) &       & (0.087) &       & (0.104) \\
    Downstream centrality &       & 3.933*** &       & 1.052*** &       & 3.632*** \\
          &       & (0.279) &       & (0.190) &       & (0.275) \\
    Durability &       & -0.195* &       & -0.177* &       & -0.076 \\
          &       & (0.112) &       & (0.098) &       & (0.108) \\
    Vertical position &       & 1.669*** &       & 0.130 &       & 2.160*** \\
          &       & (0.152) &       & (0.086) &       & (0.172) \\
    Constant & -7.926 & -1.275 & -3.85 & -0.341 & -8.468 & -1.177 \\
    Obs   & 1994  & 1467  & 1994  & 1467  & 1994  & 1467 \\
    Adj R2 & 0.593 & 0.642 & 0.127 & 0.279 & 0.61  & 0.665 \\
    \midrule
    \multicolumn{7}{c}{Panel B: Cash Flow Variance} \\
    \midrule
          & (1)   & (2)   & (3)   & (4)   & (5)   & (6) \\
    \midrule
    Self-origin (demand) & 0.391*** & 0.521*** &       &       & 0.473*** & 0.548*** \\
          & (0.103) & (0.103) &       &       & (0.102) & (0.103) \\
    Across (demand) & 0.065** & 0.061* &       &       & 0.050* & 0.049* \\
          & (0.028) & (0.033) &       &       & (0.029) & (0.034) \\
    Between (demand) & 0.253*** & 0.139*** &       &       & 0.172** & 0.217* \\
          & (0.030) & (0.045) &       &       & (0.077) & (0.122) \\
    Self-origin (supply) &       &       & -0.019 & -0.245*** & -0.162** & -0.185* \\
          &       &       & (0.076) & (0.087) & (0.079) & (0.097) \\
    Across (supply) &       &       & 0.175*** & 0.138* & 0.072 & 0.160* \\
          &       &       & (0.067) & (0.100) & (0.068) & (0.102) \\
    Between (supply) &       &       & 0.307*** & 0.228** & 0.271*** & 0.179*** \\
          &       &       & (0.076) & (0.119) & (0.031) & (0.048) \\
    Size  &       & -0.298*** &       & -0.332*** &       & -0.298*** \\
          &       & (0.070) &       & (0.066) &       & (0.071) \\
    Upstream centrality &       & 0.113 &       & 0.355* &       & -0.068 \\
          &       & (0.267) &       & (0.276) &       & (0.280) \\
    Downstream centrality &       & 3.217*** &       & 0.311 &       & 3.555*** \\
          &       & (0.665) &       & (0.599) &       & (0.680) \\
    Durability &       & 0.574*** &       & 0.265* &       & 0.481** \\
          &       & (0.210) &       & (0.211) &       & (0.218) \\
    Vertical position &       & 1.922*** &       & -0.311 &       & 1.493*** \\
          &       & (0.323) &       & (0.208) &       & (0.373) \\
    Constant & -10.991 & -1.043 & -5.631 & -0.101 & -11.776 & -1.152 \\
    Obs   & 1994  & 1467  & 1994  & 1467  & 1994  & 1467 \\
    Adj R2 & 0.21  & 0.23  & 0.011 & 0.051 & 0.216 & 0.233 \\
    \bottomrule
    \end{tabular}%
   \fnote{\tiny \textit{Notes}: This table reports panel regressions of realized industry variance on a variety of characteristics including the average log variance of supply and demand shocks, log concentration across trade partners, log concentration between trade partners, log total employment (size), log centrality of the upstream and downstream propagation networks, durability of output, vertical position in the supply chain, and industry cluster and year fixed effects. I calculate network components as the average value of 1000 bootstrap samples that randomly drop 10\% of estimated pairwise non-zero correlations. In Panel A, the dependent variable is the log variance of annualized monthly returns on an equal-weighted industry portfolio. In Panel B, the dependent variable is the log variance of total quarterly year-on-year industry sales growth. I obtain return data from CRSP and sales data from Compustat. Concentration between and across trade partners are calculated as in \ref{eq:across empirical} and \ref{eq:between empirical}, where I calculate the variance-covariance matrix of supply and demand shocks using the weighted technological similarity scores following \citet{bloomtechspillovers2013ecma}. Following \citet{Ahern2013NetworkCA}, I compute industry centrality as the eigenvector centrality of upstream and downstream propagation adjacency matrices. I obtain durability classifications from \citet{Gomes2009durability} and calculate vertical position of each industry as in \citet{Antrs2012} and \citet{GofmanSegalWu2020}. $^{***}$, $^{**}$ and $^{*}$ indicate significance at the 1\%, 5\%, and 10\% levels, respectively. Standard errors are clustered at the BEA 15 major industry group level. Sample is at an annual frequency from 1988 to 2018 for 479 BEA manufacturing industries.  }
  \label{tab:industry variance panel regressions (technological similarity)}%
\end{table}%

\newpage
\begin{table}[H]
\scriptsize
  \centering
  \caption{\textbf{Network Determinants of Industry Variance (product similarity)}}
    \begin{tabular}{lcccccc}
    \toprule
    \multicolumn{7}{c}{Panel A: Market Return Variance} \\
    \midrule
          & (1)     & (2)     & (3)     & (4)     & (5)     & (6) \\
    \midrule
    Self-origin (demand) & 0.136** & 0.354*** &       &       & 0.303*** & 0.321*** \\
          & (0.056) & (0.040) &       &       & (0.046) & (0.041) \\
    Across (demand) & 0.103*** & 0.071*** &       &       & -0.065*** & 0.051*** \\
          & (0.013) & (0.011) &       &       & (0.013) & (0.012) \\
    Between (demand) & 0.131*** & 0.081** &       &       & 0.384*** & 0.058*** \\
          & (0.013) & (0.012) &       &       & (0.014) & (0.016) \\
    Self-origin (supply) &       &       & 0.096*** & 0.056* & -0.047 & 0.207*** \\
          &       &       & (0.027) & (0.033) & (0.033) & (0.034) \\
    Across (supply) &       &       & 0.147*** & 0.137*** & 0.186*** & -0.000 \\
          &       &       & (0.015) & (0.020) & (0.020) & (0.021) \\
    Between (supply) &       &       & -0.086*** & -0.106*** & -0.362*** & -0.071*** \\
          &       &       & (0.012) & (0.016) & (0.017) & (0.020) \\
    Size &       & -0.361*** &       & -0.257*** &       & -0.310*** \\
          &       & (0.025) &       & (0.021) &       & (0.024) \\
    Upstream centrality &       & -0.149*** &       & 0.031 &       & -0.176*** \\
          &       & (0.053) &       & (0.049) &       & (0.055) \\
    Downstream centrality &       & 2.660*** &       & 0.811*** &       & 2.891*** \\
          &       & (0.174) &       & (0.117) &       & (0.180) \\
    Durability &       & 0.216** &       & 0.265*** &       & 0.286*** \\
          &       & (0.102) &       & (0.096) &       & (0.098) \\
    Vertical position &       & 3.414*** &       & 0.368*** &       & 3.336*** \\
          &       & (0.086) &       & (0.076) &       & (0.106) \\
    Constant & -5.915 & -0.862 & -3.631 & -0.263 & -8.380 & -0.937 \\
    Obs  & 3800  & 2430  & 3819  & 2430  & 3800  & 2430 \\
    Adj $R^2$ & 0.359 & 0.625 & 0.037 & 0.175 & 0.488 & 0.639 \\
    \midrule
    \multicolumn{7}{c}{Panel B: Cash Flow Variance} \\
    \midrule
           & (1)     & (2)     & (3)     & (4)     & (5)     & (6) \\
    \midrule
    Self-origin (demand) & 0.357*** & 0.549*** &       &       & 0.544*** & 0.558*** \\
          & (0.091) & (0.090) &       &       & (0.091) & (0.091) \\
    Across (demand) & 0.112*** & 0.055** &       &       & -0.059** & 0.011 \\
          & (0.023) & (0.027) &       &       & (0.025) & (0.030) \\
    Between (demand) & 0.118*** & 0.066** &       &       & 0.373*** & 0.108*** \\
          & (0.023) & (0.030) &       &       & (0.027) & (0.040) \\
    Self-origin (supply) &       &       & 0.030 & -0.026 & -0.141* & 0.084* \\
          &       &       & (0.060) & (0.073) & (0.062) & (0.078) \\
    Across (supply) &       &       & 0.129*** & 0.199*** & 0.170*** & 0.091** \\
          &       &       & (0.029) & (0.039) & (0.031) & (0.043) \\
    Between (supply) &       &       & -0.074*** & -0.156*** & -0.347*** & -0.157*** \\
          &       &       & (0.023) & (0.032) & (0.028) & (0.041) \\
    Size &       & -0.313*** &       & -0.242*** &       & -0.273*** \\
          &       & (0.050) &       & (0.048) &       & (0.051) \\
    Upstream centrality &       & 0.185* &       & 0.327** &       & 0.090 \\
          &       & (0.135) &       & (0.132) &       & (0.138) \\
    Downstream centrality &       & 3.213*** &       & 0.791*** &       & 3.587*** \\
          &       & (0.406) &       & (0.300) &       & (0.413) \\
    Durability &       & 1.097*** &       & 1.104*** &       & 1.145*** \\
          &       & (0.171) &       & (0.171) &       & (0.170) \\
    Vertical position &       & 3.495*** &       & 0.226 &       & 3.102*** \\
          &       & (0.193) &       & (0.164) &       & (0.251) \\
    Constant & -10.060 & -1.041 & -5.694 & -0.256 & -12.580 & -1.162 \\
    Obs  & 3800  & 2430  & 3819  & 2430  & 3800  & 2430 \\
    Adj $R^2$    & 0.15  & 0.235 & 0.005 & 0.047 & 0.205 & 0.242 \\
    \bottomrule
    \end{tabular}%
    \fnote{\tiny \textit{Notes}: This table reports panel regressions of realized industry variance on a variety of characteristics including the average log variance of supply and demand shocks, log concentration across trade partners, log concentration between trade partners, log total employment (size), log centrality of the upstream and downstream propagation networks, durability of output, vertical position in the supply chain, and industry cluster and year fixed effects. I calculate network components as the average value of 1000 bootstrap samples that randomly drop 10\% of estimated pairwise non-zero correlations. In Panel A, the dependent variable is the log variance of annualized monthly returns on an equal-weighted industry portfolio. In Panel B, the dependent variable is the log variance of total quarterly year-on-year industry sales growth. I obtain return data from CRSP and sales data from Compustat. Concentration between and across trade partners are calculated as in \ref{eq:across empirical} and \ref{eq:between empirical}, where I calculate the variance-covariance matrix of supply and demand shocks using the product similarity distances from \citet{Hoberg2016textindustries}. Following \citet{Ahern2013NetworkCA}, I compute industry centrality as the eigenvector centrality of upstream and downstream propagation adjacency matrices. I obtain durability classifications from \citet{Gomes2009durability} and calculate vertical position of each industry as in \citet{Antrs2012} and \citet{GofmanSegalWu2020}. $^{***}$, $^{**}$ and $^{*}$ indicate significance at the 1\%, 5\%, and 10\% levels, respectively. Standard error are clustered at the BEA 15 major industry group level.  Sample is at an annual frequency from 1988 to 2018 for 479 BEA manufacturing industries.  }
  \label{tab:industry variance panel regressions (product similarity)}%
\end{table}%


\newpage

\section{Simulation Evidence}
\label{asec:simulation evidence}
Proposition \ref{prop:necessary restrictions on W} claims that there is no matrix $\mathbf{W}$ with entries $ w_{ij} \in (0,1)$ such that Assumption \ref{ass:idiosyncratic static shocks} and \eqref{eq:reduced form static propagation} are satisfied. This implies that $\Delta := ||(\mathbf{I}-\mathbf{W})\mathbf{\Sigma}_y(\mathbf{I}-\mathbf{W})- \mathbf{\Sigma}_u||$ should always be different from zero when $\mathbf{\Sigma}_u$ is diagonal. I test $H_0:\Delta = 0$ numerically as follows:

\begin{enumerate}
     \item Fix dimension $n$, number of iterations $S$, and consider the following constrained optimization problem: 
    \begin{align}
       f_{min}(\mathbf{\Sigma}_u) & = \min_{w_{ij,i\neq j } \in (0,1)} f(\mathbf{W}; \mathbf{\Sigma}_u) = \min_{w_{ij,i\neq j} \in (0,1)} \bigg|\bigg|\mathbf{\Sigma}_u - (\mathbf{I}_n - \mathbf{W})^{-1} \mathbf{\Sigma}_u(\mathbf{I}_n - \mathbf{W}^\top)^{-1}\bigg|\bigg|_F, 
       \label{eq:sim constrained min}
    \end{align}

    \item Draw two random samples $S_u^{d}$ and $S_u^*$ for the variance-covariance matrix of residuals $\mathbf{\Sigma}_u$ such that $S_u^d$ is diagonal and $S_u^*$ is not. I sample the non-diagonal matrix as follows:
    \begin{align*}
        S_u^*(s) & \sim_{iid} \mathcal{W}^{-1}(\mathbf{\Psi}_u,\nu),
    \end{align*} where  $\mathcal{W}^{-1}(\mathbf{\Psi},\nu)$ denotes an  Inverse-Wishart random variable with scale $\mathbf{\Psi}$ and degrees of freedom $\nu>n-1$. On the other hand, the diagonal matrix is given by:
    \begin{align*}
         S_u^d(s)  = \text{diag}(X_1,...,X_n), \quad X_i \sim_{iid} \Gamma^{-1}(\alpha_i,\beta_i),
    \end{align*}
    where $\Gamma^{-1}(\alpha_i,\beta_i)$ denotes an Inverse-Gamma distribution such that $\alpha_i > 1$ and $\mathbb{E}[X_i] = \frac{\beta_i}{\alpha_i-1} = [\Psi_{u}]_ii$ for all $i$. This ensures that the means of diagonal elements are the same across the two samples.
    
     \item Solve equation \eqref{eq:sim constrained min}  using off-the-shelf constrained quasi-Newton algorithms in both cases and construct $S$ realizations of $
        f_s^* = f_{min}(S_u^*)$ and $ f_s^{d} = f_{min}(S_u^d)$ for $s=1,...,S$.

    \item Construct the numerical p-value using $$ p = \frac{1}{S} \sum_{s=1}^S \mathbbm{1}\{f_{s}^* \geq f_s^{d}\},$$ and repeat Step 3 for different values of $n$ and $\mathbf{\Phi}_u$. Reported results in Table \ref{tab:simulation results pvals}.
    
    \item For optimal $W_{min}^{spec} = \argmin f(W,S_u^{spec})$, consider the $n\times n$ elementwise difference $\tilde{\Delta}_s$ for each iteration $s$:
    \begin{align*}
        \tilde{\Delta}_s = f(W_{min}^d,S_u^d) - f({W}_{min}^*,S_u^*)
    \end{align*}
    Then I test the marginal hypotheses $H_0: [\tilde{\Delta}_s]_{ij} = 0$ for all $i,j$. This test corresponds to the numerical t-statistic 
    $t_{ij} = m_{ij} / SE_{ij}$,    where $m_{ij}$ and $SE_{ij}$ are the mean and standard error of $[\tilde{\Delta}_s]_{ij} $, respectively. Reported results in Table \ref{tab:simulation results tstats}.

\end{enumerate}

\begin{table}[htbp]
  \centering
  \caption{\textbf{Simulation p-values}}
    \begin{tabular}{ccccc}
    \toprule
    \multicolumn{1}{l}{Specification} & n=2   & n=3   & n=4   & n=5 \\
    \midrule
    (1)   & 0.000 & 0.002 & 0.036 & 0.004 \\
    (2)   & 0.000 & 0.000 & 0.041 & 0.005 \\
    (3)   & 0.000 & 0.003 & 0.046 & 0.006 \\
    (4)   & 0.000 & 0.001 & 0.048 & 0.009 \\
    (5)   & 0.000 & 0.000 & 0.053 & 0.008 \\
    (6)   & 0.000 & 0.003 & 0.060 & 0.008 \\
    (7)   & 0.000 & 0.001 & 0.061 & 0.006 \\
    (8)   & 0.000 & 0.001 & 0.061 & 0.010 \\
    (9)   & 0.000 & 0.004 & 0.068 & 0.011 \\
    (10)  & 0.000 & 0.001 & 0.072 & 0.011 \\
    \bottomrule
    \end{tabular}%
    \fnote{\scriptsize \textit{Note:} this table reports numerical p-values from the procedure described in Appendix \ref{asec:simulation evidence} for different values of $n$ and $\mathbf{\Phi}$. Number of iterations is $S = 1000$.}
  \label{tab:simulation results pvals}%
\end{table}%

\begin{table}[htbp]
  \centering
  \caption{\textbf{Simulation t-statistics}}
    \begin{tabular}{cccc}
    \toprule
    n=2   & n=3   & n=4   & n=5 \\
    \midrule
    121   & -27   & -54   & -118 \\
    863   & 48    & 133   & 201 \\
    861   & 60    & 163   & 152 \\
    -504  & 48    & 2     & -502 \\
          & -68   & 133   & -152 \\
          & 80    & -247  & 201 \\
          & 60    & -26   & -583 \\
          & 80    & 190   & 172 \\
          & -91   & 163   & 205 \\
          &       & -26   & 61 \\
          &       & -255  & 152 \\
          &       & 204   & 172 \\
          &       & 2     & -177 \\
          &       & 190   & 174 \\
          &       & 204   & 138 \\
          &       & -60   & -502 \\
          &       &       & 205 \\
          &       &       & 174 \\
          &       &       & -89 \\
          &       &       & -334 \\
          &       &       & -152 \\
          &       &       & 61 \\
          &       &       & 138 \\
          &       &       & -334 \\
          &       &       & -75 \\
    \bottomrule
    \end{tabular}%
    \fnote{\scriptsize \textit{Note:} this table reports elementwise t-statistcs from the procedure described in Appendix \ref{asec:simulation evidence} for different values of $n$ and a fixed value $\mathbf{\Phi}$. Number of iterations is $S = 1000$.}
  \label{tab:simulation results tstats}%
\end{table}%



\section{Comovement in Industry Volatility}
Recent research documents significant common variation in both market and fundamental volatility at the granular level.\footnote{\citet{HKLNCIV2016} show that a single common factor explains around 30\% of variation in log variance for the panel of CRSP stocks.  Other work also documents common variation in option-implied volatilities (Engel and Figlewski (2015)), intra-daily returns (Barigozzi et al. (2014)), and dispersion in firm sales growth (Bloom et al. (2018)).} Importantly, the factor structure in volatility is significant even after removing all common variation in returns and cash flow growth, which suggests that this is being driven by underlying sources of systematic risk and rather than an omitted set of returns or sales growth factors.\footnote{\citet{HKLNCIV2016} verify that there is a factor structure even when the pairwise correlation between idiosyncratic return or sales growth residuals is statistically indistinguishable from zero. I also verify that the factor structure holds in residual returns after a non-parametric regression using deep feed-forward neural networks, which have favorable universal approximation properties (see e.g., \citet{WhiteUniv89}). } When economic units are connected via input-output networks, a shock to any given unit can generate systematic effects. Moreover, networks mechanically generate volatility comovement regardless of whether shocks to individual units are uncorrelated. When shocks are correlated, this comovement is even more pronounced.

My results so far establish a significant relationship between network concentration across and between customers and suppliers and realized variance. Consequently, comovement in supply chain concentration should also generate comovement in realized variance. Table \ref{tab:factor regressions industry variance} reports average loadings and $R^2$ values of univariate factor regressions for panel of industry network concentration and realized volatility measures. The main takeaway is that 20-30\% of dynamic variation in input-output concentration and return and sales growth volatilities across industries of these variables can be explained by a single factor. Moreover, there is a significant degree of comovement between these common factors. That is, network concentration factors can explain up to 40\% of time-series variation in both market and sales growth volatility factors. These results are robust to a variety of specifications.

\begin{table}[H]
  \centering
  \caption{\textbf{Comovement in Industry Variance}}
    \begin{tabular}{lcccccc}
    \toprule
    \multicolumn{7}{c}{Panel A: Loadings} \\
    \midrule
    \multicolumn{1}{c}{Factor / Outcome} & Across (d) & Between (d) & Across (s) & Between (s) & Var (mkt) & Var (cf) \\
    \midrule
    Across (d) & 0.788 & 1.576 & 0.364 & 0.507 & 0.410 & 0.574 \\
    Between (d) & 0.584 & 0.735 & 0.394 & 0.567 & 0.206 & 0.383 \\
    Across (s) & 0.701 & 1.185 & 0.700 & 0.518 & 0.156 & 0.497 \\
    Between (s) & 0.455 & 0.422 & 0.401 & 0.736 & 0.157 & 0.310 \\
    Var (mkt) & 0.091 & 0.124 & 0.016 & 0.035 & 0.880 & 0.185 \\
    Var (cf) & 0.109 & 0.181 & 0.156 & 0.218 & 0.283 & 0.721 \\
    \midrule
    \multicolumn{7}{c}{Panel B: $R^2$ (avg univariate)} \\
    \midrule
    \multicolumn{1}{c}{Factor / Outcome} & Across (d) & Between (d) & Across (s) & Between (s) & Var (mkt) & Var (cf) \\
    \midrule
    Across (d) & 0.201 & 0.258 & 0.217 & 0.318 & 0.170 & 0.079 \\
    Between (d) & 0.189 & 0.318 & 0.257 & 0.400 & 0.174 & 0.087 \\
    Across (s) & 0.164 & 0.274 & 0.229 & 0.344 & 0.168 & 0.082 \\
    Between (s) & 0.189 & 0.322 & 0.265 & 0.347 & 0.165 & 0.086 \\
    Var (mkt) & 0.042 & 0.046 & 0.041 & 0.042 & 0.343 & 0.068 \\
    Var (cf) & 0.063 & 0.097 & 0.085 & 0.115 & 0.147 & 0.104 \\
    \midrule
    \multicolumn{7}{c}{Panel C: $R^2$ (aggregate)} \\
    \midrule
    \multicolumn{1}{c}{Factor / Outcome} & Across (d) & Between (d) & Across (s) & Between (s) & Var (mkt) & Var (cf) \\
    \midrule
    Across (d) & 1     & 0.620 & 0.575 & 0.566 & 0.373 & 0.352 \\
    Between (d) &       & 1     & 0.628 & 0.670 & 0.255 & 0.422 \\
    Across (s) &       &       & 1     & 0.668 & 0.196 & 0.469 \\
    Between (s) &       &       &       & 1     & 0.229 & 0.428 \\
    Var (mkt) &       &       &       &       & 1     & 0.165 \\
    Var (cf) &       &       &       &       &       & 1 \\
    \bottomrule
    \end{tabular}%
  \label{tab:factor regressions industry variance}%
  \fnote{\scriptsize \textit{Notes}: This table reports the results of factor regressions for industry variance components. I calculate factors as the first principal component of an industry panel of the variable of interest. Panels A and B report the average loading and $R^2$ from the regressions $outcome_{it} = \alpha_i + \beta_i \cdot f_t + u_{it}$, respectively. Each column, row pair denotes a different outcome, factor pair. Panel C reports the $R^2$ of aggregate time-series regressions of factor pairs (i.e., $y_t = \alpha + \beta \cdot x_t + u_t$, where $y_t$ is the column factor and $x_t$ is the row factor). All variables are log-transformed. Sample is at an annual frequency from 1997 to 2019 for 66 non-government BEA industries.  }
\end{table}%

\newpage
\section{Additional Firm-Level Results}
\label{asec:additional firm results}

\begin{table}[H]
  \centering
  \caption{\textbf{Predicting Firm Sales Growth }}
    \begin{tabular}{lccc}
    \toprule
    \multicolumn{1}{c}{Variable} & (1)   & (2)   & (3) \\
    \midrule
    $a_{t-1}$  & 0.031** & 0.029** & 0.007** \\
          & (0.002) & (0.001) & (0.000) \\
    $g_{t-1}$ & -0.016** & -0.018** & 0.015 \\
          & (0.003) & (0.003) & (0.011) \\
    $ROA_{i,t-1}$ & 0.080** & 0.063** & 0.064** \\
          & (0.003) & (0.002) & (0.002) \\
    $size_{i,t-1}$ & -0.008** & -0.008** & -0.008** \\
          & (0.000) & (0.000) & (0.000) \\
    $age_i$  & 0.037** & 0.050** & 0.046** \\
          & (0.004) & (0.004) & (0.003) \\
    Constant & 0.44  & 0.372 & 0.39 \\
    Obs   & 259,976 & 259,976 & 259,976 \\
    Adj R2 & 0.265 & 0.259 & 0.258 \\
    \bottomrule
    \end{tabular}%
    \fnote{\scriptsize \textit{Notes}: This table reports the regression results based on the model in \eqref{eq: empirical sales growth panel regression}. The dependent variable is year-on-year quarterly sales growth and the covariates are aggregate TFP growth from \citet{Fernald2012TFP}, procurement proxy from \citet{BrigantiSellemiProcurement}, log return on assets, size (log market value), and age as year appears on the database. Each column reports results for different industry aggregations: (1) is 66 BEA non-government industries, (2) is 405 BEA non-government industries, and (3) is 15 major BEA non-government industries. Standard errors are clustered at the same granularity. $**$ and $*$ indicates significance at the 1\% and 5\% levels, resp.  Sample is quarterly between 1997-2019 for 10,700 firms.     }
  \label{tab:firm sales growth regression}%
\end{table}%

 \begin{figure}[H]
    \centering
     \subfloat[\centering $k_i$]{{\includegraphics[width=.3\textwidth]{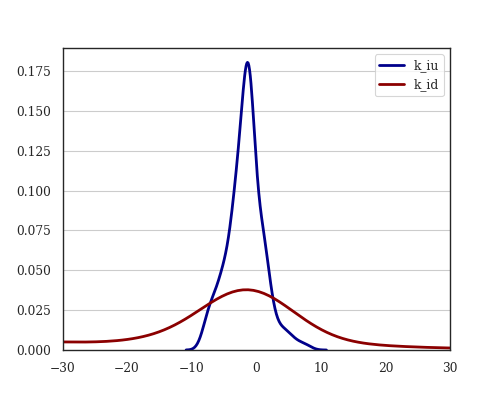} }}%
    \subfloat[\centering $x_i$]{{\includegraphics[width=.3\textwidth]{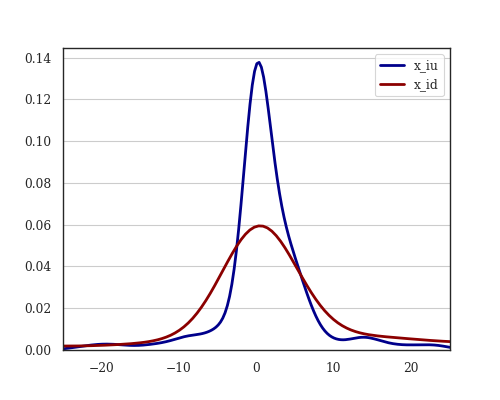} }}%
        \subfloat[\centering $\bar{p}_i$]{{\includegraphics[width=.3\textwidth]{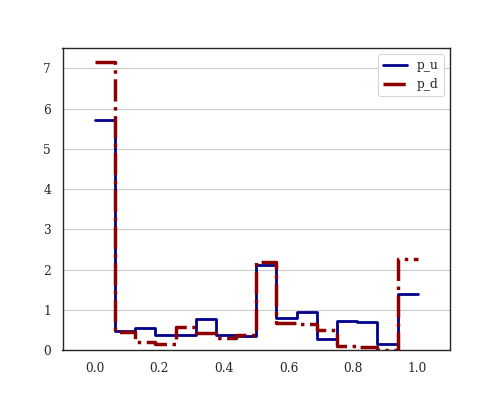} }}%
      \caption{\textbf{Distribution of Calibrated Propensities with No Network Connections}}
    \fnote{\scriptsize \textit{Notes:} This figure plots the kernel density of calibrated parameters $k_{iq}$ and $x_{iq}$ as described in Section \ref{sec:testable implications}. Note that $\bar{p}_{iq} = 1/(1+\exp(-k_{iq}x_{iq}))$. }
    \label{fig:calibrated model parameters}
\end{figure}

\newpage
\begin{figure}[H]
    \centering
    \subfloat[Panel A: Propagation Factors]{\includegraphics[width = \textwidth]{FIGURES2_dissertation/propFactorsTS.png}}\\
    \subfloat[Panel B: Innovations in Average Substitutability]{\includegraphics[width = \textwidth]{FIGURES2_dissertation/subFactorsTS.png}}\\
    \subfloat[Panel C: Productivity and Demand Growth]{\includegraphics[width = \textwidth]{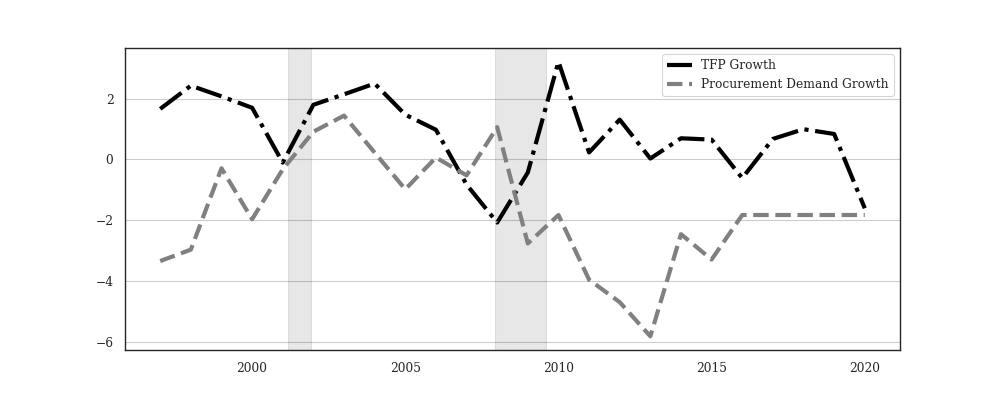}}
     \caption{\textbf{Network Propagation Risk Factors}}
    \label{fig:propagation factors full}
    \fnote{\scriptsize\textit{Notes:} This figure plots the time series of network propagation risk factors (Panel A), the cross-sectional average industry substitutability (Panel B), and aggregate TFP and procurement demand growth (Panel C). Shaded regions indicate NBER-dated recession periods.  }
\end{figure}

\begin{table}[H]
    \centering
    \caption{Descriptive Statistics Calibrated Firm Parameters}
    \label{tab:calibrated parameters}
\end{table}

\newpage
\begin{table}[H]
  \centering
  \caption{\textbf{One-Way Sorted Portfolios on Network Propagation Factors (controlling for volatility factors)}}
    \begin{tabular}{lcccccccc}
    \toprule
    \multicolumn{9}{c}{Panel A: One-way sorts on upstream propagation beta  (controlling for $a_t$, $g_t$, and $\sigma_t^{civ}$)} \\
    \midrule
          & 1 (Low) & 2     & 3     & 4     & 5 (High) & H-L   & t(H-L) & MR p-val \\
    \midrule
    $\mathbb{E}[r] - r_f$ & 12.97 & 12.12 & 11.34 & 10.44 & 10.11 & -2.86 & -5.83 & 0.22 \\
    $\alpha_{capm}$ & -0.04 & -0.21 & -0.21 & -0.28 & -0.68 & -0.64 & -4.06 & 0.00\\
    $\alpha_{ff3}$& -0.10  & -0.14 & -0.17 & -0.25 & -0.39 & -0.29 & -0.83 & 0.01 \\
    Volatility (\%) & 13.37 & 14.81 & 14.71 & 19.1  & 12.64 & -     & -     & - \\
    Book-to-market & 0.52  & 0.53  & 0.58  & 0.54  & 0.50   & -     & -     & - \\
    Market value (\$bn) & 12.18 & 6.21  & 18.07 & 6.14  & 15.75 & -     & -     & - \\
    \midrule
    \multicolumn{9}{c}{Panel B: One-way sorts on downstream propagation beta  (controlling for $a_t$, $g_t$, and $\sigma_t^{civ}$)} \\
    \midrule
          & 1 (Low) & 2     & 3     & 4     & 5 (High) & H-L   & t(H-L) & MR p-val \\
    \midrule
    $\mathbb{E}[r] - r_f$ & 14.97 & 12.94 & 12.15 & 9.78  & 9.35  & -5.61 & -9.36 & 0.09 \\
    $\alpha_{capm}$ & 0.03  & -0.24 & -0.29 & -0.32 & -0.39 & -0.43 & -1.88 & 0.00 \\
    $\alpha_{ff3}$ & -0.02 & -0.10  & -0.26 & -0.29 & -0.30  & -0.29 & -3.40  & 0.01 \\
    Volatility (\%) & 18.69 & 16.42 & 12.11 & 14.20  & 13.02 & -     & -     & - \\
    Book-to-market & 0.59  & 0.55  & 0.51  & 0.51  & 0.52  & -     & -     & - \\
   Market value (\$bn) & 5.11  & 12.13 & 8.96  & 16.43 & 15.64 & -     & -     & - \\
    \midrule
    \multicolumn{9}{c}{Panel C: One-way sorts on upstream propagation beta  (controlling for $a_t$, $g_t$, and $\sigma_t^{mkt}$)} \\
    \midrule
          & 1 (Low) & 2     & 3     & 4     & 5 (High) & H-L   & t(H-L) & MR p-val \\
    \midrule
    $\mathbb{E}[r] - r_f$ & 13.83 & 13.07 & 11.33 & 10.10  & 9.73  & -4.09 & -8.10 & 0.21 \\
     $\alpha_{capm}$ & 0.01  & -0.26 & -0.28 & -0.35 & -0.45 & -0.47 & -6.84 & 0.04 \\
     $\alpha_{ff3}$ & -0.07 & -0.17 & -0.20  & -0.29 & -0.30  & -0.23 & -2.06 & 0.03 \\
    Volatility (\%) & 13.68 & 16.44 & 17.09 & 14.12 & 13.03 & -     & -     & - \\
    Book-to-market & 0.51  & 0.54  & 0.55  & 0.54  & 0.53  & -     & -     & - \\
    Market value (\$bn) & 11.43 & 5.87  & 7.24  & 19.44 & 14.40  & -     & -     & - \\
    \midrule
    \multicolumn{9}{c}{Panel D: One-way sorts on downstream propagation beta (controlling for $a_t$, $g_t$, and $\sigma_t^{mkt}$)} \\
    \midrule
          & 1 (Low) & 2     & 3     & 4     & 5 (High) & H-L   & t(H-L) & MR p-val \\
    \midrule
    $\mathbb{E}[r] - r_f$ & 13.66 & 13.5  & 12.67 & 9.77  & 9.24  & -4.42 & -7.17 & 0.37 \\
     $\alpha_{capm}$ & 0.07  & -0.26 & -0.32 & -0.39 & -0.41 & -0.48 & -1.81 & 0.00 \\
     $\alpha_{capm}$ & 0.02  & -0.17 & -0.24 & -0.25 & -0.33 & -0.35 & -1.52 & 0.01 \\
    Volatility (\%) & 19.47 & 12.94 & 15.7  & 14.27 & 13.03 & -     & -     & - \\
    Book-to-market & 0.63  & 0.54  & 0.44  & 0.53  & 0.52  & -     & -     & - \\
    Market value (\$bn) & 4.60   & 10.24 & 13.33 & 15.77 & 14.30  & -     & -     & - \\
    \bottomrule
    \end{tabular}%
    \fnote{\scriptsize \textit{Notes:} This table reports average excess returns and post-sample alphas in annual percentages for value-weighted portfolios sorted into quintiles on annual upstream and downstream propagation factors. Sample is between 1997-2021 for more than 10,000 stocks belonging to the BEA 66 non-government industry classifications. Panels A and B control for productivity growth and federal procurement demand growth, while Panels C and D have no controls. I also report average return volatility, book-to-market ratio and market value for each portfolio. To test for significant return spreads,  I report $t$-statistics for the null hypothesis $H_0: xr_{5} = xr_{1}$, where $xr_q$ is the average return of the $q^{th}$ quintile single sorted portfolio. Moreover, I report p-values for the test $H_0: xr_{q+1} < xr_q  \forall q \leq 4$, calculated via bootstrap following \citet{timmermannMRtest}.  }
 \label{tab:one way sort controlling for vol}%
\end{table}%

\end{appendices}
\end{document}